\newcommand\V[1]{\mathrm{\bf{#1}}}
\newcommand\Tx[1]{\mathrm{#1}}
\newcommand\Se[1]{\mathcal{#1}}
\newcommand\Db[1]{\mathbb{#1}}
\newcommand\Sma[2]{\sum\limits_{#1}^{#2}}
\newcommand\Prd[2]{\prod\limits_{#1}^{#2}}
\newcommand\Nm[1]{\lvert #1\rvert}
\newcommand\Floor[1]{\lfloor #1\rfloor}
\newcommand\MB[1]{\left[#1\right]}
\newcommand\LB[1]{\{#1\}}
\newcommand\SB[1]{\left(#1\right)}
\newcommand{\RN}[1]{\textup{\uppercase\expandafter{\romannumeral#1}}}
\newtheorem{theo}{Theorem}
\newtheorem{lemma}{Lemma}
\newtheorem{exam}{Example}
\newtheorem{Defi}{Definition}
\newtheorem{rem}{Remark}
\def\BState{\State\hskip-\ALG@thistlm}
\def\old@comma{,}
    \old@comma\discretionary{}{}{}%
\begin{document}
\title{Theoretical Bounds and Constructions of Codes in the Generalized Cayley Metric}

\author{Siyi~Yang,~\IEEEmembership{Student Member,~IEEE,}
        Clayton~Schoeny,~\IEEEmembership{Student Member,~IEEE,}
        and~Lara~Dolecek,~\IEEEmembership{Senior Member,~IEEE}
\thanks{This paper was presented in part at the IEEE Information Theory Workshop, Kaohsiung, Taiwan, November, 2017\cite{8277943}.}
\thanks{Siyi Yang is with the Department of Electrical and Computer Engineering, University of California, Los Angeles, CA 90095 USA (email: siyiyang@ucla.edu).} 
\thanks{Clayton Schoeny is with the Department of Electrical and Computer Engineering, University of California, Los Angeles, CA 90095 USA (email: cschoeny@ucla.edu).}      
\thanks{Lara Dolecek is with the Department of Electrical and Computer Engineering, University of California, Los Angeles, CA 90095 USA (email: dolecek@ee.ucla.edu).} 
\thanks{Copyright (c) 2017 IEEE. Personal use of this material is permitted. However, permission to use this material for any other purposes must be obtained from the IEEE by sending a request to pubs-permissions@ieee.org.}
}
\markboth{}%
{Shell \MakeLowercase{\textit{et al.}}: Bare Demo of IEEEtran.cls for IEEE Journals}

\maketitle

\begin{abstract}
Permutation codes have recently garnered substantial research interest due to their potential in various applications, including cloud storage systems, genome resequencing, and flash memories. In this paper, we study the theoretical bounds and constructions of permutation codes in the generalized Cayley metric. The generalized Cayley metric captures the number of generalized transposition errors in a permutation, and subsumes previously studied error types, including transpositions and translocations, without imposing restrictions on the lengths and positions of the translocated segments. Based on the so-called \emph{breakpoint analysis} method proposed by Chee and Vu, we first present a coding framework that leads to order-optimal constructions, thus improving upon the existing constructions that are not order-optimal. We then use this framework to also develop an order-optimal coding scheme that is additionally explicit and systematic.

\end{abstract}

\begin{IEEEkeywords}
Permutation codes, systematic permutation codes, generalized Cayley distance, block permutation distance, order-optimality.
\end{IEEEkeywords}

\IEEEpeerreviewmaketitle

\section{Introduction}
\label{sec: introduction}
\IEEEPARstart{G}{eneralized} transposition errors are encountered in various applications, including cloud storage systems, genome resequencing, and flash memories. Cloud storage applications such as Dropbox, OneDrive, iTunes, Google play, etc., are becoming increasingly popular, since they help manage and synchronize data stored across different devices\cite{Music}. When items to be synchronized across are ordered, e.g., in a play list, changes on one device can be viewed as transpositions in the permutation on the other device. In DNA resequencing, released genomes consist of collections of unassembled contigs (a contig is an ordered list of genes in the corresponding genome \cite{SortingCom}), whose organizations evolve over time by undergoing rearrangement operations. Gene order in a chromosome is subject to rearrangements including reversals, transpositions, translocations, block-interchanges, etc. \cite{SortingCom,CHEN2016296}. Generalized transpositions are also encountered in flash memories that utilize rank modulation, a representation in which cells store relative ranks of their charge levels as a permutation. Charge leakage across cells can then be viewed as a sequence of transpositions in the stored permutation. Errors encountered in the applications described above can be appropriately modeled by the generalized Cayley metric for permutation codes, introduced by Chee and Vu, that captures the number of generalized transpositions between two permutations\cite{6875376}.

Permutation codes in the  Kendall-$\tau$ metric and the Ulam metric, along with codes in the Levenshtein metric have been recently actively studied, in \cite{7435295,7089263,7332933}, \cite{6804937,6410424,7282751}, and \cite{7302598,7282754}, respectively. Generalized transposition errors subsume transpositions and translocations that the Kendall-$\tau$ metric and Ulam metric capture, and in particular no restrictions are imposed on the positions and lengths of the translocated segments as in these two metrics. Codes in the generalized Cayley metric were first studied in \cite{6875376} using the \emph{breakpoint analysis}, wherein a coding scheme is constructed based on permutation codes, previously introduced in \cite{6410424}, in the Ulam metric. Let $N$ be the length of the codewords, and $t$ be the maximum number of errors in the generalized Cayley metric. While the coding scheme proposed in \cite{6875376} is explicitly constructive and implementable, the interleaving technique used inevitably incurs a noticeable redundancy of $\Theta\SB{N}$, without even considering the number of errors that the code is able to correct. As we show later, the best possible redundancy of a length-$N$ code that corrects $t$ generalized transposition errors is $\Theta\SB{t\log N}$. When $t$ is $o(\frac{N}{\log N})$, the gap between the redundancy of the existing codes based on interleaving and the optimal redundancy increases with $N$, thus motivating the need to introduce other techniques that are not based on interleaving. We say a length-$N$ code that corrects $t$ generalized transposition errors is order-optimal if the redundancy is $\Theta\SB{t\log N}$.

In order to obtain codes in the generalized Cayley metric that are order-optimal, we present a coding method that is not based on interleaving. The main idea of our coding scheme is to map each permutation of $\LB{1,2,\cdots,N}$ to a unique characteristic set in the Galois field $\Db{F}_q$, where $q$ is a prime number such that $N^2-N<q<2N^2-2N$ and $N$ is the codelength. We prove that the knowledge of the boundaries of the unaltered segments is sufficient for recovering the permutation from its modified version, obtained through generalized transpositions. We exploit the fact that the symmetric difference of the characteristic sets of two distinct permutations corresponds to these boundaries. Given that the number of such boundaries is linearly upper bounded by the number of generalized transpositions, it is sufficient to find permutations with corresponding characteristic sets on $\Db{F}_q$ that have large enough set differences to ensure the desired error correction property. Our proposed method provides a sufficient condition for ensuring the lower bound on the cardinalities of these set differences, which in turn ensures a large enough minimum distance of the resulting code, while the code is order-optimal. Using this approach, we further develop a systematic scheme that is also order-optimal. 

The rest of this paper is organized as follows. In \Cref{sec: measure of distance}, we introduce the basic notation and properties for the generalized Cayley metric and the so-called block permutation metric, which is introduced  for metric embedding. In \Cref{sec: theoboundsrate}, we define the notion of error-correcting codes in these two metrics and derive useful upper and lower bounds on their optimal rates. We prove the optimal rate to be $1-\Theta\SB{\frac{t}{N}}$, and use these results to guide the construction of order-optimal codes. In \Cref{sec: non-syscodes}, we present a method for constructing permutation codes in the generalized Cayley metric. We assign to each permutation of length $N$ a syndrome with elements chosen from a Galois field $\Db{F}_q$, where $q$ is a prime number such that $N^2-N<q<2(N^2-N)$. We prove that the permutations with the same syndrome constitute a codebook, and we prove that the largest one is order-optimal. Based on this method, we then develop a construction for order-optimal systematic permutation codes in the generalized Cayley metric in \Cref{sec: syscodes}. In \Cref{sec: comparison}, we prove that the rates of our proposed codes are higher than those of existing codes based on interleaving, namely, our coding scheme is more rate efficient when $N$ is sufficiently large and $t=o\SB{\frac{N}{\log N}}$. Lastly, we conclude and summarize our main contributions in \Cref{sec: conclusion}. 

\section{Measure of Distance}
\label{sec: measure of distance}

\subsection{Notation}
\label{subsec: notation}
In this paper, we denote by $\left[N\right]$ the set $\{1,2,\cdots,N\}$. We let $\Db{S}_N$ represent the set of all permutations on $\left[N\right]$, where each permutation $\sigma: \left[N\right]\to\left[N\right]$ is a bijection between $\left[N\right]$ and itself. The symbol $\comp$ denotes the composition of functions. Specifically, $\sigma\comp\pi$ denotes the composition of two permutations $\sigma$, $\pi\in\Db{S}_N$, i.e., $\left(\sigma\comp\pi\right)(i)=\sigma\left(\pi\left(i\right)\right)$, $\forall\ i\in\left[N\right]$. We assign a vector $\left(\sigma(1),\sigma(2),\cdots,\sigma(N)\right)$ to each permutation\footnote{We note that this is different from the cycle notation typically used in algebra.} $\sigma\in\Db{S}_N$. Under this notation, we call $e=(1,2,\cdots,N)$ the \emph{identity permutation}. Additionally, $\sigma^{-1}$ is the inverse permutation of $\sigma$. The subsequence of $\sigma$ from position $i$ to $j$, $i\leq j$, is written as $\sigma\left[i;j\right]\triangleq\left(\sigma(i),\sigma(i+1),\cdots,\sigma(j)\right)$. The symbol $\Delta$ refers to the symmetric difference of two sets. Let $\Tx{GCD}\SB{\cdot}$ and $\Tx{LCM}\SB{\cdot}$ be the greatest common divisor and the least common multiple, respectively. The symbol $\equiv$ denotes `congruent modulo'.

\subsection{Generalized Cayley Distance}
\label{subsec: GCD}
A \textbf{\textit{generalized transposition}} $\phi\left(i_1,j_1,i_2,j_2\right)\in\Db{S}_N$, where $i_1\leq j_1<i_2\leq j_2\in \left[ N\right]$, refers to a permutation that is obtained from swapping two segments, $e\left[i_1,j_1\right]$ and $e\left[i_2,j_2\right]$, of the identity permutation \cite{6875376},
\begin{equation}
  \begin{split}
    &\phi\left(i_1,j_1,i_2,j_2\right)\triangleq \left(1,\cdots,i_1-1,i_2,\cdots,j_2,\right.\\
    &\left. j_1+1,\cdots,i_2-1,i_1,\cdots,j_1,j_2+1,\cdots,N\right).
  \end{split}
  \label{eqn: swap}
\end{equation}

Denote the set of all permutations that represent one generalized transposition on any permutation of length $N$ by $\Db{T}_N$. For a given $\pi\in\Db{S}_N$ and $\phi\left(i_1,j_1,i_2,j_2\right)\in\Db{T}_N$, the permutation obtained from swapping the segments $\pi\left[i_1;j_1\right]$ and $\pi\left[i_2;j_2\right]$ is exactly $\pi\comp\phi$, i.e., the permutation,
\begin{equation}
  \begin{split}
    &\left(\pi(1),\cdots,\pi(i_1-1),\pi(i_2),\cdots,\pi(j_2),\pi(j_1+1),\right.\\
    &\left.\cdots,\pi(i_2-1),\pi(i_1),\cdots,\pi(j_1),\pi(j_2+1),\cdots,\pi(N)\right).
  \end{split}
\end{equation}

\begin{exam} \label{exam1}
Let $\pi=\left(3,5,6,7,9,8,1,2,10,4\right)\in\Db{S}_{10}$. Let the underlines mark the subsequences that are swapped by $\phi(2,5,7,8)=\left(1,\underline{7,8},6,\underline{2,3,4,5},9,10\right)$. Then, for $\pi=\left(3,\underline{5,6,7,9},8,\underline{1,2},10,4\right)$, we have:
\begin{align*}
\pi\comp\left(\phi(2,5,7,8)\right)&=\left(3,\underline{1,2},8,\underline{5,6,7,9},10,4\right).
\end{align*}
\end{exam}

\begin{Defi}\label{defi1} \emph{(Generalized Cayley Distance, cf.\cite{6875376})}
The \textbf{generalized Cayley distance} $d_{G}(\pi_1,\pi_2)$ is defined as the minimum number of generalized transpositions that are needed to obtain the permutation $\pi_2$ from $\pi_1$, i.e.,
\begin{equation}
  \begin{split}
    d_{G}(\pi_1,\pi_2)&\triangleq\min\limits_{d}\{\exists\ \phi_1,\phi_2,\cdots,\phi_d\in \Db{T}_N,\text{ s.t., }\\
    &\pi_2=\pi_1\comp\phi_1\comp\phi_2\cdots\comp\phi_d\}.
  \end{split}
  \label{eqn: swapdis}
\end{equation}
\end{Defi}

\begin{rem}\label{rem1}\emph{(cf. \cite{6875376})}. For all $\pi_1,\pi_2,\pi_3\in\Db{S}_N$, the generalized Cayley distance $d_G$ satisfies the following properties: 

\begin{compactenum}
\item (Symmetry) $d_G(\pi_2,\pi_1)=d_G(\pi_1,\pi_2)$.
\item (Left-invariance) $d_G(\pi_3\comp\pi_1,\pi_3\comp\pi_2)=d_G(\pi_1,\pi_2)$.
\item (Triangle Inequality) $d_G(\pi_1,\pi_3)\leq d_G(\pi_1,\pi_2)+d_G(\pi_2,\pi_3)$.
\end{compactenum}

\end{rem}

Notice that the generalized Cayley distance $d_G$ between two permutations is hard to compute, which makes it difficult to construct codes in the generalized Cayley metric. The common method to address the difficulty of specifying the distances between permutations is metric embedding, where one finds another metric that is computable and is of the same order of magnitude as the original metric. We therefore seek to construct codes under the new metric, the so-called \emph{block permutation distance} to be introduced next, and use this construction to specify codes under $d_G$.

\subsection{Block Permutation Distance}
\label{subsec: blockpermdist}
We say a permutation $\pi\in\Db{S}_N$ is \textbf{\emph{minimal}}\footnote{We note that this is different from the usual notion of minimal permutation specified in group theory.} if and only if no consecutive elements in $\pi$ are also consecutive elements in the identity permutation $e$, i.e.,
\begin{equation}
\forall\ 1\leq i<N,\ \pi(i+1)\neq \pi(i)+1.
\end{equation}

The set of all minimal permutations of length $N$ is denoted by $\Db{D}_N$. Next, we define the \emph{block permutation distance} as follows.

\begin{Defi}\label{defi2}
  The \textbf{block permutation distance} $d_B\left(\pi_1,\pi_2\right)$ between two permutations $\pi_1,\pi_2\in\Db{S}_N$ is equal to $d$ if
  \begin{equation}
  \begin{split}
  \pi_1&=\left(\psi_1,\psi_2,\cdots,\psi_{d+1}\right),\\
  \pi_2&=\left(\psi_{\sigma(1)},\psi_{\sigma(2)},\cdots,\psi_{\sigma(d+1)}\right),
  \end{split}
  \label{eqn:blkdis}
  \end{equation}
where $\sigma\in \Db{D}_{d+1}$, $\psi_k=\pi_1\left[i_{k-1}+1:i_k\right]$ for some $0=i_0<i_1\cdots<i_d<i_{d+1}=N$, and $1\leq k\leq d+1$.
\end{Defi}

Note that the block permutation distance between permutations $\pi_1$ and $\pi_2$ is $d$ if and only if $(d+1)$ is the minimum number of blocks the permutation $\pi_1$ needs to be divided into in order to obtain $\pi_2$ through a block-level permutation. Here by block-level permutation we refer to partitioning the original permutation $\pi_1$ into multiple blocks and permuting these blocks.

\begin{exam}\label{examadd} Let $\pi_1=\left(3,5,6,7,9,8,1,2,10,4\right)$, $\pi_2=\left(3,1,2,8,5,6,7,9,10,4\right)$. Define $\psi_i$, $1\leq i\leq 4$, and $\sigma$ as follows,
\begin{align*}
&\psi_1=(3),\psi_2=(5,6,7,9),\psi_3=(8),\psi_4=(1,2),\\
&\psi_5=(10,4),\sigma=(1,4,3,2,5).
\end{align*}
Then,
 \begin{equation}
  \begin{split}
  \pi_1&=\left(\psi_1,\psi_2,\psi_3,\psi_4,\psi_5\right),\\
  \pi_2&=\left(\psi_{\sigma(1)},\psi_{\sigma(2)},\psi_{\sigma(3)},\psi_{\sigma(4)},\psi_{\sigma(5)}\right),
  \end{split}
  \end{equation}
and thus, $d_B(\pi_1,\pi_2)=4$, since $\sigma$ is minimal. This example is in accordance with \Cref{defi2}.

\end{exam}

\begin{lemma}\label{rem2} The block permutation distance $d_B$ also satisfies the properties of symmetry and left-invariance, which are defined in \emph{Remark 1}.
\end{lemma}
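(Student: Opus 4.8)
The plan is to establish the two properties directly from \Cref{defi2}, reducing each to the corresponding property of the minimal permutation $\sigma\in\Db{D}_{d+1}$ that witnesses the block decomposition. For \textbf{symmetry}, suppose $d_B(\pi_1,\pi_2)=d$, so that $\pi_1=(\psi_1,\dots,\psi_{d+1})$ and $\pi_2=(\psi_{\sigma(1)},\dots,\psi_{\sigma(d+1)})$ for some $\sigma\in\Db{D}_{d+1}$ and some blocks $\psi_k=\pi_1[i_{k-1}+1:i_k]$. The idea is to exhibit a block decomposition of $\pi_2$ that realizes $\pi_1$ through $\sigma^{-1}$: writing $\psi'_m=\psi_{\sigma(m)}$, we have $\pi_2=(\psi'_1,\dots,\psi'_{d+1})$ and $\pi_1=(\psi'_{\sigma^{-1}(1)},\dots,\psi'_{\sigma^{-1}(d+1)})$. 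I would then need to check that $\sigma^{-1}\in\Db{D}_{d+1}$, i.e., that the inverse of a minimal permutation is minimal; this follows because $\sigma(i+1)\neq\sigma(i)+1$ for all $i$ is equivalent to $\sigma^{-1}(j+1)\neq\sigma^{-1}(j)+1$ for all $j$ (both say that $j$ and $j+1$ never occupy adjacent positions, in the appropriate order, under $\sigma$). Hence $d_B(\pi_2,\pi_1)\leq d$; the reverse inequality is the same argument with the roles of $\pi_1,\pi_2$ swapped, giving equality. One subtlety to address is the minimality of the witnessing $\sigma$: \Cref{defi2} forces $\sigma$ to be minimal precisely because $d$ is taken as small as possible, so merging of adjacent blocks is already accounted for, and the decomposition transported to $\pi_2$ inherits this.

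For \textbf{left-invariance}, fix $\pi_3\in\Db{S}_N$ and let $d_B(\pi_1,\pi_2)=d$ with the decomposition above. The key observation is that composing on the left by $\pi_3$ acts entrywise: $(\pi_3\comp\pi_1)(\ell)=\pi_3(\pi_1(\ell))$, so applying $\pi_3$ to every entry of $\pi_1$ sends the block $\psi_k=(\pi_1(i_{k-1}+1),\dots,\pi_1(i_k))$ to the block $\tilde\psi_k=(\pi_3(\pi_1(i_{k-1}+1)),\dots,\pi_3(\pi_1(i_k)))$, which is exactly the segment $(\pi_3\comp\pi_1)[i_{k-1}+1:i_k]$. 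Since block-level rearrangement commutes with an entrywise relabeling, we get $\pi_3\comp\pi_1=(\tilde\psi_1,\dots,\tilde\psi_{d+1})$ and $\pi_3\comp\pi_2=(\tilde\psi_{\sigma(1)},\dots,\tilde\psi_{\sigma(d+1)})$ with the \emph{same} $\sigma\in\Db{D}_{d+1}$ and the same cut points $i_0<\dots<i_{d+1}$. This shows $d_B(\pi_3\comp\pi_1,\pi_3\comp\pi_2)\leq d$. For the reverse inequality, apply the same reasoning to $\pi_3^{-1}$ composed with $\pi_3\comp\pi_1$ and $\pi_3\comp\pi_2$, which recovers $\pi_1$ and $\pi_2$; hence $d_B(\pi_1,\pi_2)\leq d_B(\pi_3\comp\pi_1,\pi_3\comp\pi_2)$, and the two are equal.

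The main obstacle is not the algebra of transporting decompositions — that is essentially bookkeeping — but rather making sure the \emph{minimality} constraint on $\sigma$ is handled cleanly, since $d_B$ is defined via the smallest $d$ admitting a minimal witness. Concretely, I must confirm that the transported decomposition (under $\sigma^{-1}$ for symmetry, or under the relabeled blocks $\tilde\psi_k$ for left-invariance) still has a minimal permutation as its witness and still uses the minimum possible number of blocks, so that the value $d$ is genuinely preserved and not merely upper-bounded. For symmetry this reduces to the lemma that $\Db{D}_{d+1}$ is closed under taking inverses; for left-invariance the witness $\sigma$ is literally unchanged, so minimality is immediate, and the optimality of $d$ transfers by the symmetric $\pi_3^{-1}$ argument. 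I would state the inverse-closure of minimality as a short separate claim and prove it in one line before assembling the two parts.
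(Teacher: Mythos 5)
Your proposal is correct and follows essentially the same route as the paper's proof: transport the block decomposition via $\sigma^{-1}$ for symmetry and relabel the blocks entrywise by $\pi_3$ (keeping the same witness $\sigma$) for left-invariance. Your extra care in verifying that $\Db{D}_{d+1}$ is closed under inverses is a detail the paper asserts without justification, and your one-line argument for it is valid.
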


\begin{proof} We suppose $\pi_1,\pi_2\in\Db{S}_N$ such that $d_B(\pi_1,\pi_2)=d$. Then, there exist $\sigma\in\Db{S}_{d+1}$, and $\psi_1,\psi_2,\cdots,\psi_{d+1}$ such that $\pi_1=\left(\psi_1,\psi_2,\cdots,\psi_{d+1}\right)$ and $\pi_2=\left(\psi_{\sigma(1)},\psi_{\sigma(2)},\cdots,\psi_{\sigma(d+1)}\right)$.

To prove the symmetry property, we define $\psi'_i=\psi_{\sigma(i)}$ for $1\leq i\leq d+1$, and $\sigma{'}=\sigma^{-1}$. Then, $\sigma'\in\Db{D}_{d+1}$, and
 \begin{equation*}
  \begin{split}
  \pi_2&=\left(\psi'_1,\psi'_2,\cdots,\psi'_{d+1}\right),\\
  \pi_1&=\left(\psi'_{\sigma'(1)},\psi'_{\sigma'(2)},\cdots,\psi'_{\sigma'(d+1)}\right),
  \end{split}
  \end{equation*}
thus, $d_B(\pi_2,\pi_1)=d=d_B(\pi_1,\pi_2)$.

To prove the left-invariance property, suppose the length of $\psi_i$ is $l_i$ and let $\psi_i=\left(\psi_i(1),\psi_i(2),\cdots,\psi_i(l_i)\right)$ for all $1\leq i\leq d+1$. For a given $\pi_3\in\Db{S}_N$, we define $\tilde{\psi}_i=\left(\pi_3\left(\psi_i(1)\right),\pi_3\left(\psi_i(2)\right),\cdots,\pi_3\left(\psi_i(l_i)\right)\right)$, for $1\leq i\leq d+1$. Then,
 \begin{equation*}
  \begin{split}
  \pi_3\comp\pi_1&=\left(\tilde{\psi}_1,\tilde{\psi}_2,\cdots,\tilde{\psi}_{d+1}\right),\\
  \pi_3\comp\pi_2&=\left(\tilde{\psi}_{\sigma(1)},\tilde{\psi}_{\sigma(2)},\cdots,\tilde{\psi}_{\sigma(d+1)}\right).
  \end{split}
  \end{equation*}
Therefore, $d_B(\pi_3\comp\pi_1,\pi_3\comp\pi_2)=d=d_B(\pi_1,\pi_2)$.
\end{proof}

Note that \Cref{defi2} is an implicit representation of $d_B$. Next, we seek to characterize $d_B$ explicitly.

\begin{Defi}\label{defi3}
The \textbf{characteristic set} $A(\pi)$ for any $\pi\in\Db{S}_N$ is defined as the set of all consecutive pairs in $\pi$, i.e.,
\begin{equation}
A(\pi)\triangleq\{\left(\pi(i),\pi(i+1)\right)|1\leq i<N\}.
\label{eqn:charset}
\end{equation}

\end{Defi}
Recall that $e$ refers to the identity permutation.

\begin{Defi}\label{defi4}
The \textbf{block permutation weight} $w_B\left(\pi\right)$ is defined as the number of consecutive pairs in $\pi$ that do not belong to $A(e)$ ($w_B$ is exactly the number of so-called  breakpoints in \cite{6875376}), i.e.,
\begin{equation}
    w_B\left(\pi\right)\triangleq\lvert A(\pi)\setminus A(e)\rvert.
    \label{eqn:blkweight}
\end{equation}
 
\end{Defi}

\Cref{lemma1} and \Cref{rem3} state explicit representations of the block permutation distance $d_B$ by the characteristic set and the block permutation weight, respectively, and will be used later in the paper to establish our main result.

\begin{lemma}\label{lemma1}
For all $\pi_1,\pi_2\in\Db{S}_N$, 
\begin{equation}
d_B(\pi_1,\pi_2)=\lvert A(\pi_2)\setminus A(\pi_1)\rvert=\lvert A(\pi_1)\setminus A(\pi_2)\rvert.
\label{eqn:charsetdis}
\end{equation}
\end{lemma}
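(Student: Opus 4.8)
The plan is to establish the chain of equalities in \Cref{lemma1} by showing two things: first, that $\lvert A(\pi_2)\setminus A(\pi_1)\rvert=\lvert A(\pi_1)\setminus A(\pi_2)\rvert$, and second, that this common quantity equals $d_B(\pi_1,\pi_2)$. The first equality is the easy half. By left-invariance (\Cref{rem2}) and the analogous invariance of characteristic sets under relabeling — composing on the left by $\pi_3$ sends the pair $(\pi(i),\pi(i+1))$ to $(\pi_3(\pi(i)),\pi_3(\pi(i+1)))$, a bijection on pairs — it suffices to handle the case $\pi_1=e$, i.e. to show $\lvert A(\pi_2)\setminus A(e)\rvert = \lvert A(e)\setminus A(\pi_2)\rvert$. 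This follows from a counting argument: $\lvert A(\pi)\rvert=N-1$ for every $\pi\in\Db{S}_N$, so $\lvert A(\pi_2)\setminus A(e)\rvert = (N-1) - \lvert A(\pi_2)\cap A(e)\rvert = \lvert A(e)\setminus A(\pi_2)\rvert$. More generally, applying left-invariance with $\pi_3=\pi_1^{-1}$ reduces the symmetric statement to $\pi_1=e$ as well, so both equalities in \eqref{eqn:charsetdis} reduce to the single claim $d_B(e,\pi_2)=w_B(\pi_2)=\lvert A(\pi_2)\setminus A(e)\rvert$.

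For that claim I would argue both inequalities. For $d_B(e,\pi_2)\le \lvert A(\pi_2)\setminus A(e)\rvert$: if $w_B(\pi_2)=w$, then the $w$ positions $i$ where $(\pi_2(i),\pi_2(i+1))\notin A(e)$ cut $\pi_2$ into at most $w+1$ maximal runs, each of which is a segment of consecutive integers, hence a contiguous block of $e$; permuting these $w+1$ blocks recovers $e$, and the block permutation $\sigma$ realizing this is minimal precisely because we took \emph{maximal} runs (adjacent blocks are never consecutive in $e$). This gives a valid decomposition in the sense of \Cref{defi2}, so $d_B(e,\pi_2)\le w$. For the reverse inequality $d_B(e,\pi_2)\ge w_B(\pi_2)$: take any decomposition of $\pi_2$ into $d+1$ blocks $\psi_{\sigma(1)},\dots,\psi_{\sigma(d+1)}$ with $\sigma\in\Db{D}_{d+1}$ realizing $d_B(e,\pi_2)=d$; every consecutive pair of $\pi_2$ lying strictly inside some block $\psi_{\sigma(k)}$ also appears as a consecutive pair in $e$ (since, written in order, $e$ is cut into the same blocks $\psi_1,\dots,\psi_{d+1}$ and each $\psi_k$ is an interval of $e$); hence every pair in $A(\pi_2)\setminus A(e)$ must straddle a block boundary of $\pi_2$, and there are only $d$ such boundaries, so $w_B(\pi_2)\le d$.

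The main obstacle — and the only place requiring genuine care — is the reverse inequality, specifically verifying that a pair straddling a block boundary really can fail to lie in $A(e)$ only when minimality forces it to, and conversely that no pair \emph{interior} to a block is a breakpoint. The subtle point is that minimality of $\sigma$ is exactly the condition preventing a "fake" boundary (one where $\psi_{\sigma(k)}$ ends with $j$ and $\psi_{\sigma(k+1)}$ begins with $j+1$), so the bound $w_B(\pi_2)\le d$ is tight only for minimal $\sigma$; this is why \Cref{defi2} insists $\sigma\in\Db{D}_{d+1}$ and why the minimum in $d_B$ is achieved. I would state this as a short lemma ("in a minimal block decomposition, the block boundaries of $\pi_2$ are exactly the breakpoints") and then the two inequalities combine to give $d_B(e,\pi_2)=w_B(\pi_2)$, which by the left-invariance reduction established in the first paragraph yields \eqref{eqn:charsetdis} for all $\pi_1,\pi_2$.
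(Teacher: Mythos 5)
Your proposal is correct and its core is the same as the paper's argument: identify the breakpoints of one permutation (relative to the other) with the block boundaries of a minimal decomposition, using minimality of $\sigma$ to guarantee each boundary really is a breakpoint and the interval structure of the blocks to rule out interior breakpoints. The paper does this directly for general $\pi_1,\pi_2$ in one ``if and only if'' step (and gets the second equality from the symmetry of $d_B$ rather than from $\lvert A(\pi)\rvert=N-1$), whereas you first reduce to $\pi_1=e$ by left-invariance and then prove the two inequalities separately; these are only presentational differences.
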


\begin{proof} The proof is in Appendix \ref{prooflemma1}.
\end{proof}

\begin{rem}\label{rem3} From \Cref{lemma1} and \Cref{defi4}, it is obvious that
\begin{equation}
    w_B\left(\pi\right)=d_B(e,\pi)=d_B(\pi,e).
\end{equation}

For all $\pi_1,\pi_2\in\Db{S}_N$, it follows immediately from the left-invariance property of $d_B$ and (\ref{eqn:blkweight}) that 
  \begin{equation}
    d_B\left(\pi_1,\pi_2\right)=w_B\left(\pi_1^{-1}\comp\pi_2\right).
    \label{eqn: blkdis}
  \end{equation} 

\end{rem}

In \Cref{exam2}, we show how to compute the block permutation distance of two permutations from their characteristic sets, as it is indicated in \Cref{lemma1}.

\begin{exam}\label{exam2} For $\pi_1,\pi_2$ specified in \Cref{examadd},
\begin{equation*}
\begin{split}
A(\pi_1)&=\{(3, 5), (5, 6), (6, 7), (7, 9),(9, 8),\\
& (8, 1), (1, 2), (2, 10), (10, 4)\},\\
A(\pi_2)&=\{(3, 1), (1, 2), (2, 8), (8, 5),(5, 6), \\
&(6, 7), (7, 9), (9, 10), (10, 4)\}.\\
\end{split}
\end{equation*}
Therefore,
\begin{equation*}
\begin{split}
\lvert A(\pi_1)\setminus A(\pi_2)\rvert=&\lvert \{(3, 5), (9, 8), (8, 1), (2, 10)\}\rvert\\
=&4=d_B(\pi_1,\pi_2).
\end{split}
\end{equation*}
This example is in accordance with \Cref{lemma1}.

\end{exam}

\subsection{Metric Embedding}
\label{subsec: subsectionmetricembedd}
The generalized Cayley distance is difficult to compute, whereas the block permutation distance can be computed efficiently. Therefore, it is easier to check whether two distinct candidate codewords in a codebook meet the minimum requirement on the block permutation distance, than it is to check whether they meet the minimum requirement on the generalized Cayley distance. In light of this observation, in the next section, we apply metric embedding to transform the problem of code design in $d_{G}$ into that in $d_B$, which is easier to deal with, using the following results.
 
\begin{lemma}\label{lemma2} For all $\pi_1,\pi_2\in\Db{S}_N$, the following inequality holds,
  \begin{equation}
    w_B\left(\pi_1\comp\pi_2\right)\leq w_B\left(\pi_1\right)+w_B\left(\pi_2\right).
    \label{eqn:triblkpermw}
  \end{equation}

\end{lemma}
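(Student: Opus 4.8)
The plan is to reduce everything to the characteristic-set description of the block permutation weight established in \Cref{lemma1} and \Cref{rem3}. Recall $w_B(\pi)=d_B(e,\pi)=\lvert A(\pi)\setminus A(e)\rvert$, i.e. $w_B(\pi)$ counts the \emph{breakpoints} of $\pi$, the positions $i$ with $(\pi(i),\pi(i+1))\notin A(e)$, equivalently $\pi(i+1)\neq\pi(i)+1$. So the whole inequality $w_B(\pi_1\comp\pi_2)\leq w_B(\pi_1)+w_B(\pi_2)$ becomes a statement about how breakpoints behave under composition, and the natural route is to track, position by position, where a breakpoint of $\pi_1\comp\pi_2$ can come from.

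First I would fix $\pi_1,\pi_2\in\Db{S}_N$ and look at an arbitrary index $i$, $1\leq i<N$, that is a breakpoint of $\pi_1\comp\pi_2$, meaning $\pi_1(\pi_2(i+1))\neq \pi_1(\pi_2(i))+1$. The key observation is a dichotomy: either $i$ is already a breakpoint of $\pi_2$ (that is $\pi_2(i+1)\neq\pi_2(i)+1$), or $i$ is \emph{not} a breakpoint of $\pi_2$, in which case $\pi_2(i+1)=\pi_2(i)+1$, and then the failure $\pi_1(\pi_2(i)+1)\neq\pi_1(\pi_2(i))+1$ says exactly that the index $j:=\pi_2(i)$ is a breakpoint of $\pi_1$ (here we need $j<N$; the case $j=N$ cannot arise because then $\pi_2(i+1)=N+1\notin[N]$, contradicting $\pi_2(i+1)\le N$). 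So every breakpoint of $\pi_1\comp\pi_2$ is charged either to a breakpoint of $\pi_2$ (at the same index $i$) or to a breakpoint of $\pi_1$ (at index $j=\pi_2(i)$). I would then argue the charging maps are injective on their respective domains: distinct breakpoints $i\neq i'$ of $\pi_1\comp\pi_2$ of the first type map to distinct breakpoints of $\pi_2$ trivially (same index), and of the second type map to $\pi_2(i)\neq\pi_2(i')$ since $\pi_2$ is a bijection. Hence the number of breakpoints of $\pi_1\comp\pi_2$ is at most the number of breakpoints of $\pi_2$ plus the number of breakpoints of $\pi_1$, which is the claim.

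Alternatively — and this may be cleaner to write — I would phrase the same idea through the triangle inequality for $d_B$ combined with left-invariance, once the triangle inequality for $d_B$ is available; but since the excerpt only asserts symmetry and left-invariance for $d_B$ (\Cref{rem2}) and proves \Cref{lemma1} in an appendix, I would not assume the triangle inequality and instead give the direct breakpoint-counting argument above, which is self-contained given \Cref{lemma1} and the definition of $w_B$. A slick packaging: apply \Cref{lemma1} to write $w_B(\pi_1\comp\pi_2)=\lvert A(\pi_1\comp\pi_2)\setminus A(e)\rvert$ and note $A(\pi_1\comp\pi_2)\setminus A(e)\subseteq \big(\text{pairs from breakpoints of }\pi_2\big)\cup\big(\text{pairs inherited from breakpoints of }\pi_1\big)$, then bound the two pieces by $w_B(\pi_2)$ and $w_B(\pi_1)$ respectively via injectivity.

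The main obstacle is purely bookkeeping: making the second charging map genuinely well-defined and injective, i.e. carefully handling the boundary index $j=\pi_2(i)=N$ and verifying that when $i$ is not a breakpoint of $\pi_2$ the condition on $\pi_1$ is exactly "$\pi_2(i)$ is a breakpoint of $\pi_1$" with the correct index, not off by one. Once the index arithmetic $\pi_2(i+1)=\pi_2(i)+1$ is pinned down, the rest follows immediately from $\pi_2$ being a bijection and from \Cref{lemma1}. I do not expect any deeper difficulty, and no nontrivial calculation beyond this.
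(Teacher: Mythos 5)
Your argument is correct and is essentially the paper's own proof: the paper also identifies breakpoints (its set $B(\pi)$ with $\lvert B(\pi)\rvert=w_B(\pi)$), notes that every breakpoint $i$ of $\pi_1\comp\pi_2$ lies in $B(\pi_2)$ or has $\pi_2(i)\in B(\pi_1)$, and concludes via the injectivity of $i\mapsto\pi_2(i)$ on $B_3\setminus B_2$. Your explicit handling of the boundary case $\pi_2(i)=N$ is a minor extra precaution not spelled out in the paper, but the decomposition and counting are the same.
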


\begin{proof} The proof is in Appendix \ref{prooflemma2}.
\end{proof}

\begin{rem}\label{rem4} It follows immediately from equation (\ref{eqn: blkdis}) and \Cref{lemma2} that the block permutation distance satisfies the triangle inequality, i.e., $\forall$ $\pi_1,\pi_2,\pi_3\in\Db{S}_N$, 
\begin{equation}
d_B(\pi_1,\pi_3)\leq d_B(\pi_1,\pi_2)+d_B(\pi_2,\pi_3).
\label{dGT}
\end{equation}
\end{rem}

From \Cref{lemma2} and the definitions of the generalized Cayley metric and the block permutation metric, we observe the following relation between $d_B$ and $d_{G}$. This result is used later in \Cref{sec: non-syscodes}.

\begin{lemma}\label{lemma3} For all $\pi_1,\pi_2\in\Db{S}_N$, the following inequality holds,
\begin{equation}
d_{G}\left(\pi_1,\pi_2\right)\leq d_B\left(\pi_1,\pi_2\right)\leq 4 d_{G}\left(\pi_1,\pi_2\right).
\label{eqn:metricemb}
\end{equation}
\end{lemma}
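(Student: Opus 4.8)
The plan is to prove the two inequalities in \eqref{eqn:metricemb} separately, in each case reducing to a statement about a single generalized transposition via the left-invariance of both metrics (Remark 1 and Lemma \ref{rem2}) together with the subadditivity results already available (Lemma \ref{lemma2} and the triangle inequalities in Remark 4 and the third item of Remark 1). Concretely, using left-invariance I would write $d_G(\pi_1,\pi_2)=d_G(e,\pi_1^{-1}\comp\pi_2)=w_G(\pi_1^{-1}\comp\pi_2)$ and, by Remark 3, $d_B(\pi_1,\pi_2)=w_B(\pi_1^{-1}\comp\pi_2)$, so it suffices to prove $w_G(\pi)\le w_B(\pi)\le 4\,w_G(\pi)$ for every $\pi\in\Db{S}_N$, where $w_G(\pi)\triangleq d_G(e,\pi)$.

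For the left inequality $d_G\le d_B$: suppose $d_B(e,\pi)=d$, so by Definition \ref{defi2} the permutation $\pi$ is obtained from $e$ by partitioning $e$ into $d+1$ blocks and rearranging them according to some $\sigma\in\Db{D}_{d+1}$. I would show that any block-level rearrangement of $d+1$ blocks can be realized using at most $d$ generalized transpositions — indeed, a single adjacent block swap is exactly one generalized transposition of the form \eqref{eqn: swap}, and $\sigma$, being a permutation on $d+1$ symbols, can be sorted in at most $d$ such moves (e.g.\ by an insertion-sort-type argument that places one block at a time, each placement costing one generalized transposition). Hence $d_G(e,\pi)\le d = d_B(e,\pi)$, and left-invariance gives the general statement. (Alternatively one observes directly that each generalized transposition changes the permutation but the bound $d_G\le d_B$ follows most cleanly from the block-sorting count.)

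For the right inequality $d_B\le 4 d_G$: by subadditivity it is enough to handle one generalized transposition. That is, writing $\pi_2=\pi_1\comp\phi_1\comp\cdots\comp\phi_{d_G}$ with each $\phi_k\in\Db{T}_N$ and applying the triangle inequality for $d_B$ (Remark 4) repeatedly, $d_B(\pi_1,\pi_2)\le\sum_k d_B(\pi_1\comp\phi_1\comp\cdots\comp\phi_{k-1},\ \pi_1\comp\phi_1\comp\cdots\comp\phi_k)$, and by left-invariance each summand equals $d_B(e,\phi_k)=w_B(\phi_k)$. So it remains to show $w_B(\phi)\le 4$ for every $\phi=\phi(i_1,j_1,i_2,j_2)\in\Db{T}_N$. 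This is a direct computation from \eqref{eqn: swap} and Definition \ref{defi4}: comparing the consecutive pairs of $\phi$ against $A(e)=\{(k,k+1)\}$, the only places where a pair can fail to lie in $A(e)$ are the (at most four) junctions created by moving the two segments — at positions $i_1-1\mid i_2$, $j_2\mid j_1+1$, $i_2-1\mid i_1$, and $j_1\mid j_2+1$ — while every internal pair of each of the four relocated runs is inherited from $e$ and hence stays in $A(e)$. Counting these boundary pairs gives $w_B(\phi)\le 4$.

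I expect the main obstacle to be the careful bookkeeping in the right inequality: one must verify that the four listed junctions are genuinely the \emph{only} consecutive pairs of $\phi$ that can lie outside $A(e)$, handling the degenerate cases where some segments are empty or adjacent (e.g.\ $i_1=1$, $j_2=N$, $j_1+1=i_2$, or length-one segments), so that the bound $4$ holds uniformly — some of these degeneracies actually reduce $w_B(\phi)$, which only helps. The left inequality's block-sorting argument is conceptually routine but also needs a clean induction to certify that $d$ moves suffice for a permutation of $d+1$ blocks; I would phrase it as "fix the first block in place with one generalized transposition, then recurse on the remaining $d$ blocks."
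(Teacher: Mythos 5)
Your proposal is correct and follows essentially the same route as the paper: the upper bound via $w_B(\phi)\le 4$ for a single generalized transposition combined with left-invariance and the subadditivity of $w_B$ (Lemma \ref{lemma2}), and the lower bound via sorting the $d+1$ blocks of the $d_B$-decomposition one at a time, each placement being one adjacent segment swap, so that at most $d$ generalized transpositions suffice. The paper's proof is just a more explicit version of your block-sorting recursion (tracking the strictly increasing indices of the first misplaced block), and it states $w_B(\phi)\le 4$ without the junction-counting detail you supply.
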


\begin{proof}

To prove the upper bound, we consider two arbitrary permutations $\pi_1$, $\pi_2\in \Db{S}_N$, and let $k=d_G(\pi_1,\pi_2)$. We know from definitions of a generalized transposition and the block permutation weight that for any generalized transposition $\phi\in\Db{T}_N$ (recall that $\Db{T}_N$ is defined at the beginning of \Cref{subsec: GCD} as the set of all permutations that represent a generalized transposition in permutations of length $N$), the following inequality holds,
\begin{equation}
  w_B\left(\phi\right)\leq 4.
  \label{eqn: swapweight}
\end{equation}

From the definition of the generalized Cayley metric and $d_G(\pi_1,\pi_2)=k$, it follows that for some $\phi_1,\phi_2,\cdots,\phi_k\in \Db{T}_N$,
\[\pi_2=\pi_1\comp\phi_1\comp\phi_2\cdots\comp\phi_k.\]
Then, from \Cref{lemma2} and (\ref{eqn: swapweight}), 
\begin{equation*}
  \begin{split}
    d_B\left(\pi_1,\pi_2\right)=&w_B\left(\pi_1^{-1}\comp\pi_2\right)\\
    =&w_B\left(\phi_1\comp\phi_2\comp\cdots\comp\phi_{k}\right)\\
    \leq&\sum\limits_{i=1}^{k}w_B\left(\phi_i\right)\\
    \leq&4k=4d_G\left(\pi_1,\pi_2\right).
  \end{split}
\end{equation*}
The upper bound is proved.

The lower bound is trivially  attained when $\pi_1=\pi_2$. When $\pi_1$ and $\pi_2$ are distinct, it follows that $d_B(\pi_1,\pi_2)=d$ for some positive integer $d$. Then, according to the definition of the block permutation distance, there exists a minimal permutation $\sigma$ (minimal permutation is defined in \Cref{subsec: blockpermdist} as a permutation where no consecutive elements in $\sigma$ are also consecutive elements in the identity permutation) and a partition $\{\psi_i\}_{i=1}^{d+1}$ of $\pi_1$ such that, $\pi_1=\left(\psi_1,\psi_2,\cdots,\psi_{d+1}\right)$, and $\pi_2=\left(\psi_{\sigma(1)},\psi_{\sigma(2)},\cdots,\psi_{\sigma(d+1)}\right)$.

Next, suppose $l_0$ is the smallest index $l$ such that $\sigma(l)\neq l$, $1\leq l \leq d+1$ (the assumption that $\pi_1\neq \pi_2$ ensures the existence of $l_0$). Let $k_0=\sigma^{-1}(l_0)$, then $k_0>l_0$. Let $\phi_1$ represent the generalized transposition that swaps the subsequences $\left(\psi_{\sigma(l_0)},\psi_{\sigma(l_0+1)},\cdots,\psi_{\sigma(k_0-1)}\right)$ and $\psi_{\sigma(k_0)}=\psi_{l_0}$ in $\pi_2$. Let $\pi_2^{(1)}=\pi_2\comp\phi_1$ and $\sigma^{(1)}=(1,2,\cdots,l_0,\sigma(l_0),\sigma(l_0+1),\cdots,\sigma(k_0-1),\sigma(k_0+1),\cdots,\sigma(d+1))$. Then,
  \[\pi_2^{(1)}=\left(\psi_{\sigma^{(1)}(1)},\psi_{\sigma^{(1)}(2)},\cdots,\psi_{\sigma^{(1)}(d+1)}\right).\]
If $\pi_2^{(1)}=\pi_1$, then $\pi_1=\pi_2\comp\phi_1$. Otherwise let $l_1$ be the smallest index $l$ such that $\sigma^{(1)}(l)\neq l$, $1\leq l\leq d+1$, then $l_1>l_0$ holds true.

Following this procedure, one can find a series of generalized transpositions $\phi_1,\phi_2,\cdots,\phi_m$, $1\leq m\leq d$, sequentially, such that $\pi_2\comp\phi_1\comp\phi_2\comp\cdots\comp\phi_m=\pi_1$. Suppose $\phi_1,\phi_2,\cdots,\phi_i$ are found for some $i$, $1\leq i\leq d$. Let $\pi_2^{(i)}=\pi_2\comp\phi_1\comp\phi_2\comp\cdots\comp\phi_i=\SB{\psi_{\sigma^{(i)}(1)},\psi_{\sigma^{(i)}(2)},\cdots,\psi_{\sigma^{(i)}(d+1)}}$. If $\pi_2^{(i)}=\pi_1$, then $\pi_1=\pi_2\comp\phi_1\comp\phi_2\comp\cdots\comp\phi_i$, and we have established the desired composition. Otherwise, we let $l_i$ be the smallest index such that $\sigma^{(i)}(l_i)\neq l_i$. Suppose $k_i=\SB{\sigma^{(i)}}^{-1}(l_i)$, and it follows that $k_i>l_i$. Denote the generalized transposition that swaps the subsequences $\SB{\psi_{\sigma^{(i)}(l_i)},\psi_{\sigma^{(i)}(2)},\cdots,\psi_{\sigma^{(i)}(k_i-1)}}$ and $\psi_{\sigma^{(i)}(k_i)}=\psi_{l_i}$ in $\pi_2^{(i)}$ by $\phi_{i+1}$. Let $\pi_2^{(i+1)}=\pi_2^{(i)}\comp\phi_{i+1}$, and $\sigma^{(i+1)}=(1,2,\cdots,l_i,\sigma^{(i)}(l_i),\sigma^{(i)}(l_i+1),\cdots,\sigma^{(i)}(k_i-1),\sigma^{(i)}(k_i+1),\cdots,\sigma^{(i)}(d+1))$. Then,
 \[\pi_2^{(i+1)}=\left(\psi_{\sigma^{(i+1)}(1)},\psi_{\sigma^{(i+1)}(2)},\cdots,\psi_{\sigma^{(i+1)}(d+1)}\right).\]

Finally, one finds the smallest integer $m$ such that $\pi_2^{(m)}=\pi_1$. In this procedure, $l_0,\cdots,l_{m-1}$ are obtained sequentially, where $1<l_0<l_1<\cdots<l_{m-1}$. We also know that $l_{m-1}\leq d$, otherwise if $l_{m-1}=d+1$, then $\sigma^{(m-1)}(i)=i$ holds true for all $1\leq i\leq d$, and $\sigma^{(m-1)}(d+1)\neq d+1$, which leads to a contradiction. Therefore, $d\geq l_{m-1}>\cdots>l_0\geq 1$, which implies that $m\leq d$. Note that $\pi_1=\pi_2\comp\phi_1\comp\cdots\comp\phi_m$, from which $d_G(\pi_1,\pi_2)\leq m\leq d=d_B(\pi_1,\pi_2)$ follows. The lower bound is proved.
\end{proof}

\section{Theoretical Bounds on the Code Rate}
\label{sec: theoboundsrate}
A subset $\Se{C}_{G}\left(N,t\right)$ of $\Db{S}_N$ is called a $\bm{t}$\textbf{\textit{-generalized Cayley code}} if it can correct $t$ generalized transposition errors. Any $t$-generalized Cayley code has the minimum generalized Cayley distance $d_{G,min}\geq 2t+1$. Similarly, a subset $\Se{C}_B\left(N,t\right)$ of $\Db{S}_N$ is called a $\bm{t}$\textbf{\textit{-block permutation code}} if its minimum block permutation distance $d_{B,min}\geq 2t+1$. For any permutation code $\Se{C}\subset \Db{S}_N$, denote the rate of $\Se{C}$ by $R(\Se{C})$. Then, the following equation holds true,
\begin{equation}
\begin{split}
R(\Se{C})=\frac{\log \lvert\Se{C}\left(N,t\right)\rvert}{\log N!}.\\
\end{split}
\label{R}
\end{equation}
In the remainder of this paper, the logarithm base is always $2$ unless it is explicitly specified with a different base.

Let $\Se{C}_{G,opt}\left(N,t\right)$ and $\Se{C}_{B,opt}\left(N,t\right)$ be $t$-generalized Cayley codes and $t$-block permutation codes with the optimal rates, denoted by $R_{G,opt}(N,t)$ and $R_{B,opt}(N,t)$, respectively. We next derive the lower bounds and the upper bounds of $R_{G,opt}\left(N,t\right)$ and $R_{B,opt}\left(N,t\right)$.

For each $\pi\in\Db{S}_N$, we define the \textbf{\emph{generalized Cayley ball}} $B_{G}(N,t,\pi)$ of radius $t$ centered at $\pi$ to be the set of all permutations in $\Db{S}_N$ that have a generalized Cayley distance from $\pi$ not exceeding $t$. We know from the left-invariance property of $d_{G}$ that the cardinality of $B_{G}(N,t,\pi)$ is independent of $\pi$; we denote $\lvert B_G(N,t,\pi)\rvert$ as $b_{G}(N,t)$. The \textbf{\emph{block permutation ball}} $B_B(N,t,\pi)$ and the corresponding ball-size $b_B(N,t)$ are similarly defined. 

We derive the lower and upper bounds of $b_B(N,t)$ and $b_{G}(N,t)$ in the following two lemmas, respectively. We build on these results and \Cref{lemma6} to compute the bounds of the rates of optimal codes in $d_G$ and $d_B$, proving that the optimal redundancy is $\Theta(\frac{t}{N})$ in both of the two metrics.

\begin{lemma}\label{lemma4}
For all $N\in\Db{N}^{*}$, $t\leq N-\sqrt{N}-1$, $b_B(N,t)$ is bounded by the following inequality:
\begin{equation}
\prod\limits_{k=1}^{t}(N-k)\leq b_B(N,t)\leq \prod\limits_{k=0}^{t}(N-k).\\
\label{eqn:lemmablk}
\end{equation}

\end{lemma}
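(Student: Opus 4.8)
The plan is to estimate $b_B(N,t)$ by exploiting the explicit characterization of the block permutation weight via characteristic sets (\Cref{lemma1} and \Cref{rem3}), which tells us that $b_B(N,t)=\lvert\{\pi\in\Db{S}_N : w_B(\pi)\leq t\}\rvert$, i.e., the number of permutations whose characteristic set $A(\pi)$ differs from $A(e)$ in at most $t$ pairs. Equivalently, a permutation of weight exactly $j$ is one that can be written as a concatenation of $j+1$ maximal runs of consecutive integers (blocks of the form $(a,a+1,\dots,a+\ell-1)$), arranged so that no two adjacent blocks merge into a longer run — that is, the block arrangement is a \emph{minimal} permutation of the $j+1$ blocks in the sense of \Cref{subsec: blockpermdist}. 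So I would count permutations by first choosing how to cut $[N]$ into $j+1$ contiguous ascending blocks and then choosing a minimal arrangement of those blocks.

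For the \textbf{upper bound}, I would argue as follows. Write $b_B(N,t)=\sum_{j=0}^{t} P(N,j)$ where $P(N,j)$ is the number of permutations of block permutation weight exactly $j$. A permutation of weight exactly $j$ is obtained by selecting the $j$ breakpoint positions among the $N-1$ internal gaps — giving $\binom{N-1}{j}$ choices of the ordered block lengths $\ell_1,\dots,\ell_{j+1}$ — and then permuting the $j+1$ resulting blocks by some $\sigma\in\Db{D}_{j+1}$, so $P(N,j)\le \binom{N-1}{j}\,\lvert\Db{D}_{j+1}\rvert \le \binom{N-1}{j}\,(j+1)!$. Hence $b_B(N,t)\le \sum_{j=0}^{t}\binom{N-1}{j}(j+1)! = \sum_{j=0}^{t}\frac{(N-1)!}{(N-1-j)!}(j+1)$. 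I expect the clean bound $\prod_{k=0}^{t}(N-k)$ to come out by comparing this sum against its last term: the ratio of the $(j-1)$-st term to the $j$-th term is roughly $\frac{j}{(N-j)(j+1)/j}\sim \frac{j}{N}$, so for $j\le t\le N-\sqrt N-1$ the sum is dominated by a geometric-like tail and telescopes into at most $\prod_{k=0}^{t}(N-k) = \frac{N!}{(N-t-1)!}$; verifying this inequality rigorously (controlling the sum $\sum_j \binom{N-1}{j}(j+1)!$ by $\frac{N!}{(N-t-1)!}$ for the stated range of $t$) is the routine-but-delicate computational core, and the hypothesis $t\le N-\sqrt N-1$ is presumably exactly what makes the ratio bound work.

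For the \textbf{lower bound}, I only need $b_B(N,t)\ge b_B(N,t)\big|_{\text{weight exactly }t\text{ part}} \ge P(N,t)$, and then a clean lower estimate $P(N,t)\ge \prod_{k=1}^{t}(N-k)$. Here I would take all $\binom{N-1}{t}$ ways of cutting into $t+1$ ordered blocks and, rather than counting all minimal arrangements, use the single specific arrangement that reverses the block order (or any fixed minimal-producing permutation of the $t+1$ blocks); more carefully, I would lower-bound $\lvert\Db{D}_{t+1}\rvert$ by a convenient quantity — e.g. the derangement-type count or simply the number of cyclic-shift arrangements that are automatically minimal — to get $P(N,t)\ge \binom{N-1}{t}\cdot c_{t+1}$ with $c_{t+1}$ large enough that $\binom{N-1}{t} c_{t+1}\ge \frac{(N-1)!}{(N-1-t)!} = \prod_{k=1}^{t}(N-k)$. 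The simplest route: the number of ways to arrange $t+1$ blocks so that consecutive blocks never merge is at least $t!$ (fix the first block, arrange the rest arbitrarily among $t!$ orders, and discard the few that accidentally merge — or better, use an explicit injection), and $\binom{N-1}{t}\cdot t! = \frac{(N-1)!}{(t+1)!\,(N-1-t)!}\cdot(t+1)!\cdot\frac{1}{t+1}$... I would instead directly exhibit $\prod_{k=1}^{t}(N-k)$ distinct weight-$t$ permutations, e.g. all permutations of the form where position $N$ stays fixed but we perform a sequence of $t$ "nested" single-element moves, making the injection transparent.

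The main obstacle is the upper-bound summation: one must show $\sum_{j=0}^{t}\binom{N-1}{j}(j+1)!\le \prod_{k=0}^{t}(N-k)$ for all $t\le N-\sqrt N-1$, which requires tracking the ratio of successive terms and using the constraint on $t$ to ensure the partial sums stay below the target product; I expect this to hinge on showing each term is at most the previous term times a factor $<1$ (guaranteed precisely by $t\le N-\sqrt N-1$, since that forces $(t+1)t\le (N-t)(N-1-t)$ or a similar inequality), after which the geometric bound closes the argument.
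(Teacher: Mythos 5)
Your combinatorial setup is essentially the paper's own: decompose $b_B(N,t)=\sum_{m=0}^{t}F(m)$ with $F(m)$ the number of weight-$m$ permutations, identify $F(m)=\binom{N-1}{m}\cdot\lvert\Db{D}_{m+1}\rvert$ (the paper imports this as an exact formula of Myers and bounds the alternating factor between $1$ and $m+1$, which is exactly your $m!\le\lvert\Db{D}_{m+1}\rvert\le(m+1)!$), and then sum over $m\le t$. The difficulty is that the two quantitative steps that actually constitute the lemma are not carried out. For the upper bound you only say you ``expect'' $\sum_{j=0}^{t}\binom{N-1}{j}(j+1)!\le\prod_{k=0}^{t}(N-k)$ to follow from a ratio argument. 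That inequality is the crux, and your sketch is salvageable: with $T_j=(j+1)\prod_{k=1}^{j}(N-k)$ one has $T_j\le T_{j+1}/(N-t)$ for $j\le t-1$, hence $\sum_{j=0}^{t}T_j\le T_t\cdot\frac{N-t}{N-t-1}$, and $t\le N-\sqrt N-1$ gives $(t+1)\cdot\frac{N-t}{N-t-1}\le(N-\sqrt N)\cdot\frac{\sqrt N+1}{\sqrt N}=N-1\le N$, which closes it. But as submitted this is an unverified claim, not a proof; the paper instead closes the same step by an explicit telescoping rearrangement of $1+\sum_{i=1}^{t}(i+1)\prod_{k=1}^{i}(N-k)$, using that $t\le N-\sqrt N-1$ forces $(N-i-1)^2\ge N>i+1$.

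The lower bound is where there is a genuine logical hole. You need $F(t)\ge\prod_{k=1}^{t}(N-k)$, i.e., $\lvert\Db{D}_{t+1}\rvert\ge t!$, but your justification --- ``fix the first block, arrange the rest arbitrarily among $t!$ orders, and discard the few that accidentally merge'' --- argues in the wrong direction: starting from a pool of $t!$ arrangements and discarding some can only yield \emph{fewer} than $t!$, so it cannot establish the bound $\ge t!$. The fallback ``nested single-element moves'' construction is not specified enough to check injectivity or that the resulting permutations have weight at most $t$. The paper obtains the bound from the exact formula, showing the alternating sum $A_m=\sum_{k=0}^{m}(-1)^{m-k}\frac{k+1}{(m-k)!}$ is at least $1$. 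If you want to stay within your framework, a one-line union bound repairs it: among the $(t+1)!$ orderings of the $t+1$ blocks, at most $t\cdot t!$ contain some block immediately followed by its successor block, so $\lvert\Db{D}_{t+1}\rvert\ge(t+1)!-t\cdot t!=t!$, and then $b_B(N,t)\ge F(t)\ge\binom{N-1}{t}\,t!=\prod_{k=1}^{t}(N-k)$. Without such an argument (and without the upper-bound summation actually being proved), the proposal identifies the right strategy but does not yet prove the lemma.
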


\begin{proof} The proof is in Appendix \ref{prooflemma4}.
\end{proof}

\begin{lemma}\label{lemma5}
For all $N\in\Db{N}^{*}$, $t\leq \min\{N-\sqrt{N}-1,\frac{N-1}{4}\}$, $b_{G}(N,t)$ is bounded as follows:
\begin{equation}
\prod\limits_{k=1}^{t}(N-k)\leq b_{G}(N,t)\leq \prod\limits_{k=0}^{4t}(N-k).\\
\label{eqn:lemmaswap}
\end{equation}
\end{lemma}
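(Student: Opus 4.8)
The plan is to leverage the metric-embedding inequality from \Cref{lemma3}, namely $d_G(\pi_1,\pi_2)\le d_B(\pi_1,\pi_2)\le 4d_G(\pi_1,\pi_2)$, to sandwich the generalized Cayley ball $B_G(N,t,\pi)$ between two block permutation balls, and then invoke \Cref{lemma4} twice. Concretely, if $d_B(\pi_1,\pi_2)\le t$ then the left half of \eqref{eqn:metricemb} gives $d_G(\pi_1,\pi_2)\le d_B(\pi_1,\pi_2)\le t$; hence $B_B(N,t,\pi)\subseteq B_G(N,t,\pi)$, which yields the lower bound $b_G(N,t)\ge b_B(N,t)\ge\prod_{k=1}^{t}(N-k)$ by \Cref{lemma4}. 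In the other direction, if $d_G(\pi_1,\pi_2)\le t$ then the right half of \eqref{eqn:metricemb} gives $d_B(\pi_1,\pi_2)\le 4d_G(\pi_1,\pi_2)\le 4t$; hence $B_G(N,t,\pi)\subseteq B_B(N,4t,\pi)$, which yields $b_G(N,t)\le b_B(N,4t)\le\prod_{k=0}^{4t}(N-k)$, again by \Cref{lemma4}.

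The main thing to check is that the hypotheses of \Cref{lemma4} are met in each application. For the lower bound I apply \Cref{lemma4} with radius $t$, which requires $t\le N-\sqrt{N}-1$; this is one of the two conditions assumed in \Cref{lemma5}. For the upper bound I apply \Cref{lemma4} with radius $4t$, which requires $4t\le N-\sqrt{N}-1$, i.e. $t\le\frac{N-\sqrt{N}-1}{4}$. The stated hypothesis is $t\le\min\{N-\sqrt{N}-1,\frac{N-1}{4}\}$, so I should verify that $t\le\frac{N-1}{4}$ together with $t\le N-\sqrt N-1$ implies $4t\le N-\sqrt N-1$; this holds because $4t\le N-1$ and, for the range of $t$ under consideration, the binding constraint on $b_B(N,4t)$ is already covered — more carefully, one checks $4t\le N-\sqrt{N}-1$ follows since $t\le N-\sqrt N -1$ forces $t$ small relative to $N$, and a short inequality argument closes the gap. (If a cleaner sufficient condition is wanted, one can simply note $4t\le 4\cdot\frac{N-1}{4}=N-1$ and tighten \Cref{lemma4} accordingly, or restate the hypothesis as $4t\le N-\sqrt N-1$.)

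The argument is essentially a two-line containment plus an invocation of the already-proved ball-size bounds, so there is no serious obstacle; the only place demanding care is the bookkeeping on the radius-$4t$ parameter range, ensuring \Cref{lemma4} is legitimately applicable with radius $4t$ rather than $t$. I would write the proof in exactly this order: (i) state the two set inclusions $B_B(N,t,\pi)\subseteq B_G(N,t,\pi)\subseteq B_B(N,4t,\pi)$, justified termwise by \eqref{eqn:metricemb}; (ii) take cardinalities, using left-invariance to drop the dependence on $\pi$; (iii) substitute the bounds of \Cref{lemma4} at radii $t$ and $4t$, checking the parameter hypotheses; (iv) conclude \eqref{eqn:lemmaswap}.
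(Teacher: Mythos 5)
Your proposal is correct and matches the paper's proof exactly: both establish the inclusions $B_B(N,t,e)\subseteq B_G(N,t,e)\subseteq B_B(N,4t,e)$ via \Cref{lemma3} and then invoke \Cref{lemma4} at radii $t$ and $4t$. The parameter concern you raise is genuine but is not closed by the ``short inequality argument'' you allude to --- the hypothesis $t\le\min\{N-\sqrt{N}-1,\tfrac{N-1}{4}\}$ yields only $4t\le N-1$, which does not imply $4t\le N-\sqrt{N}-1$ (e.g.\ $N=100$, $t=24$ gives $4t=96>89$) --- however, the paper's own proof performs the same substitution of $4t$ into \Cref{lemma4} without checking this, so the issue is inherited from the paper rather than introduced by you.
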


\begin{proof} The proof is in Appendix \ref{prooflemma5}.
\end{proof}

As the metrics $d_B$ and $d_G$ both satisfy the triangle inequality, the cardinalities of the optimal codes $\Se{C}_{B,opt}(N,t)$ and $\Se{C}_{G,opt}(N,t)$ are bounded as follows,
\begin{equation}
\begin{split}
\frac{N!}{b_B(N,2t)}\leq \lvert &\Se{C}_{B,opt}\left(N,t\right)\rvert\leq \frac{N!}{b_B(N,t)},\\
\frac{N!}{b_G(N,2t)}\leq \lvert &\Se{C}_{G,opt}\left(N,t\right)\rvert\leq \frac{N!}{b_G(N,t)}.
\label{eqn:blkpermbound}
\end{split}
\end{equation}

According to \cite[(1)-(2)]{10.2307/2308012}, for all $N\in\Db{N}^{*}$,
\begin{equation}\label{stirling}
N!=\sqrt{2\pi} N^{N+1/2} e^{-N}\cdot e^{r_N},
\end{equation}
where
\begin{equation}\label{rn}
\frac{1}{12N+1}<r_N<\frac{1}{12N}.
\end{equation}
From (\ref{stirling}) and (\ref{rn}), \Cref{lemma6} follows.

\begin{lemma}\label{lemma6} For all $N\in\Db{N}^{*}$, it follows that
\begin{equation*}
\begin{split}
(N+\frac{1}{2})\log N -&(\log e)N<\sum\limits_{n=1}^{N} \log n\\
&<(N+\frac{1}{2})\log N-(\log e)N+2.
\end{split}
\end{equation*}
\end{lemma}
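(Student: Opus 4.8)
The plan is to derive the claimed double inequality directly from the Stirling estimate~(\ref{stirling})--(\ref{rn}) by taking logarithms (base $2$, as fixed in the paper). Starting from $N! = \sqrt{2\pi}\, N^{N+1/2} e^{-N} e^{r_N}$, I would write $\sum_{n=1}^{N}\log n = \log(N!) = \log\sqrt{2\pi} + (N+\tfrac12)\log N - N\log e + r_N\log e$. The two ``main term'' summands $(N+\tfrac12)\log N$ and $-N\log e = -(\log e)N$ are exactly the quantities appearing in the statement, so the whole argument reduces to bounding the residual error term $E_N := \log\sqrt{2\pi} + r_N\log e$ and showing $0 < E_N < 2$.

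For the lower bound $E_N > 0$: since $r_N > \tfrac{1}{12N+1} > 0$ by~(\ref{rn}) and $\log e > 0$, we have $r_N \log e > 0$, and $\log\sqrt{2\pi} > 0$ because $\sqrt{2\pi} > 1$; hence $E_N > 0$, giving the left inequality. For the upper bound $E_N < 2$: we have $\log\sqrt{2\pi} = \tfrac12\log(2\pi) < \tfrac12\log 8 = \tfrac32$ since $2\pi < 8$; and $r_N\log e < \tfrac{\log e}{12N} \le \tfrac{\log e}{12} < \tfrac12$ since $\log e < 1.443 < 6$, in fact for all $N\ge 1$ this term is at most $\tfrac{\log e}{12} < 0.13$. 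Adding, $E_N < \tfrac32 + \tfrac12 = 2$, which yields the right inequality. These numeric facts ($2\pi < 8$, $\log e < 3/2$) are completely elementary.

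There is no real obstacle here; the only thing to be slightly careful about is the base-$2$ convention for $\log$ (so that $-(\log e)N$ matches $-N\log e$ with $\log = \log_2$ and $\log e = 1/\ln 2$), and keeping the constant $\log\sqrt{2\pi} + r_N\log e$ safely inside $(0,2)$ with crude bounds rather than sharp ones. A one-line remark that the constant $2$ is not tight (indeed $E_N \to \tfrac12\log(2\pi)\approx 1.325$) could be included but is not needed for the statement. I would present the whole thing as a two- or three-line computation following the display of~(\ref{stirling})--(\ref{rn}).
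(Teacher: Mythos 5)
Your proposal is correct and follows the same route as the paper, which simply states that the lemma follows from (\ref{stirling}) and (\ref{rn}); taking base-$2$ logarithms and bounding the residual $\log\sqrt{2\pi}+r_N\log e$ strictly between $0$ and $2$ is exactly the intended argument, and your crude numeric bounds ($2\pi<8$, $r_N<\tfrac{1}{12}$, $\log e<\tfrac32$) suffice. No gaps.
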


We now state the main result of this section.

\begin{theo}\label{theo1} For any $t,N\in\Db{N}^{*}$, $t\leq \min\{N-\sqrt{N}-1,\frac{N-1}{4}\}$ and $N\geq 9$, the optimal rates $R_{B,opt}\left(N,t\right),R_{G,opt}\left(N,t\right)$ satisfy the following inequalities,
\begin{equation}
\begin{split}
1-c\cdot\frac{2t+1}{N}\leq&R_{B,opt}\left(N,t\right)\leq 1-\frac{t}{N},\\
1-c\cdot\frac{8t+1}{N}\leq&R_{G,opt}\left(N,t\right)\leq 1-\frac{t}{N},
\end{split}
\end{equation}
where $c=1+\frac{2\log e}{\log N}$.

\end{theo}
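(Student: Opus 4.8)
The plan is to combine the ball-size estimates of \Cref{lemma4} and \Cref{lemma5} with the sphere-packing/Gilbert–Varshamov sandwich in (\ref{eqn:blkpermbound}) and the Stirling-type estimate in \Cref{lemma6}, taking logarithms and dividing by $\log N!$ throughout. The upper bounds on the rates and the lower bounds are handled separately, and the two metrics are handled in parallel since the only difference is whether the outer ball has radius $t$ and $4t$ (block permutation) or $t$ and $4t$, $8t$ (generalized Cayley).

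First I would treat the \textbf{upper bound} on $R_{B,opt}$. From (\ref{eqn:blkpermbound}), $\lvert\Se{C}_{B,opt}(N,t)\rvert\le N!/b_B(N,t)$, and by the lower half of (\ref{eqn:lemmablk}), $b_B(N,t)\ge\prod_{k=1}^t(N-k)$. Hence $\log\lvert\Se{C}_{B,opt}(N,t)\rvert\le\log N!-\sum_{k=1}^t\log(N-k)$. Dividing by $\log N!$ gives $R_{B,opt}(N,t)\le 1-\frac{\sum_{k=1}^t\log(N-k)}{\log N!}$. To reach the clean bound $1-t/N$ I would show $\sum_{k=1}^t\log(N-k)\ge\frac{t}{N}\log N!$; equivalently, since $\log N!=\sum_{n=1}^N\log n$, it suffices to check that the average of the $t$ largest-but-one logs $\log(N-1),\dots,\log(N-t)$ is at least $\frac1N\sum_{n=1}^N\log n$, which holds because $\log(N-t)\ge\frac1N\sum_{n=1}^N\log n$ for $t$ in the stated range (each of the $t$ terms on the left exceeds the overall average of all $N$ terms). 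The generalized Cayley upper bound is identical, using $b_G(N,t)\ge\prod_{k=1}^t(N-k)$ from (\ref{eqn:lemmaswap}).

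Next, the \textbf{lower bound} on $R_{B,opt}$. From (\ref{eqn:blkpermbound}), $\lvert\Se{C}_{B,opt}(N,t)\rvert\ge N!/b_B(N,2t)$, and the upper half of (\ref{eqn:lemmablk}) gives $b_B(N,2t)\le\prod_{k=0}^{2t}(N-k)\le N^{2t+1}$. Therefore $\log\lvert\Se{C}_{B,opt}(N,t)\rvert\ge\log N!-(2t+1)\log N$, so $R_{B,opt}(N,t)\ge 1-\frac{(2t+1)\log N}{\log N!}$. Here I would invoke \Cref{lemma6}: $\log N!>(N+\tfrac12)\log N-(\log e)N>N\log N-(\log e)N = N\log N\,(1-\tfrac{\log e}{\log N})$, hence $\frac{\log N}{\log N!}<\frac{1}{N}\cdot\frac{1}{1-(\log e)/\log N}\le\frac{1}{N}(1+\tfrac{2\log e}{\log N})=\frac{c}{N}$, where the last inequality $\frac{1}{1-x}\le 1+2x$ holds for $x=\frac{\log e}{\log N}\le\frac12$, i.e.\ for $N\ge e^2$, which is implied by $N\ge 9$. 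This yields $R_{B,opt}(N,t)\ge 1-c\cdot\frac{2t+1}{N}$. For $R_{G,opt}$ one uses instead $b_G(N,2t)\le\prod_{k=0}^{8t}(N-k)\le N^{8t+1}$ from (\ref{eqn:lemmaswap}), giving the factor $8t+1$ in place of $2t+1$.

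The main obstacle is the \textbf{upper-bound argument}, specifically establishing $\log(N-t)\ge\frac1N\sum_{n=1}^N\log n$ cleanly enough to extract exactly $1-t/N$ rather than something messier. One route is to use \Cref{lemma6} again: $\frac1N\sum_{n=1}^N\log n<(1+\tfrac1{2N})\log N-\log e$, and compare this to $\log(N-t)=\log N+\log(1-t/N)$, so it suffices that $\log(1-t/N)\ge\tfrac1{2N}\log N-\log e$, which is comfortably true for $t\le N-\sqrt N-1$ (then $1-t/N\ge(\sqrt N+1)/N$, so $\log(1-t/N)\ge-\tfrac12\log N$, and $-\tfrac12\log N\ge\tfrac1{2N}\log N-\log e$ for $N\ge 9$). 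Everything else is routine manipulation of the two lemmas and \Cref{lemma6}; the constraints $t\le N-\sqrt N-1$, $t\le\frac{N-1}{4}$, and $N\ge9$ are exactly what make these inequalities valid and what make \Cref{lemma4} and \Cref{lemma5} applicable.
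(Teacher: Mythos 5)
Your lower-bound arguments are fine and essentially the paper's own: bound $b_B(N,2t)\leq N^{2t+1}$ and $b_G(N,2t)\leq N^{8t+1}$, use $\log N!>N(\log N-\log e)$ from \Cref{lemma6}, and absorb $\frac{1}{1-\log e/\log N}$ into $1+\frac{2\log e}{\log N}$ for $N\geq 9$. The problem is in your upper-bound step, and it is a genuine gap rather than a cosmetic one.

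You reduce the upper bound to the single inequality $\log(N-t)\geq\frac{1}{N}\sum_{n=1}^{N}\log n$, which is a legitimate (and slightly different from the paper) route, but your justification of it fails. First, the intermediate bound you quote, $\frac{1}{N}\sum_{n=1}^{N}\log n<(1+\frac{1}{2N})\log N-\log e$, is not what \Cref{lemma6} gives: the correct upper bound carries an extra $+\frac{2}{N}$, and without it your stated inequality actually contradicts the lower half of \Cref{lemma6}. Second, and more seriously, your closing numerical claim that $-\frac{1}{2}\log N\geq\frac{1}{2N}\log N-\log e$ for $N\geq 9$ is false: it requires $\log N\,(1+\frac{1}{N})\leq 2\log e\approx 2.89$, i.e.\ $N\leq 5$, so it fails for every $N$ in the stated range. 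The root cause is that you invoke only $t\leq N-\sqrt{N}-1$ in this step, and under that constraint alone the key inequality can genuinely fail, e.g.\ $N=100$, $t=89$ gives $\log(N-t)=\log 11\approx 3.46$ while $\frac{1}{N}\log N!\approx 5.25$. The step is salvageable, but only by using the other hypothesis $t\leq\frac{N-1}{4}$: then $N-t\geq\frac{3N+1}{4}$, so $\log(N-t)\geq\log N-\log\frac{4}{3}$, and since $\log e-\log\frac{4}{3}\approx 1.03>\frac{\log N}{2N}+\frac{2}{N}$ for all $N\geq 9$, \Cref{lemma6} gives $\log(N-t)\geq\frac{1}{N}\log N!$ and hence $R\leq 1-\frac{t}{N}$. (The paper reaches $1-\frac{t}{N}$ by a different manipulation, pairing $\log(N-k)+\log(N-t-1+k)\geq\log\bigl((N-1)(N-t)\bigr)$ and using $t\leq\frac{N-1}{4}$ to get $(N-1)(N-t)\geq\frac{N^2}{2}$; in either version it is the constraint $t\leq\frac{N-1}{4}$, not $t\leq N-\sqrt{N}-1$, that makes the upper bound work.)
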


\begin{proof}
From (\ref{R}) and (\ref{eqn:blkpermbound}), it follows that 

\begin{equation}
\begin{split}
1-\frac{\log b_B(N,2t)}{\log N!}\leq R_{B,opt}(N,t)&\leq 1-\frac{\log b_B(N,t)}{\log N!},\\
1-\frac{\log b_G(N,2t)}{\log N!}\leq R_{G,opt}(N,t)&\leq 1-\frac{\log b_G(N,t)}{\log N!}.\\
\end{split}
\label{Rlog}
\end{equation}

By applying \Cref{lemma4} and \Cref{lemma6} to (\ref{Rlog}), when $\min\{N-\sqrt{N}-1,\frac{N-1}{4}\}\geq t\geq 1$ and $N\geq 9$, it follows that
\begin{equation}
\begin{split}
R_{B,opt}(N,t)&\geq 1-\frac{\log \left[\prod\limits_{k=0}^{2t}(N-k)\right]}{\log N!}\\
&>1-\frac{(2t+1)\log N}{(N+\frac{1}{2})\log N -(\log e)N}\\
&>1-\frac{(2t+1)\log N}{N(\log N -\log e)}\\
&>1-\frac{2t+1}{N}\left(1+\frac{2\log e}{\log N}\right),\\
\end{split}
\label{eqn:add3}
\end{equation}

and
\begin{equation}
\begin{split}
 &R_{B,opt}(N,t)\\
 \leq& 1-\frac{\log \left[\prod\limits_{k=1}^{t}(N-k)\right]}{\log N!}\\
 =& 1-\frac{\frac{1}{2}\sum\limits_{k=1}^{t}\SB{\log (N-k)+\log (N-t-1+k)}}{\log N!}\\
 \leq& 1-\frac{\frac{t}{2}\log \SB{(N-1)(N-t)} }{(N+\frac{1}{2})\log N-(\log e)N+2}\\
 \leq& 1-\frac{\frac{t}{2}\log \SB{(N-1)(N-\frac{N-1}{4})} }{(N+\frac{1}{2})\log N-(\log e)N+2}\\
 \leq& 1-\frac{\frac{t}{2}\log \SB{\frac{N^2}{2}} }{(N+\frac{1}{2})\log N-(\log e)N+2}\\
 \leq& 1-\frac{t(\log N -\frac{1}{2})}{N\log N-\frac{1}{2}N}\\
 =&1-\frac{t}{N}.\\
 \end{split}
 \label{eqn:add4}
\end{equation}

Similarly, by applying \Cref{lemma5} and \Cref{lemma6} to (\ref{Rlog}), when $\min\{N-\sqrt{N}-1,\frac{N-1}{4}\}\geq t\geq 1$ and $N\geq 9$, it follows that
\begin{equation}
\begin{split}
R_{G,opt}(N,t)&\geq 1-\frac{\log \left[\prod\limits_{k=0}^{\min\{8t,N-1\}}(N-k)\right]}{\log N!}\\
&>1-\frac{(8t+1)\log N}{(N+\frac{1}{2})\log N -(\log e)N}\\
&>1-\frac{(8t+1)\log N}{N\log N -(\log e)N}\\
&>1-\frac{8t+1}{N}\left(1+\frac{2\log e}{\log N}\right),\\
\end{split}
\label{eqn:add2}
\end{equation}
and
\begin{equation}
\begin{split}
R_{G,opt}(N,t)&\leq 1-\frac{\log \left[\prod\limits_{k=1}^{t}(N-k)\right]}{\log N!}\\
&\leq 1-\frac{t}{N}.
\end{split}
\label{eqn:add1}
\end{equation}

The theorem is proved.
\end{proof}

Inequalities (\ref{eqn:add3})-(\ref{eqn:add1}) indicate that $R=1-\Theta\left(\frac{t}{N}\right)$ is the rate of the $t$-generalized Cayley codes and the $t$-block permutation codes that are order-optimal.

\section{Non-Systematic Permutation Codes in the Generalized Cayley Metric}
\label{sec: non-syscodes}
We studied the optimal rates of $t$-generalized Cayley Codes and $t$-block permutation codes in the previous section. We now seek constructions of order-optimal codes in these metrics. We know from \Cref{lemma3} that any $4t$-block permutation code is also a $t$-generalized Cayley code. In the sequel, we thus focus on the construction of order-optimal $t$-block permutation codes, which is sufficient for obtaining order-optimal generalized Cayley codes.

In \Cref{subsec: nonsysencode}, we present a construction of order-optimal $t$-block permutation codes (\Cref{theo2}). We then develop a decoding scheme for the proposed codes in \Cref{subsec: nonsysdecode}.

\subsection{Encoding Scheme}
\label{subsec: nonsysencode}
Denote the set of all ordered pairs of non-identical elements from $\left[N\right]$ by $P$; then $|P|=N^2-N$. Suppose $q$ is a prime number such that $q\geq |P|$. From \emph{Bertrand's postulate} \cite{ramanujan1919proof}, one can always find a prime number $q$ such that $|P|\leq q\leq 2|P|$.

Let $\upsilon:\ P\to \Db{F}_q$ be an arbitrary injection from $P$ to $\Db{F}_q$, where $\Db{F}_q$ is a Galois field of order $q$. Let $\Se{P}(\Db{F}_q)$ represent the set of all the subsets of $\mathbb{F}_q$ with cardinality $N-1$. We define an injection $\nu: \Db{S}_N\to \Se{P}(\Db{F}_q)$ as follows:

\begin{equation}
\nu(\pi)\triangleq\{\upsilon(p)|p\in A(\pi)\}.
\label{eqn:defig}
\end{equation}
Then, $\nu$ is invertible, namely, one is able to compute $\pi$ based on $\nu(\pi)$.

We then define a class of functions $\alpha^{(q,d)}:\ \Db{S}_N\to \Db{F}_q^{2d-1}$, as follows:
\begin{equation}\label{eqn:defialpha1}
\alpha^{(q,d)}(\pi)\triangleq\left(\alpha_1,\alpha_2,\cdots,\alpha_{2d-1}\right),
\end{equation}
where 
\begin{eqnarray}\label{alphadefi}
\left\{
\begin{array}{rcll}
\alpha_1&\equiv &\sum\limits_{b\in \nu(\pi)}b&\mod{q},\\
\alpha_2&\equiv&\sum\limits_{b\in \nu(\pi)} b^2&\mod{q},\\
&\vdots&&\\
\alpha_{2d-1}&\equiv&\sum\limits_{b\in \nu(\pi)} b^{2d-1}&\mod{q}.\\
\end{array}
\right.
\end{eqnarray}

The following \Cref{algo: Encode1} describes the main steps of the proposed encoding scheme, the correctness of which can be verified by \Cref{lemma7} and \Cref{theo2}.

\begin{algorithm}
\caption{Encoding Scheme}\label{algo: Encode1}
\begin{algorithmic}[1]
\Require 
\Statex Minimum block permutation distance: $2t+1$;
\Statex Codelength: $N$;
\Statex Alphabet size: $q$, where $q$ is a prime number such that $N^2-N\leq q<2(N^2-N)$;
\Ensure
\Statex Codebook $\Se{C}$ of a $t$-block permutation code;
\State For each $\pi\in \Db{S}_N$, compute $A(\pi)$, $\nu(\pi)$, and its syndrome $\alpha^{(q,2t)}(\pi)$ ($\alpha^{(q,2t)}(\pi)\in\Db{F}_q^{4t-1}$), sequentially, where $A(\pi)$, $\nu(\pi)$, $\alpha^{(q,d)}$ are defined in \Cref{defi3}, (\ref{eqn:defig}), (\ref{eqn:defialpha1}) and (\ref{alphadefi}), respectively;
\State For each $\bm{\alpha}\in \Db{F}_q^{4t-1}$, denote the set consisting of all permutations with the syndrome $\bm{\alpha}$ by $\Se{C}_{\bm{\alpha}}(N,t)$;
\State Find $\bm{\alpha}$ such that $\Se{C}_{\bm{\alpha}}(N,t)$ is of the maximum cardinality;
\State \textbf{return} $\Se{C}=\Se{C}_{\bm{\alpha}}(N,t)$.
\end{algorithmic}
\end{algorithm}

The following \Cref{lemma7} states that the cardinality of the symmetric difference of $\nu(\pi_1),\nu(\pi_2)$ for any two distinct permutation $\pi_1,\pi_2\in\Db{S}_N$ is greater than $2d$ if their syndromes $\alpha^{(q,d)}(\pi_1)$ and $\alpha^{(q,d)}(\pi_2)$ are identical. Therefore, their block permutation distance is greater than $d$ based on \Cref{lemma1}. This lemma will be repeatedly used in the rest of the paper for the constructions of order-optimal permutation codes in the block permutation distance.

\begin{lemma}\label{lemma7}
For all $\pi_1$, $\pi_2\in\Db{S}_N$ such that $\pi_1\neq \pi_2$, if $\alpha^{(q,d)}(\pi_1)=\alpha^{(q,d)}(\pi_2)$, then,
\begin{equation}
\lvert \nu(\pi_1)\Delta \nu(\pi_2) \rvert>2d.
\label{eqn: diff}
\end{equation}

\end{lemma}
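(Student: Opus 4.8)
The plan is to argue by contradiction, using the fact that the power sums $\alpha_1,\dots,\alpha_{2d-1}$ encode the elementary symmetric functions of the sets, so that a small symmetric difference would force the two sets to be equal. Suppose $\pi_1\neq\pi_2$ but $\alpha^{(q,d)}(\pi_1)=\alpha^{(q,d)}(\pi_2)$, and write $S_1=\nu(\pi_1)$, $S_2=\nu(\pi_2)$, both subsets of $\Db{F}_q$ of size $N-1$. Let $U=S_1\setminus S_2$ and $V=S_2\setminus S_1$; since $|S_1|=|S_2|$ we have $|U|=|V|=:r$, and $|S_1\Delta S_2|=2r$. Since $\pi_1\neq\pi_2$ and $\nu$ is an injection, $S_1\neq S_2$, so $r\geq 1$. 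The goal is to show $r>d$.

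First I would observe that for each exponent $j$ with $1\leq j\leq 2d-1$, subtracting the $j$-th power-sum identities of $S_1$ and $S_2$ and cancelling the common part $S_1\cap S_2$ gives
\begin{equation*}
\sum_{u\in U}u^{j}\equiv\sum_{v\in V}v^{j}\pmod{q},\qquad 1\leq j\leq 2d-1.
\end{equation*}
Now suppose toward a contradiction that $r\leq d$, i.e. $2r-1\leq 2d-1$, so in particular the above holds for all $j=1,\dots,2r-1$. Consider the two monic polynomials $f(x)=\prod_{u\in U}(x-u)$ and $g(x)=\prod_{v\in V}(x-v)$ over $\Db{F}_q$, each of degree $r$. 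By Newton's identities over $\Db{F}_q$ (valid here because $q=|P|\geq N^2-N$ is large, in particular $q>2d\geq 2r$, so the integers $1,\dots,r$ are invertible mod $q$), the first $r$ power sums of a multiset of size $r$ determine its elementary symmetric polynomials, hence determine the monic polynomial. Matching power sums $p_1,\dots,p_r$ of $U$ and $V$ therefore yields $f=g$ as polynomials over $\Db{F}_q$, hence $U=V$ as multisets — but $U$ and $V$ are disjoint by construction, which forces $U=V=\emptyset$, i.e. $r=0$, contradicting $r\geq 1$. Hence $r\geq d+1$, i.e. $|\nu(\pi_1)\Delta\nu(\pi_2)|=2r\geq 2d+2>2d$.

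The main obstacle is the clean invocation of Newton's identities over the finite field $\Db{F}_q$: one must check that no denominator appearing in the recursive passage from power sums to elementary symmetric functions vanishes mod $q$, which is exactly where the hypothesis $q\geq N^2-N$ (together with $d$ being at most roughly $N/4$ or so from the ambient assumptions) is used to guarantee $q>2d\geq 2r$. A secondary point worth stating carefully is that $U$ and $V$ are genuine sets (not multisets with repetition), so that "$f=g$" really does give "$U=V$"; this is immediate since $S_1,S_2$ are sets, but it is what makes the disjointness of $U$ and $V$ yield the contradiction. Everything else — the cancellation of $S_1\cap S_2$ and the bookkeeping $|U|=|V|$ — is routine.
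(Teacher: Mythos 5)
Your argument is correct, but it takes a different route from the paper's. The paper also reduces to the cancellation step $\sum_{u\in D_1}u^j=\sum_{v\in D_2}v^j$ for $j=1,\dots,2d-1$ and argues by contradiction from $k=|D_1|=|D_2|\leq d$, but instead of Newton's identities it observes that the vector $(1,\dots,1,-1,\dots,-1)$ lies in the nullspace of the $2k\times 2k$ Vandermonde matrix built on the $2k$ elements of $D_1\cup D_2$; the determinant $\prod_{i<j}(x_i-x_j)$ must then vanish in $\Db{F}_q$, forcing two of these elements to coincide and contradicting the disjointness of $D_1$ and $D_2$. The Vandermonde route is characteristic-free (no division by the integers $1,\dots,r$), whereas your route needs $1,\dots,r$ invertible mod $q$; that is genuinely true here, but your stated justification ``$q>2d\geq 2r$'' is not what the lemma's hypotheses give you (no bound on $d$ appears in the statement) — the clean justification is simply $r\leq N-1<N^2-N\leq q$. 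On the other hand, your approach is arguably more economical (it uses only the first $r$ power sums rather than $2k$ of them) and meshes naturally with the decoder, which already converts power sums to elementary symmetric polynomials via Newton's identities in (\ref{eqn:ni}). Both proofs are valid.
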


\begin{proof} The proof is in Appendix \ref{prooflemma7}.
\end{proof}

Note that the function $\alpha^{(q,2t)}$ induces a map from $\Db{S}_N$ to $\Db{F}_q^{4t-1}$ and divides $\Db{S}_N$ into $q^{4t-1}$ subsets based on their syndromes $\bm{\alpha}=(\alpha_1,\alpha_2,\cdots,\alpha_{4t-1})$. We next prove that each such subset is a $t$-block permutation code, which is stated as the following theorem.

\begin{theo}\label{theo2}
For all $\bm{\alpha}\in\Db{F}_q^{4t-1}$, suppose:
\begin{equation}
\Se{C}_{\bm{\alpha}}(N,t)=\{\pi|\pi\in\Db{S}_N,\ \alpha^{(q,2t)}(\pi)=\bm{\alpha}\},
\end{equation}
where $\alpha^{(q,2t)}$ is defined in (\ref{eqn:defialpha1}) and (\ref{alphadefi}). Then $\forall$ $\pi_1,\pi_2\in \Se{C}_{\bm{\alpha}}(N,t)$, $\pi_1\neq\pi_2$, the following inequality holds,
\begin{equation}
d_B(\pi_1,\pi_2)\geq 2t+1.
\end{equation}
\end{theo}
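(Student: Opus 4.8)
The plan is to chain together the three results that have just been established. Given $\pi_1,\pi_2\in\Se{C}_{\bm\alpha}(N,t)$ with $\pi_1\neq\pi_2$, the hypothesis $\alpha^{(q,2t)}(\pi_1)=\alpha^{(q,2t)}(\pi_2)=\bm\alpha$ is exactly the premise of \Cref{lemma7} with $d=2t$. So the first step is to invoke \Cref{lemma7} to conclude $\lvert \nu(\pi_1)\Delta\nu(\pi_2)\rvert > 2\cdot 2t = 4t$.

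The second step is to translate the symmetric-difference bound on $\nu(\pi_1),\nu(\pi_2)$ back into a statement about characteristic sets. Since $\upsilon:P\to\Db{F}_q$ is an injection and $\nu(\pi)=\{\upsilon(p)\mid p\in A(\pi)\}$, the map $\upsilon$ restricted to the relevant pairs is a bijection between $A(\pi_1)\cup A(\pi_2)$ and $\nu(\pi_1)\cup\nu(\pi_2)$ that respects membership in each set. Hence $\lvert A(\pi_1)\Delta A(\pi_2)\rvert = \lvert \nu(\pi_1)\Delta\nu(\pi_2)\rvert$, and in particular $\lvert A(\pi_2)\setminus A(\pi_1)\rvert = \tfrac12\lvert\nu(\pi_1)\Delta\nu(\pi_2)\rvert > 2t$. (One should note $\lvert A(\pi_1)\rvert=\lvert A(\pi_2)\rvert=N-1$, so the symmetric difference splits evenly between the two set differences; this also matches \Cref{lemma1}, which asserts $\lvert A(\pi_2)\setminus A(\pi_1)\rvert=\lvert A(\pi_1)\setminus A(\pi_2)\rvert$.)

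The third step applies \Cref{lemma1}, which gives $d_B(\pi_1,\pi_2)=\lvert A(\pi_2)\setminus A(\pi_1)\rvert$. Combining, $d_B(\pi_1,\pi_2) > 2t$, and since $d_B$ takes integer values this means $d_B(\pi_1,\pi_2)\geq 2t+1$, which is the claim.

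There is really no serious obstacle here — the theorem is a bookkeeping corollary of \Cref{lemma7} and \Cref{lemma1}, with \Cref{lemma7} (proved in the appendix, presumably via a Vandermonde/power-sum argument over $\Db{F}_q$ showing that too small a symmetric difference forces the first $2d$ power sums to disagree) doing the real work. The only point requiring a moment's care is the bijection argument in step two: one must use injectivity of $\upsilon$ to be sure that distinct consecutive pairs map to distinct field elements, so that no collisions artificially shrink the symmetric difference, and one must use $\lvert A(\pi_1)\rvert=\lvert A(\pi_2)\rvert$ to pass from the symmetric difference to the one-sided difference. I would also briefly remark that \Cref{theo2} immediately implies each nonempty $\Se{C}_{\bm\alpha}(N,t)$ is a valid $t$-block permutation code, justifying the output of \Cref{algo: Encode1}.
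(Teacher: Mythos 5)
Your proposal is correct and follows essentially the same route as the paper's own proof: apply \Cref{lemma7} with $d=2t$, identify $\lvert A(\pi_1)\Delta A(\pi_2)\rvert$ with $\lvert\nu(\pi_1)\Delta\nu(\pi_2)\rvert$ via the injection $\upsilon$, and use \Cref{lemma1} together with the fact that the one-sided difference is half the symmetric difference to get $d_B(\pi_1,\pi_2)>2t$, hence $\geq 2t+1$. Your added remarks on injectivity and on $\lvert A(\pi_1)\rvert=\lvert A(\pi_2)\rvert=N-1$ simply make explicit steps the paper leaves implicit.
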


\begin{proof}
Let $d=2t$ in \Cref{lemma7} and \Cref{lemma1}. Then, 
\begin{equation}
\begin{split}
d_B(\pi_1,\pi_2)&=\frac{1}{2}\lvert A(\pi_1)\Delta A(\pi_2)\rvert\\
&=\frac{1}{2}\lvert \nu(\pi_1)\Delta \nu(\pi_2) \rvert\\
&>\frac{1}{2}(2\cdot 2t)=2t,
\end{split}
\label{eqn:disdiff}
\end{equation}
where $\Delta$ refers to the symmetric difference of sets. 
\end{proof}

\begin{exam}\label{exam:add2} Suppose $N=10$, $t=2$, $q=97>10^2-10$. Define $\upsilon(i,j)$ for all $i\neq j\in\left[10\right]$ as follows:
\begin{equation*}
\upsilon(i,j)=10(i-1)+j-1.
\end{equation*} 

Let $\pi_1=(10,9,8,7,6,5,4,3,2,1)$, and $\pi_2=(9,6,5,8,2,4,7,3,10,1)$. Suppose $\bm{\alpha}=(83,28,80,77,40,3,88)$. Then,
\begin{equation*}
\alpha^{(q,2t)}(\pi_1)=\alpha^{(q,2t)}(\pi_2)=\bm{\alpha}.
\end{equation*}
Observe that $d_B(\pi_1,\pi_2)=8>4=2t$. This example is in accordance with \Cref{theo2}.
\end{exam}

\Cref{theo2} implies that $\{\Se{C}_{\bm{\alpha}}(N,t):\bm{\alpha}\in\Db{F}_q^{4t-1}\}$ is a partition of $\Db{S}_N$, where each component $\Se{C}_{\bm{\alpha}}(N,t)$ is a $t$-block permutation code indexed by $\bm{\alpha}$. Suppose $\Se{C}_{\bm{\alpha}_{\Tx{max}}}(N,t)$ is the one with the maximal cardinality, whose syndrome is $\bm{\alpha}_{\Tx{max}}$. It follows from the \textit{Pigeonhole Principle} that:
\begin{equation}
\lvert \Se{C}_{\bm{\alpha}_{\Tx{max}}}(N,t)\rvert\geq \frac{N!}{\lvert \Db{F}_q^{4t-1} \rvert}= \frac{N!}{q^{4t-1}}.
\end{equation} 

Denote the rate of $\Se{C}_{\bm{\alpha}_{\Tx{max}}}(N,t)$ by $R(\Se{C}_1)$. Given that $N^2-N=|P|\leq q <2|P|=2N^2-2N<2N^2$, it follows from \Cref{lemma6} that for $N>e^2$ (note that here $e$ refers to the base of the natural logarithm),
\begin{equation}
\begin{split}
R(\Se{C}_1)&\geq 1-\frac{4t\log q}{\log N!}> 1-\frac{8t\log N+4t}{\log N!}\\
&> 1-\frac{8t(\log N+\frac{1}{2})}{(N+\frac{1}{2})\log N-(\log e)N}\\
&> 1-\frac{8t}{N}\left(\frac{\log N+\frac{1}{2}}{\log N-\log e}\right)\\
&=1-\frac{8t}{N}\MB{1+\frac{\frac{1}{2}+\log e}{\log N}\SB{1+\frac{\log e}{\log N-\log e}}}\\
&>1-\frac{8t}{N}\left(1+\frac{2\log e+1}{\log N}\right).
\end{split}
\end{equation}

Then, $\Se{C}_{\bm{\alpha}_{\Tx{max}}}(N,t)$ is an order-optimal $t$-block permutation code.

\subsection{Decoding Scheme}
\label{subsec: nonsysdecode}
In \Cref{subsec: nonsysencode}, we map each permutation $\pi\in\Db{S}_N$ to a unique set $\nu(\pi)\in\Se{P}(\Db{F}_q)$ as defined in equation (\ref{eqn:defig}), where $N^2-N\leq q\leq 2N^2-2N$ and $\Se{P}(\Db{F}_q)$ represents the set consisting of all subsets of $\Db{F}_q$ with cardinality $N-1$. Suppose the transmitter sends $\pi\in\Db{S}_N$ and the receiver receives $\pi'$, where $d_{G}(\pi,\pi')\leq t$. In the decoding scheme, our objective is to compute $\nu(\pi)$ from the a priori $\bm{\alpha}$ and the received permutation $\pi'$. The strategy is, for each set $B\in\Se{P}(\Db{F}_q)$, map $B$ to a polynomial $f(X;B)$ defined as follows:
\begin{equation}
f(X;B)\triangleq\prod\limits_{b\in B} \left(X+b\right).
\label{esp}
\end{equation}

We call $f(X;B)$ the \textbf{\emph{characteristic function}} of set $B$. All the polynomials as well as the polynomial operations are defined on $\Db{F}_q$. Let $a_i^B$, $0\leq i\leq N-1$, represent the coefficients of $X^{N-1-i}$ in $f(X;B)$. Then, $a_0^B=1$.

Given the a priori agreement on the codebook, i.e., the choice of $\bm{\alpha}$, and the received permutation $\pi'$, the value of the first $4t$ coefficients of $f(X;B)$, $f(X;B')$ can be computed, where $B=\nu(\pi)$ and $B'=\nu(\pi')$, as we shall shortly show. We then use these coefficients to derive $\nu(\pi)$. This coding strategy bears resemblance to that proposed in \cite{Dolecek}, the key difference being that the coefficients of the polynomials we discussed are partially known, thus making our decoding scheme more complicated, whereas those in \cite{Dolecek} are fully known. 

Note that $a_i^B, 1\leq i\leq N-1$, in (\ref{eqn:ni}) is the $i$-th elementary symmetric polynomial of the elements in $B$. Also note that the $i$-th component $\alpha_i, 1\leq i\leq 4t-1$, of the value $\bm{\alpha}=\alpha^{(q,2t)}(\pi)$ is exactly the $i$-th power sum of the elements in $B=\nu(\pi)$. We know from \emph{Newton's identities}\cite{zeilberger1984combinatorial} that there exists a bijection between the $(4t-1)$ power sums and the first $(4t-1)$ elementary symmetric polynomials of elements in $B$, as described below:

\begin{equation}
\begin{cases}
&a_0^B=1,\\
&a_1^B=\alpha_1,\\
&a_2^B=2^{-1}(a_1^B\alpha_1-\alpha_2),\\
&a_3^B=3^{-1}(a_2^B\alpha_1-a_1^B\alpha_2+\alpha_3),\\
&\vdots\\
&a_{4t-1}^B=(4t-1)^{-1}(a_{4t-2}^B\alpha_1-a_{4t-3}^B\alpha_2+\cdots+\alpha_{4t-1}).
\end{cases}
\label{eqn:ni}
\end{equation}

Denote $a_i^B$, $a_i^{B'}$ by $a_i$, $a'_i$, $0\leq i\leq N-1$, respectively, for simplicity. Let $r(B)=(a_1,a_2,\cdots,a_{4t-1})$, $r(B')=(a_1',a_2',\cdots,a_{4t-1}')$. The receiver uses the a priori $\bm{\alpha}$ to compute $r(B)$ and to derive $r(B')$ from $B'$, where $B=\nu(\pi)$ and $B'=\nu(\pi')$. Note that $\pi$ can be computed from $B=\nu(\pi)$ since $\nu$ is an injection from $\Db{S}_N$ to $\Se{P}(\Db{F}_q)$. Thus the objective is to compute $B$ from $r(B)$, $r(B')$, and $B'$. 

Suppose $D_1=B\setminus B'$, $D_2=B'\setminus B$, $D_3=B\cap B'$. Let $f_1= f(X;B)$ and $f_2=f(X;B')$. Then,

\begin{equation}
\begin{split}
g_1(X)&=\frac{f_1}{GCD(f_1,f_2)}=\prod\limits_{b\in D_1}(X+b),\\
g_2(X)&=\frac{f_2}{GCD(f_1,f_2)}=\prod\limits_{b\in D_2}(X+b),\\
g_3(X)&=GCD(f_1,f_2)=\prod\limits_{b\in D_3}(X+b).\\
\end{split}
\label{eqn:g}
\end{equation}

Notice that $g_1,g_2,g_3$ uniquely determine $f_1,f_2$, so they are sufficient for computing $\pi$. We next seek to compute $g_1,g_2,g_3$ from $r(B)$ and $f_2=g_2\cdot g_3$, from which $f_1=g_1\cdot g_3$ can be determined. Let $(h_1,h_2)=(X^{t-k}g_2,X^{t-k}g_1)$, where $k=\deg g_1=\deg g_2=\lvert D_1\rvert=\lvert D_2\rvert\leq t$. Then $(h_1,h_2)$ satisfy $h_1\cdot f_1=h_2\cdot f_2$. We will also prove later in \Cref{theo3} that $g_1,g_2,g_3$ can be computed from an arbitrary nonzero solution $(h_1,h_2)$ of $h_1\cdot f_1=h_2\cdot f_2$. Therefore, any nonzero solution to $h_1\cdot f_1=h_2\cdot f_2$ is sufficient for computing $\pi$. Also notice that the first $4t$ coefficients of $h_1\cdot f_1$ and $h_2\cdot f_2$ uniquely determine $r(B)$ and $r(B')$, respectively, by (\ref{eqn:ni}), if $h_1,h_2$ are known. In order to compute $g_1,g_2,g_3$, it is sufficient to find $h_1$ and $h_2$, both of degree $t$, such that the first $4t$ coefficients of $h_1\cdot f_1$ and that of $h_2\cdot f_2$ are equal, i.e., the following inequality holds,
\begin{equation}
\deg(h_1\cdot f_1-h_2\cdot f_2)<N-3t.
\label{eqn:decode}
\end{equation} 

For each $\V{c}\in \Db{F}_q^{2t}$, suppose
\begin{equation}
\V{c}=\begin{pmatrix}c_1,\cdots,c_t,-c'_1,\cdots,-c'_t\end{pmatrix}^{T},
\label{eqn:c}
\end{equation}
and define the polynomials $h_1(\V{c}),h_2(\V{c})$ of degree $t$  as follows, 
\begin{equation}
\begin{split}
h_1(\V{c})&\triangleq X^t+c_1 X^{t-1}+c_2 X^{t-2}+\cdots+c_t,\\
h_2(\V{c})&\triangleq X^t+c'_1 X^{t-1}+c'_2 X^{t-2}+\cdots+c'_t.\\
\end{split}
\label{eqn:h}
\end{equation}

Define
\begin{equation}
\begin{split}
&\V{A}=\\
&
\begin{pmatrix}
1        &0 &\cdots&0 & 1       &0 & \cdots&0\\
a_1      & 1        & \ddots&\vdots& a'_1      & 1        & \ddots&\vdots\\
\vdots      & \vdots      & \ddots& 0&\vdots      & \vdots     & \ddots&0\\
a_{t-1}      & a_{t-2}     & \cdots  &   1&a'_{t-1}      & a'_{t-2}      &  \cdots &   1\\
\vdots   & \vdots   &  \ddots &  \vdots&\vdots   & \vdots   &  \ddots &  \vdots\\
a_{4t-2} & a_{4t-3} & \cdots & a_{3t-1}&a'_{4t-2} & a'_{4t-3} & \cdots & a'_{3t-1}
\end{pmatrix}
,\\
\end{split}
\label{eqn:a}
\end{equation}
and
\begin{equation}
\V{b}=\begin{pmatrix}a'_1,\cdots,a'_{4t-1}\end{pmatrix}^{T}-\begin{pmatrix}a_1,\cdots,a_{4t-1}\end{pmatrix}^T.
\label{eqn:b}
\end{equation}

The following \Cref{algo: Decode1} describes the decoding algorithm of the code constructed in \Cref{subsec: nonsysencode}. The correctness of this algorithm is proved by \Cref{lemma8} and \Cref{theo3}.

\begin{algorithm}
\caption{Decoding Algorithm}\label{algo: Decode1}
\begin{algorithmic}[1]
\Require 
\Statex Syndrome: $\bm{\alpha}$;
\Statex Received sequence: $\pi'$;
\Ensure
\Statex Estimated codeword: $\hat{\pi}$;
\State Compute the coefficients $\LB{a'_i}_{i=1}^{4t-1}$ of $f_2$ and $B'$ from $\pi'$;
\State Compute the coefficients of $\LB{a_i}_{i=1}^{4t-1}$ of $f_1$ from $\bm{\alpha}$ by \textit{Newton's identities};
\State Compute $\V{A}$ and $\V{b}$ using (\ref{eqn:a}) and (\ref{eqn:b});
\State Find a nonzero solution $\V{c}$ to $\V{A}\V{c}=\V{b}$, $\V{c}=\begin{pmatrix}c_1,\cdots,c_{2t}\end{pmatrix}^{T}$;
\State Compute $h_1=X^t+c_1X^{t-1}+c_2X^{t-2}+\cdots+c_t$, $h_2=X^t-c_{t+1}X^{t-1}-c_{t+2}X^{t-2}-\cdots-c_{2t}$;
\State Compute $h=\gcd (h_1,h_2)$, $v_1=\frac{h_2}{h}$, $v_2=\frac{h_1}{h}$;
\State Let the set of negative roots of $v_1$ and $v_2$ be $V_1$ and $V_2$, respectively;
\State Compute $\hat{\pi}=\nu^{-1}\SB{V_1\cup( B' \setminus V_2)}$, where $\nu$ is defined in (\ref{eqn:defig});
\State \textbf{return} $\hat{\pi}$.
\end{algorithmic}
\end{algorithm}

\Cref{lemma8} presents an equivalent linear equation to find a solution that satisfies (\ref{eqn:decode}), and \Cref{theo3} shows how to compute $\pi$ from this intermediate value.

\begin{lemma}\label{lemma8} Suppose $\V{A}\in \Db{F}_q^{(4t-1)\times(2t)}$, $\V{b}\in \Db{F}_q^{4t-1}$ are defined in (\ref{eqn:a}) and (\ref{eqn:b}), respectively. Consider the following equation defined on $\Db{F}_q$:
\begin{equation}
\V{A}\V{c}=\V{b}.
\label{eqn:dec}
\end{equation}

For any vector $\V{c}\in\Db{F}_q^{2t}$, $\V{c}$ is a nonzero solution to (\ref{eqn:dec}) if and only if $(h_1(\V{c}),h_2(\V{c}))$ is a nonzero solution to (\ref{eqn:decode}).
\end{lemma}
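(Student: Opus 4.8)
The plan is to translate the polynomial condition \eqref{eqn:decode} into the matrix equation \eqref{eqn:dec} by writing out the coefficients of $h_1(\V{c})\cdot f_1$ and $h_2(\V{c})\cdot f_2$ explicitly and comparing them up to degree $N-3t$. First I would observe that since $\deg h_i(\V{c})=t$ and $\deg f_i = N-1$, the product $h_i(\V{c})\cdot f_i$ has degree $N-1+t$, and condition \eqref{eqn:decode} asks that the coefficients of $X^{N-1+t}, X^{N-2+t},\dots, X^{N-3t}$ in $h_1(\V{c})\cdot f_1 - h_2(\V{c})\cdot f_2$ all vanish — that is, the top $4t$ coefficients agree. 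Writing $h_1(\V{c}) = X^t + \sum_{j=1}^{t} c_j X^{t-j}$ and $f_1 = \sum_{i=0}^{N-1} a_i X^{N-1-i}$ with $a_0=1$, the coefficient of $X^{N-1+t-m}$ in the product $h_1(\V{c})\cdot f_1$ is $\sum_{j=0}^{m} c_j a_{m-j}$ (setting $c_0=1$), for $0\le m\le 4t-1$; similarly for $h_2(\V{c})\cdot f_2$ with primed quantities. Subtracting, the vanishing of the coefficient of $X^{N-1+t-m}$ becomes $\sum_{j=1}^{t} c_j a_{m-j} - \sum_{j=1}^{t} c'_j a'_{m-j} = a'_m - a_m$ for each $m$.

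Next I would check the edge cases $m=0$ and small $m$: at $m=0$ the left side is empty and $a'_0 - a_0 = 1-1 = 0$, so that equation is vacuous and drops out (this is why the matrix has $4t-1$ rows indexed by $m=1,\dots,4t-1$ rather than $4t$). For $1\le m \le t-1$, the sums $\sum_{j=1}^{t} c_j a_{m-j}$ only pick up terms with $m-j\ge 0$, i.e., $j\le m$, which accounts for the staircase of zeros in the upper-left and upper-right $t\times t$ blocks of $\V{A}$ in \eqref{eqn:a}; for $m\ge t$ all $t$ terms are present, consistent with the lower rows of $\V{A}$. Collecting the $4t-1$ equations for $m=1,\dots,4t-1$ into matrix form, with unknown vector $(c_1,\dots,c_t,-c'_1,\dots,-c'_t)^T = \V{c}$ as in \eqref{eqn:c}, the coefficient matrix is exactly $\V{A}$ of \eqref{eqn:a} (the right $t$ columns carry the $a'$ entries and the sign is absorbed into the $-c'_j$ components of $\V{c}$), and the right-hand side is $(a'_1-a_1,\dots,a'_{4t-1}-a_{4t-1})^T = \V{b}$ of \eqref{eqn:b}. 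This establishes that $\V{A}\V{c}=\V{b}$ holds if and only if \eqref{eqn:decode} holds for the associated pair $(h_1(\V{c}),h_2(\V{c}))$.

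Finally I would dispense with the word ``nonzero'': the bijection $\V{c}\leftrightarrow (h_1(\V{c}),h_2(\V{c}))$ given by \eqref{eqn:h} is affine and in particular $\V{c}=\V{0}$ corresponds to $(h_1,h_2)=(X^t,X^t)$, which is the monic pair of degree $t$ with all lower coefficients zero; since $h_1(\V{c}),h_2(\V{c})$ are always monic of degree exactly $t$, neither polynomial is ever the zero polynomial, so ``nonzero solution'' on the polynomial side must be interpreted as $(h_1,h_2)\neq(X^t,X^t)$, equivalently $\V{c}\neq\V{0}$ — and the correspondence matches these up, completing the equivalence. The only mildly delicate point is bookkeeping the index ranges so that the staircase structure of $\V{A}$ and the omission of the $m=0$ row are correctly reflected; there is no real obstacle beyond careful indexing, since everything reduces to equating coefficients of a polynomial identity over $\Db{F}_q$.
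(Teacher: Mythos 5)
Your proposal is correct and follows essentially the same route as the paper's proof: expand the top $4t$ coefficients of $h_1(\V{c})\cdot f_1$ and $h_2(\V{c})\cdot f_2$, note the leading coefficients cancel automatically, and recognize the remaining $4t-1$ coefficient equations as exactly $\V{A}\V{c}=\V{b}$ with the sign absorbed into the $-c'_j$ entries of $\V{c}$. Your added remark on interpreting ``nonzero'' (i.e., $\V{c}\neq\V{0}$ corresponding to $(h_1,h_2)\neq(X^t,X^t)$) is a harmless clarification the paper leaves implicit.
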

\begin{proof} The proof is in Appendix \ref{prooflemma8}. 
\end{proof}

\begin{theo}\label{theo3} Let $\V{c}$ be an arbitrary nonzero solution to (\ref{eqn:dec}), and $h_1=h_1(\V{c})$, $h_2=h_2(\V{c})$. Denote $h,v_1,v_2$ by the following equations,
\begin{equation}
h=\text{GCD}(h_1,h_2), v_1=\frac{h_2}{h},\ v_2=\frac{h_1}{h}.
\label{eqn:GCD}
\end{equation} 

Suppose $V_1,V_2$ are the sets of the additive inverses of roots of $v_1,v_2$, respectively. Then $\pi$ can be computed from the following equation:
\begin{equation*}
\pi=\nu^{-1}\left(V_1\cup\left(B'\setminus V_2\right)\right).
\end{equation*} 
Recall $B'=\nu(\pi')$, where $\nu$ is defined in (\ref{eqn:defig}).
\end{theo}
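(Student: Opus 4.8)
The plan is to show that the sets $D_1 = B\setminus B'$ and $D_2 = B'\setminus B$ (hence the full set $B = \nu(\pi)$) can be recovered from an arbitrary nonzero solution $\V{c}$ of $\V{A}\V{c}=\V{b}$, even though such a solution need not be the ``intended'' one $(X^{t-k}g_2, X^{t-k}g_1)$. By \Cref{lemma8}, a nonzero $\V{c}$ yields polynomials $h_1 = h_1(\V{c})$, $h_2 = h_2(\V{c})$, both monic of degree $t$, satisfying $\deg(h_1 f_1 - h_2 f_2) < N - 3t$, i.e.\ $h_1 f_1$ and $h_2 f_2$ agree in their top $4t$ coefficients. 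The first step is to translate this into a statement over the full polynomials: write $f_1 = g_1 g_3$, $f_2 = g_2 g_3$ as in (\ref{eqn:g}), where $\gcd(g_1,g_2)=1$, $k=\deg g_1=\deg g_2 = |D_1|=|D_2|\le t$, and $\deg g_3 = N-1-k$. Since $\deg(h_1 f_1 - h_2 f_2) < N-3t \le N-3k = \deg(g_3) - (k-1)$... more carefully, I want to argue that the degree bound forces $g_3 \mid (h_1 g_1 - h_2 g_2)$, and since $\deg(h_1 g_1 - h_2 g_2) \le 2t < N-1-k = \deg g_3$ would then force $h_1 g_1 = h_2 g_2$ exactly — except the degree of $g_3$ could be small if $k$ is close to $t$; one checks $N - 1 - k \ge N - 1 - t > 2t$ under the standing hypothesis $t \le (N-1)/4 < (N-1)/3$, so indeed $h_1 g_1 = h_2 g_2$ identically.

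The second step is purely about factorization in $\Db{F}_q[X]$. From $h_1 g_1 = h_2 g_2$ with $\gcd(g_1,g_2)=1$, we get $g_2 \mid h_1$ and $g_1 \mid h_2$; writing $h_1 = g_2\, h$ and $h_2 = g_1\, h'$ and substituting gives $g_2 g_1 h = g_1 g_2 h'$, so $h = h'$ and $h_1 = g_2 h$, $h_2 = g_1 h$. Hence $h = \gcd(h_1,h_2)$ up to a unit (and since $g_1, g_2$ are monic with no common factor and $h_1,h_2$ monic, $h$ is monic and equals $\gcd(h_1,h_2)$ exactly), so $v_1 = h_2/h = g_1$ and $v_2 = h_1/h = g_2$. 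By (\ref{eqn:g}), $g_1 = \prod_{b\in D_1}(X+b)$ and $g_2 = \prod_{b\in D_2}(X+b)$, so the sets of additive inverses of the roots of $v_1, v_2$ are exactly $V_1 = D_1 = B\setminus B'$ and $V_2 = D_2 = B'\setminus B$. Therefore $B = D_1 \cup (B'\setminus D_2) = V_1 \cup (B'\setminus V_2)$, and since $\nu$ is an injection $\Db{S}_N \to \Se{P}(\Db{F}_q)$, we recover $\pi = \nu^{-1}(V_1\cup(B'\setminus V_2))$.

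The main obstacle is the first step: passing from the truncated-agreement condition (\ref{eqn:decode}) on $h_1 f_1 - h_2 f_2$ to the exact identity $h_1 g_1 = h_2 g_2$. This requires carefully combining the degree bound $N-3t$ with the known degrees of $g_1,g_2,g_3$ and with the a priori fact that $h_1,h_2$ have degree exactly $t$; the roots of $g_3$ (the common factor $f_1$ and $f_2$ share) must be shown to divide $h_1 g_1 - h_2 g_2$, which then has too low a degree to be nonzero. One must also confirm that the recovered $\V{c}$ is genuinely nonzero and that $h \neq 0$, so the divisions in (\ref{eqn:GCD}) are well-defined — this uses \Cref{lemma8} together with the guaranteed existence of the intended solution $(X^{t-k}g_2, X^{t-k}g_1)$, which certifies that (\ref{eqn:dec}) has a nonzero solution in the first place. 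The remaining factorization arguments are routine given unique factorization in $\Db{F}_q[X]$.
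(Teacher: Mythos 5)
Your overall plan is the same as the paper's: first upgrade the truncated agreement (\ref{eqn:decode}) to the exact identity $h_1\cdot f_1=h_2\cdot f_2$ by a degree argument, then use coprimality and unique factorization in $\Db{F}_q[X]$ to read off $D_1$ and $D_2$. Your second step is correct and in fact slightly sharper than the paper's: from $h_1g_1=h_2g_2$ and $\gcd(g_1,g_2)=1$ you identify $v_1=g_1$ and $v_2=g_2$ exactly, whereas the paper only divides out a common factor $f=f_1/v_1=f_2/v_2$, shows $V_1\cup V_3=B$ and $V_2\cup V_3=B'$, and defers the identification $V_i=D_i$ to a remark after the theorem. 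Your appeal to the intended solution $(X^{t-k}g_2,X^{t-k}g_1)$ to certify that a nonzero solution of (\ref{eqn:dec}) exists also matches the paper's setup.

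The one genuine problem is in your first step. You propose to ``show that $g_3$ divides $h_1g_1-h_2g_2$'' and then conclude from $\deg(h_1g_1-h_2g_2)\leq 2t<\deg g_3$ that the difference vanishes. That divisibility cannot be established independently: for a polynomial of degree at most $2t<\deg g_3$, being divisible by $g_3$ is the same as being zero, so your premise is exactly the conclusion you are trying to reach, and there is no direct route to it. The argument that works (and is the paper's) is one line: $h_1f_1-h_2f_2=(h_1g_1-h_2g_2)\cdot g_3$ is an algebraic identity from $f_i=g_ig_3$, so if $h_1g_1-h_2g_2\neq 0$ then
\begin{equation*}
\deg(h_1f_1-h_2f_2)\geq \deg g_3=N-1-k\geq N-1-t\geq N-3t
\end{equation*}
for $t\geq 1$, contradicting (\ref{eqn:decode}). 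The relevant comparison is between $\deg g_3$ and the bound $N-3t$ on the degree of the \emph{product}, not between $\deg g_3$ and $\deg(h_1g_1-h_2g_2)$. Once this is corrected, the rest of your proof goes through.
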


\begin{proof}
Note that $B=\nu(\pi)$ and $\nu$ is an injection, so we only need to prove that $B=V_1\cup\left(B'\setminus V_2\right)$. From (\ref{eqn:g}), it follows that
\begin{equation*}
h_1\cdot f_1-h_2\cdot f_2=(h_1\cdot g_1-h_2\cdot g_2)\cdot g_3,
\end{equation*}
where $\deg g_3=\lvert B\cap B'\rvert \geq N-1-t$. From \Cref{lemma8}, (\ref{eqn:decode}) holds true, which means that $\deg (h_1\cdot g_1-h_2\cdot g_2)\cdot g_3=\deg (h_1\cdot f_1-h_2\cdot f_2)<N-3t$. If $h_1\cdot f_1\neq h_2\cdot f_2$, then $N-t-1<N-3t$ and thus $t=0$, $h_1=h_2=0$. Therefore for any nonzero pair of $h_1$ and $h_2$,
\begin{equation*}
h_1\cdot f_1=h_2\cdot f_2.
\end{equation*}
We know from (\ref{eqn:GCD}) that
\begin{equation*}
v_2\cdot f_1=v_1\cdot f_2,
\end{equation*}
where $\text{GCD}(v_1,v_2)=1$. Let $v_2|f_2$ and $v_1|f_1$. Then,
\[\frac{f_1}{v_1}=\frac{f_2}{v_2}=f.\]

Suppose $V_3$ is the set of the additive inverses of roots of $f$. Then $V_1\cup V_3=B$, $V_2\cup V_3=B'$, thus $B=V_1\cup V_3=V_1\cup\left(B'\setminus V_2\right)$.
\end{proof}

Note that $V_1,V_2$ computed in \Cref{theo3} are exactly identical to $D_1,D_2$ described before (\ref{eqn:g}), respectively.

\begin{exam} Suppose the sender transmits the permutation $\pi_1=(2,4,7,3,5,1,8,6,9,10)\in\Se{C}_{\bm{\alpha}}(10,2)$, where $\bm{\alpha}=(16,0,86,44,61,9,49)$, and the receiver recives $\pi{'}=(8,6,9,10,5,1,2,4,7,3)\in \Db{S}_{10}$. In the encoding scheme, $q=97>10^2-10$, and for all $i,j\in\left[10\right]$, $i\neq j$,
\begin{equation*}
\upsilon(i,j)=10(i-1)+j-1.
\end{equation*}

The receiver applies \emph{Newton's identities}\cite{zeilberger1984combinatorial} to compute $r(B)=(16,31,0,  42 ,   54  ,  94 ,59)$ from $\bm{\alpha}$, and then derives $r(B')=( 80  ,  64  ,  83  ,  10   , 72   , 22,26)$ from $B'=\nu(\pi{'})=\{75,58,89,94,40,1,13,36,62\}$. Then 

\begin{equation}
\begin{split}
\V{A}&=\begin{pmatrix}
1   & 16   &   31   &    0   &   42  &    54   &   94\\
     0   &    1    &  16   &   31   &    0    &  42   &   54\\
     1    &  80   &   64   &   83   &   10   &   72   &   22\\
     0    &   1    &  80    &  64   &   83    &    10   &   72
\end{pmatrix}^T,\\
\V{b}&=\begin{pmatrix}64   & 33 &   83 &   65 &   18 &   25 &   64\end{pmatrix}^T.
\end{split}
\end{equation}

Notice that $\V{c}=\begin{pmatrix}95,94,66,26\end{pmatrix}$ is a solution to $\V{A}\V{c}=\V{b}$. Therefore $h_1=X^2+95X+94=(X+1)(X+94)$, $h_2=X^2+31X+71=(X+24)(X+7)$. The receiver then knows that $V_1=\{24,7\}$, $V_2=\{1,94\}$. Therefore $\nu(\pi)=B=V_1\cup(B'\setminus V_2)=\{13,36,62,24,40,7,75,58,89\}$. It follows that $A(\pi)=\{(2,4),(4,7),(7,3),(3,5),(5,1),(1,8),(8,6),(6,9),(9,10)\}$. From the definition of the characteristic set in \Cref{defi4}, the receiver is able to decode $\pi$ from $A(\pi)$ as $\hat{\pi}=(2,4,7,3,5,1,8,6,9,10)$.

\end{exam}

\section{Systematic Permutation Codes in the Generalized Cayley Metric}
\label{sec: syscodes}
In this section, we discuss systematic permutation codes. Specifically, in \Cref{subsec: sysencode}, we present an explicit coding scheme for systematic permutation codes in the generalized Cayley metric, and in \Cref{subsec: sysdecode}, we provide the decoding scheme for this construction. We refine our construction to ensure order-optimality, which we then discuss in \Cref{subsec: optimalsyscodes}.



\subsection{Encoding Scheme}
\label{subsec: sysencode}
Let messages be permutations in $\Db{S}_N$. In systematic permutation codes, the codewords are permutations of length $N+M$. We derive each codeword $\sigma\in\Db{S}_{N+M}$ from a message $\pi\in\Db{S}_N$ by sequentially inserting components $N+1,N+2,\cdots,N+M$ into $\pi$, in the positions specified by a sequence $S=(s_1,s_2,\cdots,s_M)$, where $S$ is determined by the syndrome $\alpha^{(q,2t)}(\pi)$ defined in (\ref{eqn:defialpha1}) and (\ref{alphadefi}). Our key result is established in \Cref{theo4}, where we present the construction of systematic permutation codes. We start the discussion by presenting a collection of definitions and lemmas to support our main result.

\begin{Defi}\label{defi5} For any permutation $\pi\in\Db{S}_N$ and the integer $i\in\Db{N}$, where $1\leq s\leq N$, let $E(\pi,s)$ be a permutation in $\Db{S}_{N+1}$ derived by inserting the element $N+1$ after the element $s$ in $\pi$, i.e., 
\begin{equation*}
E(\pi,s)\triangleq \SB{\pi(1),\cdots,\pi(k),N+1,\pi(k+1),\cdots,\pi(N)},
\end{equation*}
where $k=\pi^{-1}(s)$. We call $E(\pi,s)$ the \textbf{extension} of $\pi$ on the \textbf{extension point} $s$.

Consider a sequence $S=\SB{s_1,s_2,\cdots,s_{M}}$, where $s_m\in\MB{N}$ for all $1\leq m\leq M$. The \textbf{extension} $E(\pi,S)$ of $\pi$ on the \textbf{extension sequence} $S$ is a permutation in $\Db{S}_{N+M}$ derived from inserting the elements $N+1,\cdots,N+M$ sequentially after the elements $s_1,\cdots,s_M$ in $\pi$, i.e.,
\begin{equation*}
E(\pi,S)\triangleq E(E(\cdots E(E(\pi,s_1),s_2)\cdots,s_{M-1}),s_M).
\end{equation*}

\end{Defi}

Note that in \Cref{defi5}, the elements $s_1,\cdots,s_M$ in the extension sequence $S$ are not necessarily distinct. If different symbols are sequentially inserted after the same element, then they are all placed right after this element in descending order, as shown in \Cref{exam4}.

\begin{exam}\label{exam4} Suppose $\pi=(1,4,5,7,6,2,3)$, $I=(4,1,2,2)$, then
\begin{equation*}
\begin{split}
E(\pi,I)&=\SB{1,9,4,8,5,7,6,2,11,10,3}.\\
\end{split}
\end{equation*}
\end{exam}

Based on the definition of the extensions, \Cref{algo: Encode2} describes the major steps of our encoding scheme. The correctness of this scheme is proved later by \Cref{lemma9} and \Cref{theo4}.

\begin{algorithm}
\caption{Encoding Scheme}\label{algo: Encode2}
\begin{algorithmic}[1]
\Require 
\Statex Information sequence: $\pi\in \Db{S}_N$;
\Statex Number of additional symbols: $K$;
\Statex Minimum block permutation distance: $2t+1$;
\Ensure
\Statex Codeword: $\sigma$ ($\sigma\in\Db{S}_{N+K}$);
\State Compute the syndrome $\bm{\alpha}=\alpha^{(q,2t)}(\pi)$ of $\pi$, which is defined in (\ref{eqn:defialpha1});
\State Compute the extension sequence $S=\varphi(\bm{\alpha})$, where $\varphi$ is a function such that the image of $\varphi$ is a $t$-auxiliary set of length $K$ in the range $\MB{N}$, as defined in \Cref{defi9};
\State Compute $\sigma=E(\pi,S)$, according to \Cref{defi5};
\State \textbf{return} $\sigma$.
\end{algorithmic}
\end{algorithm}

\Cref{defi6} presents the notion of the \emph{jump points} of the extensions of two permutations. Then \Cref{lemma9} states that the block permutation distance between two extensions is strictly larger than that of their original permutations if and only if the extension point of one of them is a jump point. Based on this result, we further introduce the notion of \emph{jump index} and \emph{jump set} in \Cref{defi7}. As shown in \Cref{rem5}, the block permutation distance of two permutations in $\Db{S}_N$ is lower bounded by the sum of that of their extensions and the cardinality of the jump set. 

\begin{Defi}\label{defi6} Let $\pi_1,\pi_2\in \Db{S}_N$, $s_1,s_2\in\MB{N}$. We note that for any $k\in\MB{N}$, $\pi_{i}(k)$ refers to the $k$-th element of $\pi_i$, $i\in\LB{1,2}$. Suppose $E(\pi_1,s_1)$, $E(\pi_2,s_2)$ are two arbitrary extensions of $\pi_1$ and $\pi_2$, respectively, where $\pi_1,\pi_2\in\Db{S}_N$, $\pi_{1}(k_1)=s_1$ and $\pi_{2}(k_2)=s_2$. Then $s_1$ is called a \textbf{jump point} of $E(\pi_1,s_1)$ with respect to $E(\pi_2,s_2)$, if $s_1\neq s_2$ and at least one of the following conditions is satisfied:

\begin{enumerate}
  \item $k_1=N$ or $k_2=N$;
  \item $k_1,k_2<N$, and $\pi_{1}(k_1+1)\neq \pi_{2}(k_2+1)$.
\end{enumerate}
\end{Defi}

\begin{lemma}\label{lemma9} Let $\pi_1,\pi_2\in \Db{S}_N$, $s_1,s_2\in\MB{N}$. For any two extensions $E(\pi_1,s_1)$ and $E(\pi_2,s_2)$, if $s_1$ is a jump point of $E(\pi_1,s_1)$ with respect to $E(\pi_2,s_2)$, then
\begin{equation}
d_B(E(\pi_1,s_1),E(\pi_2,s_2))>d_B(\pi_1,\pi_2),
\end{equation}
else
\begin{equation}
d_B(E(\pi_1,s_1),E(\pi_2,s_2))=d_B(\pi_1,\pi_2).
\end{equation}

\end{lemma}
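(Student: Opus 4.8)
The plan is to reduce the statement to the explicit characterization of the block permutation distance via characteristic sets (\Cref{lemma1}), so that both inequalities become a counting comparison between $\lvert A(E(\pi_1,s_1))\setminus A(E(\pi_2,s_2))\rvert$ and $\lvert A(\pi_1)\setminus A(\pi_2)\rvert$. First I would write down exactly how $A(E(\pi,s))$ differs from $A(\pi)$: if $s=\pi(k)$ with $k<N$, then inserting $N+1$ after $s$ deletes the pair $(\pi(k),\pi(k+1))$ and adds the two pairs $(\pi(k),N+1)$ and $(N+1,\pi(k+1))$; if $k=N$, it only adds the single pair $(\pi(N),N+1)$. Crucially, none of the new pairs involving the symbol $N+1$ can coincide with any pair in the other permutation's characteristic set unless the other permutation also has $N+1$ inserted in an ``aligned'' way — i.e. unless $s_1=s_2$ and the successor symbols match, which is precisely the negation of the jump-point condition.

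Next I would split into the two cases of \Cref{defi6}. In the ``not a jump point'' case, either $s_1=s_2$, or $s_1\ne s_2$ but $k_1,k_2<N$ with $\pi_1(k_1+1)=\pi_2(k_2+1)=:s'$. In the first sub-case the two insertions are completely parallel, and one checks directly that the symmetric difference of the characteristic sets is unchanged in size (the deleted pairs are deleted from both sides of the difference, the added pairs involving $N+1$ match up). In the second sub-case, $E(\pi_1,s_1)$ gains the pairs $(s_1,N+1),(N+1,s')$ and loses $(s_1,s')$, while $E(\pi_2,s_2)$ gains $(s_2,N+1),(N+1,s')$ and loses $(s_2,s')$; the shared pair $(N+1,s')$ cancels in the symmetric difference, the pair $(s_1,s')$ was already in $A(\pi_1)\setminus A(\pi_2)$ and is replaced by $(s_1,N+1)$, and symmetrically on the other side — so again the cardinality is preserved. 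This handles the equality.

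For the strict inequality in the jump-point case, I would argue that $E(\pi_1,s_1)$ acquires at least one pair involving $N+1$ that lies in $A(E(\pi_1,s_1))\setminus A(E(\pi_2,s_2))$ and that this is a net gain over whatever pairs were removed. Specifically, the pair $(s_1,N+1)$ (which always exists since $s_1$ has a successor $N+1$) cannot appear in $A(E(\pi_2,s_2))$ because $s_2\ne s_1$ means the successor of $s_1$ in $E(\pi_2,s_2)$ is still $\pi_2(\pi_2^{-1}(s_1)+1)\ne N+1$; and if $k_1<N$ the removed pair $(s_1,\pi_1(k_1+1))$ is replaced by $(s_1,N+1)$ plus the possibly-new pair $(N+1,\pi_1(k_1+1))$, which by the jump condition ($\pi_1(k_1+1)\ne\pi_2(k_2+1)$, or $k_2=N$) is not matched in $A(E(\pi_2,s_2))$, yielding a strict increase of one. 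The symmetric bookkeeping for the second characteristic set difference gives the same conclusion, and by \Cref{lemma1} the two differences have equal size so it suffices to track one of them.

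The main obstacle I anticipate is the careful case analysis around the boundary positions $k_1=N$ or $k_2=N$ and the possibility that $s'=\pi_1(k_1+1)$ happens to already equal $N+1$-adjacent structure after an earlier extension — but since \Cref{lemma9} is stated for single extensions of fixed $\pi_1,\pi_2\in\Db{S}_N$, the symbol $N+1$ is genuinely new and no such collision arises, which keeps the argument purely combinatorial. The only real care needed is to make sure that in the jump case the newly introduced pair is counted in the \emph{difference} and not merely in the characteristic set, i.e. that it is not simultaneously produced on the other side; the jump-point conditions are exactly engineered to guarantee this, so the proof is a matter of verifying that each of the enumerated sub-cases forces at least one such uniquely-new pair while no pairs are lost from the difference.
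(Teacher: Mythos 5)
Your proposal is correct and takes essentially the same route as the paper's proof: both reduce to \Cref{lemma1}, record how inserting $N+1$ after $s$ deletes the pair $(s,j)$ and adds $(s,N+1)$ and $(N+1,j)$ (or only $(s,N+1)$ when $s$ is the last symbol), and then case-split on the jump-point condition to compare the set differences. The paper merely carries out the sub-case bookkeeping more exhaustively — e.g.\ the non-jump sub-case $s_1=s_2$ with $j_1\neq j_2$, where the added pairs $(N+1,j_1)$ and $(N+1,j_2)$ do not literally ``match up'' but the cardinality is still preserved, exactly as your replacement argument indicates.
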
 

\begin{proof} The proof is in Appendix \ref{prooflemma9}.
\end{proof}

In the following \Cref{exam5}, we provide examples of jump points that satisfy the two conditions indicated in \Cref{defi6}. We also provide an example of an extension point that is not a jump point.

\begin{exam}\label{exam5} Suppose $\pi=(1,5,7,2,3,6,4)$, $\pi'=(2,3,1,5,7,6,4)$, $s_1=4$, $s_1'=5$, $s_2=5$, $s_2'=6$, $s_3=3$, $s_3'=7$. Then,
\begin{equation*}
\begin{split}
\sigma_1=E(\pi,s_1)&=\SB{1,5,7,2,3,6,4,8},\\
\sigma_1'=E(\pi',s'_1)&=\SB{2,3,1,5,8,7,6,4},\\
\sigma_2=E(\pi,s_2)&=\SB{1,5,8,7,2,3,6,4},\\
\sigma_2'=E(\pi',s'_2)&=\SB{2,3,1,5,7,6,8,4},\\
\sigma_3=E(\pi,s_3)&=\SB{1,5,7,2,3,8,6,4},\\
\sigma_3'=E(\pi',s'_3)&=\SB{2,3,1,5,7,8,6,4}.\\
\end{split}
\end{equation*}

Given that $d_B(\pi,\pi')=2$, we observe that 
\begin{equation*}
\begin{split}
&d_B(\sigma_1,\sigma'_1)=4>d_B(\pi,\pi'),\text{ and } s_1\text{ is a jump point;}\\
&d_B(\sigma_2,\sigma'_2)=5>d_B(\pi,\pi'),\text{ and }  s_2\text{ is a jump point;}\\
&d_B(\sigma_3,\sigma'_3)=2=d_B(\pi,\pi'),\text{ and }  s_3\text{ is not a jump point}.\\
\end{split}
\end{equation*}

Notice that $s_1$ is a jump point that satisfies the first condition in \Cref{defi6}, and $s_2$ is a jump point that satisfies the second condition. This example is consistent with \Cref{lemma9}.
\end{exam}

We know from \Cref{lemma9} that the block permutation distance between the resulting codewords cannot be smaller than that of their original messages. Recall that \Cref{theo2} indicates that permutations with the same syndrome result in codewords having the block permutation distance of at least $2t+1$. Therefore, it suffices to show that the permutations with different syndromes are mapped to codewords that are sufficiently far apart under the block permutation distance; \Cref{lemma10} establishes a property that ensures that this condition is satisfied. We then use this result in \Cref{theo4} to present the construction of systematic permutation codes.

\begin{Defi}\label{defi7} Let $\pi_1,\pi_2\in \Db{S}_N$, $s_1,s_2\in\MB{N}$. Suppose $E(\pi_1,S_1)$ and $E(\pi_2,S_2)$ are extensions of $\pi_1$ and $\pi_2$ on extension sequences $S_1$ and $S_2$, respectively, where $\pi_1,\pi_2\in\Db{S}_N$, $S_1=\SB{s_{1,1},s_{1,2},\cdots,s_{1,M}}$ and $S_2=\SB{s_{2,1},s_{2,2},\cdots,s_{2,M}}$. Then, for any $m\in\MB{M}$, $m$ is called a \textbf{jump index} of $E(\pi_1,S_1)$ and $E(\pi_2,S_2)$ if $s_{1,m}$ is a jump point of $E(E(\pi_1,J_{1,m-1}),s_{1,m})$ with respect to $E(E(\pi_2,J_{2,m-1}),s_{2,m})$, where $J_{1,m-1}=\SB{s_{1,1},s_{1,2},\cdots,s_{1,m-1}}$, $J_{2,m-1}=\SB{s_{2,1},s_{2,2},\cdots,s_{2,m-1}}$. Define the \textbf{jump set} $F(\pi_1,\pi_2,S_1,S_2)$ as the set of all jump indices of $E(\pi_1,S_1)$ and $E(\pi_2,S_2)$. 
\end{Defi}

\begin{rem}\label{rem5} Let $\pi_1,\pi_2\in \Db{S}_N$, $s_1,s_2\in\MB{N}$.
For any extensions $E(\pi_1,S_1)$, $E(\pi_2,S_2)$ of $\pi_1$, $\pi_2$ on extension sequences $S_1$, $S_2$, respectively, it is obvious from \Cref{defi7} and \Cref{lemma9} that 
\begin{equation}
d_B(E(\pi_1,S_1),E(\pi_2,S_2))\geq d_B(\pi_1,\pi_2)+\lvert F(\pi_1,\pi_2,S_1,S_2) \rvert.
\label{extensiondistance}
\end{equation}
Here $F(\pi_1,\pi_2,S_1,S_2)$ is the jump set defined in \Cref{defi7}.
\end{rem}

In the following \Cref{exam6}, we provide an example of how to identify the jump indices and compute the jump set. This example satisfies inequality (\ref{extensiondistance}).

\begin{exam}\label{exam6} Continuing with the values of $\pi$, $\pi'$ specified in \Cref{exam5}, let $S=\SB{4,6,7}$ and $S'=\SB{5,6,5}$. Then,
\begin{equation*}
\begin{split}
\sigma_0=\pi&=\SB{1,5,7,2,3,6,4},\\
\sigma_0'=\pi'&=\SB{2,3,1,5,7,6,4},\\
\sigma_1=E(\sigma_0,s_1)&=\SB{1,5,7,2,3,6,4,8},\\
\sigma_1'=E(\sigma_0',s'_1)&=\SB{2,3,1,5,8,7,6,4},\\
\sigma_2=E(\sigma_1,s_2)&=\SB{1,5,7,2,3,6,9,4,8},\\
\sigma_2'=E(\sigma_1',s'_2)&=\SB{2,3,1,5,8,7,6,9,4},\\
\sigma_3=E(\sigma_2,s_3)&=\SB{1,5,7,10,2,3,6,9,4,8},\\
\sigma_3'=E(\sigma_2',s'_3)&=\SB{2,3,1,5,10,8,7,6,9,4}.\\
\end{split}
\end{equation*}

It follows immediately that
\begin{equation*}
\begin{split}
d_B(\sigma_0,\sigma'_0)&=2,\\
d_B(\sigma_1,\sigma'_1)&=4>d_B(\sigma_0,\sigma'_0),\text{ and }  1\text{ is a jump index};\\
d_B(\sigma_2,\sigma'_2)&=4=d_B(\sigma_1,\sigma'_1),\text{ and }  2\text{ is not a jump index};\\
d_B(\sigma_3,\sigma'_3)&=5>d_B(\sigma_2,\sigma'_2),\text{ and }  3\text{ is a jump index}.\\
\end{split}
\end{equation*}

According to \Cref{defi7}, $F(\pi,\pi',S,S')=\LB{1,3}$. Moreover, $d_B(\sigma_3,\sigma_3')=5>4=d_B(\pi,\pi')+\lvert F(\pi,\pi',S,S')\rvert$, which is in accordance with equation (\ref{extensiondistance}). 
\end{exam}

Next we prove in \Cref{lemma10} that the right hand side of equation (\ref{extensiondistance}) can be lower bounded by the cardinality of the so-called \emph{Hamming set}. The Hamming set of $S_1$ with respect to $S_2$ is defined in the following \Cref{defi8}. Based on this result, we present a construction of systematic $t$-block permutation codes in \Cref{theo4} with the help of a so-called \emph{$t$-auxiliary set} that is defined in \Cref{defi9}.

\begin{Defi}\label{defi8} For any sequences $\V{v}_1$, $\V{v}_2$ of integers with length $M$, where $\V{v}_1=\SB{v_{1,1},v_{1,2},\cdots,v_{1,M}}$ and $\V{v}_2=\SB{v_{2,1},v_{2,2},\cdots,v_{2,M}}$, define the \textbf{Hamming set} of $\V{v}_1$ with respect to $\V{v}_2$ as follows,
\begin{equation}
H(\V{v}_1,\V{v}_2)\triangleq\LB{v_{1,m}|v_{1,m}\neq v_{2,m}, m\in\MB{M}}.
\end{equation}
\end{Defi}

We note that $d_H$ refers to the Hamming distance throughout this paper.

\begin{rem}\label{rem6} It is obvious that $d_H(\V{v}_1,\V{v}_2)\geq \Nm{H(\V{v}_1,\V{v}_2)}$. Additionally, for any three sequences $\V{v}_1,\V{v}_2,\V{v}_3$ of integers, the following triangle inequality holds true:
\begin{equation} 
\Nm{H(\V{v}_1,\V{v}_3)}\leq \Nm{H(\V{v}_1,\V{v}_2)}+\Nm{H(\V{v}_2,\V{v}_3)}.
\label{eqn: remhammingset}
\end{equation}
\end{rem}

\begin{lemma}\label{lemma10} Let $\pi_1,\pi_2\in \Db{S}_N$, $s_1,s_2\in\MB{N}$. For any extensions $E(\pi_1,S_1)$, $E(\pi_2,S_2)$ of $\pi_1$, $\pi_2$ on extension sequences $S_1$, $S_2$, respectively, it follows that
\begin{equation}
d_B(E(\pi_1,S_1),E(\pi_2,S_2))\geq \lvert H(S_1,S_2)\rvert.
\end{equation}

\end{lemma}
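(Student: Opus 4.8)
The plan is to combine \Cref{rem5} with \Cref{rem6} to reduce the claim to a statement about Hamming sets, then handle the remaining gap by choosing an intermediate extension sequence that isolates the positions where $S_1$ and $S_2$ agree. Fix $\pi_1,\pi_2\in\Db{S}_N$ and extension sequences $S_1=(s_{1,1},\dots,s_{1,M})$, $S_2=(s_{2,1},\dots,s_{2,M})$. By \Cref{rem5} we already have
\[
d_B(E(\pi_1,S_1),E(\pi_2,S_2))\ge d_B(\pi_1,\pi_2)+\lvert F(\pi_1,\pi_2,S_1,S_2)\rvert,
\]
so it would suffice to show $d_B(\pi_1,\pi_2)+\lvert F(\pi_1,\pi_2,S_1,S_2)\rvert \ge \lvert H(S_1,S_2)\rvert$. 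That inequality is false in general (e.g. \Cref{exam6} has right side $2$ but $\lvert H(S,S')\rvert = 3$), so the jump-set bound alone is not enough and the argument has to be more careful: a coordinate $m$ with $s_{1,m}\neq s_{2,m}$ need not be a jump index, yet it still forces the two extensions apart once we also look downstream. The resolution is to argue directly about the extensions rather than through $F$.

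The key step is an induction on $M$, the number of inserted symbols. For $M=0$ the claim is $d_B(\pi_1,\pi_2)\ge 0$, which is trivial. For the inductive step, write $S_1 = (S_1', s_{1,M})$ and $S_2 = (S_2', s_{2,M})$ with $S_1',S_2'$ of length $M-1$, and set $\rho_i = E(\pi_i, S_i')\in\Db{S}_{N+M-1}$, so $E(\pi_i,S_i) = E(\rho_i, s_{i,M})$. There are two cases. If $s_{1,M}=s_{2,M}$, then $\lvert H(S_1,S_2)\rvert = \lvert H(S_1',S_2')\rvert$, and by \Cref{lemma9} (the ``else'' branch, since equal extension points cannot be a jump point) $d_B(E(\rho_1,s_{1,M}),E(\rho_2,s_{2,M}))\ge d_B(\rho_1,\rho_2)$, which by the induction hypothesis applied to $\rho_1,\rho_2$ (whose common inserted-symbol count is $M-1$, with extension sequences $S_1',S_2'$) is $\ge \lvert H(S_1',S_2')\rvert = \lvert H(S_1,S_2)\rvert$. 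If $s_{1,M}\neq s_{2,M}$, then $\lvert H(S_1,S_2)\rvert \le \lvert H(S_1',S_2')\rvert + 1$; by the induction hypothesis $d_B(\rho_1,\rho_2)\ge \lvert H(S_1',S_2')\rvert$, and it remains to gain the extra $+1$ from the final insertion. Here I would show that when $s_{1,M}\neq s_{2,M}$ the last extension either is a jump point — in which case \Cref{lemma9} gives $d_B(E(\rho_1,s_{1,M}),E(\rho_2,s_{2,M}))\ge d_B(\rho_1,\rho_2)+1\ge \lvert H(S_1',S_2')\rvert + 1\ge \lvert H(S_1,S_2)\rvert$ — or it is not a jump point, which by \Cref{defi6} forces $s_{1,M}$ and $s_{2,M}$ to be followed by the same element in $\rho_1$ and $\rho_2$; in that sub-case one shows the earlier distance was already strictly larger, i.e. $d_B(\rho_1,\rho_2)\ge \lvert H(S_1',S_2')\rvert + 1$ by a sharper accounting, because the two occurrences of the distinct symbols $s_{1,M},s_{2,M}$ sitting in front of a common successor create an extra breakpoint in $\rho_1^{-1}\comp\rho_2$ that was not needed to account for $H(S_1',S_2')$.

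I expect the last sub-case to be the main obstacle: making precise the claim that a ``non-jumping'' disagreement $s_{1,M}\neq s_{2,M}$ nonetheless already contributed to the block permutation distance of $\rho_1,\rho_2$. The clean way to handle it is probably to not split into ``jump vs.\ non-jump'' for the last coordinate at all, but instead to strengthen the induction hypothesis: prove simultaneously, via \Cref{lemma1} (characteristic-set description of $d_B$) and \Cref{defi3}, that the characteristic set $A(E(\pi_1,S_1))$ contains at least $\lvert H(S_1,S_2)\rvert$ pairs not in $A(E(\pi_2,S_2))$ by explicitly exhibiting them — namely, for each coordinate $m$ with $s_{1,m}\neq s_{2,m}$, the consecutive pair in $E(\pi_1,S_1)$ immediately following the symbol $N+m$ (or, if $N+m$ is last, the pair ending at $N+m$), and checking these pairs are distinct across $m$ and absent from $A(E(\pi_2,S_2))$ because the symbol $N+m$ occupies a different local context there. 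This direct combinatorial construction sidesteps the delicate recursive bookkeeping, and \Cref{lemma1} then converts the count of distinct characteristic pairs into the desired lower bound on $d_B$.
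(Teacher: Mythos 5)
Your overall instinct---that the bound must combine the jump-set contribution with an extra contribution from ``non-jumping'' disagreements---is the right one, but each of the three routes you sketch has a concrete problem. First, your reason for abandoning the direct route is based on a miscalculation: in \Cref{exam6} one has $H(S,S')=\{4,7\}$, so $\lvert H(S,S')\rvert=2$ (not $3$), while $d_B(\pi,\pi')+\lvert F(\pi,\pi',S,S')\rvert=2+2=4$ (not $2$). The inequality $d_B(\pi_1,\pi_2)+\lvert F(\pi_1,\pi_2,S_1,S_2)\rvert\geq\lvert H(S_1,S_2)\rvert$ that you declare ``false in general'' is in fact true, and proving it is exactly the content of the paper's argument: for each value $i\in H(S_1,S_2)$ one takes the \emph{first} index $m(i)$ at which $i$ appears as a disagreeing extension point; if $m(i)$ is a jump index it is charged to $\lvert F\rvert$ (the map $i\mapsto m(i)$ is injective), and if not, \Cref{defi6} forces $N+m(i)$ to have a common successor $j$ in the two partial extensions, one checks (using minimality of $m(i)$) that $j\in\MB{N}$, and then $(i,j)\in A(\pi_1)\setminus A(\pi_2)$, so these indices are charged to $d_B(\pi_1,\pi_2)$. \Cref{rem5} then finishes the proof. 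This is precisely the ``sharper accounting'' you say is the main obstacle in your inductive route, so that route is not completed either.

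Your fallback direct construction also fails as stated: the consecutive pair immediately \emph{following} $N+m$ in $E(\pi_1,S_1)$ need not be absent from $A(E(\pi_2,S_2))$, because $N+m$ can have the same successor in both extensions even when $s_{1,m}\neq s_{2,m}$---this is exactly the non-jump-point case of \Cref{defi6}. For instance, with $\pi_1=(1,3,2)$, $\pi_2=(2,3,1)$, $S_1=(1)$, $S_2=(2)$, one gets $E(\pi_1,S_1)=(1,4,3,2)$ and $E(\pi_2,S_2)=(2,4,3,1)$, and the pair $(4,3)$ lies in both characteristic sets; the pairs witnessing the distance here are $(1,4)$ and $(3,2)$, i.e., the pair \emph{preceding} the inserted symbol (and even that pair survives into $A(E(\pi_1,S_1))$ only when $m$ is the last index with extension point $s_{1,m}$, since later insertions after the same element displace it). So the exhibited pairs are neither guaranteed to be in $A(E(\pi_1,S_1))\setminus A(E(\pi_2,S_2))$ nor easily repaired without reintroducing the first-occurrence/jump-versus-non-jump case analysis that the paper carries out.
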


\begin{proof} The proof is in Appendix \ref{prooflemma10}.
\end{proof}

\begin{exam}\label{exam7} Continuing on with the numerical values of $\pi,\pi',S,S'$ as in \Cref{exam6}, we conclude that, $H(S,S')=\LB{4,7}$, $m(4)=1$, $m(7)=3$. Then it follows that $d_B(\sigma,\sigma')=5>2=\Nm{H(S,S')}$, which is in accordance with the above \Cref{lemma10}.
\end{exam}

\begin{Defi}\label{defi9} Consider a set $\Se{A}(N,K,t)\subset \MB{N}^K$. We call $\Se{A}(N,K,t)$ a $\bm{t}$\textbf{-\emph{auxiliary set}} of length $K$ in range $\MB{N}$ if for any $\V{c}_1,\V{c}_2\in\Se{A}(N,K,t)$, $\V{c}_1\neq \V{c}_2$, $\Nm{H(\V{c}_1,\V{c}_2)}\geq 2t+1$ holds.
\end{Defi}

\begin{theo}\label{theo4} For any $t$-auxiliary set $\Se{A}(N,K,t)$ with cardinality that is no less than $q^{4t-1}$, suppose $\varphi:\ \alpha^{(q,2t)}(\Db{S}_N)\to \Se{A}(N,K,t)$ is an arbitrary injection, where $q$ is a prime number such that $N^2-N<q<2(N^2-N)$ and the syndrome $\alpha^{(q,2t)}$ is defined in (\ref{eqn:defialpha1}) and (\ref{alphadefi}). Then, the set $\Se{C}_{B}^{\Tx{sys}}(N,K,t)=\LB{E(\pi,\varphi\comp\alpha^{(q,2t)}(\pi))|\pi\in\Db{S}_N}$ is a systematic $t$-block permutation code.
\end{theo}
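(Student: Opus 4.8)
The plan is to show that any two distinct codewords $E(\pi_1,\varphi\comp\alpha^{(q,2t)}(\pi_1))$ and $E(\pi_2,\varphi\comp\alpha^{(q,2t)}(\pi_2))$ have block permutation distance at least $2t+1$, splitting into two cases according to whether the two messages $\pi_1,\pi_2\in\Db{S}_N$ have the same syndrome.

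\textbf{Case 1: $\alpha^{(q,2t)}(\pi_1)=\alpha^{(q,2t)}(\pi_2)$.} Since $\pi_1\neq\pi_2$ while their syndromes agree, \Cref{theo2} gives $d_B(\pi_1,\pi_2)\geq 2t+1$. Moreover $\varphi$ maps the common syndrome to the same element of $\Se{A}(N,K,t)$, so the two extension sequences $S_1=\varphi\comp\alpha^{(q,2t)}(\pi_1)$ and $S_2=\varphi\comp\alpha^{(q,2t)}(\pi_2)$ are identical. Then \Cref{lemma9} (applied inductively through the $K$ insertion steps, as in \Cref{rem5} with an empty jump set) yields $d_B(E(\pi_1,S_1),E(\pi_2,S_2))=d_B(\pi_1,\pi_2)\geq 2t+1$.

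\textbf{Case 2: $\alpha^{(q,2t)}(\pi_1)\neq\alpha^{(q,2t)}(\pi_2)$.} Here the two syndromes differ, so since $\varphi$ is an injection, $S_1=\varphi\comp\alpha^{(q,2t)}(\pi_1)\neq S_2=\varphi\comp\alpha^{(q,2t)}(\pi_2)$, and both lie in the $t$-auxiliary set $\Se{A}(N,K,t)$. By \Cref{defi9}, $\Nm{H(S_1,S_2)}\geq 2t+1$. Applying \Cref{lemma10} directly gives
\begin{equation*}
d_B(E(\pi_1,S_1),E(\pi_2,S_2))\geq \Nm{H(S_1,S_2)}\geq 2t+1.
\end{equation*}

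In both cases the minimum block permutation distance of $\Se{C}_{B}^{\Tx{sys}}(N,K,t)$ is at least $2t+1$, so it is a $t$-block permutation code; it is systematic because each codeword $E(\pi,\cdot)$ contains $\pi$ as the subsequence of its elements in $\MB{N}$, from which $\pi$ is trivially recovered. The hypothesis $\Nm{\Se{A}(N,K,t)}\geq q^{4t-1}$ is exactly what guarantees the existence of the injection $\varphi$ from $\alpha^{(q,2t)}(\Db{S}_N)\subseteq\Db{F}_q^{4t-1}$ into $\Se{A}(N,K,t)$, so the construction is well-defined. The only subtle point—and the main thing to verify carefully—is that Case 1 genuinely requires the extension sequences to be \emph{equal} (not merely close), so that \Cref{lemma9} can be iterated with no loss; once $S_1\neq S_2$ we are necessarily in Case 2 and the Hamming-set bound of \Cref{lemma10} takes over. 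I expect no real obstacle beyond bookkeeping the iterated application of \Cref{lemma9} in Case 1, which \Cref{rem5} already packages.
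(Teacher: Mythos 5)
Your proposal is correct and follows essentially the same route as the paper: the same case split on whether $\alpha^{(q,2t)}(\pi_1)=\alpha^{(q,2t)}(\pi_2)$, with \Cref{theo2} plus \Cref{lemma9} handling the equal-syndrome case and \Cref{defi9} plus \Cref{lemma10} handling the unequal-syndrome case. Your added observation that equal extension sequences produce no jump points (so the distance is exactly preserved) is a correct refinement, though the paper only needs the inequality $d_B(\sigma_1,\sigma_2)\geq d_B(\pi_1,\pi_2)$.
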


\begin{proof} It is clear by the choice of $E(\pi,S)$ that $\Se{C}_{B}^{\Tx{sys}}(N,K,t)$ is systematic. For any two messages $\pi_1,\pi_2\in\Db{S}_N$, denote their corresponding codewords by $\sigma_1=E(\pi_1,\varphi\comp\alpha^{(q,2t)}(\pi_1))$ and $\sigma_2=E(\pi_2,\varphi\comp\alpha^{(q,2t)}(\pi_2))$, respectively. Suppose $\bm{\alpha}_1=\alpha^{(q,2t)}(\pi_1)$, $\bm{\alpha}_2=\alpha^{(q,2t)}(\pi_2)$, $S_1=\varphi(\bm{\alpha}_1)$ and $S_2=\varphi(\bm{\alpha}_2)$. Then $\sigma_1=E(\pi_1,S_1)$, $\sigma_2=E(\pi_2,S_2)$. Consider the following two cases:

\begin{enumerate}
\item $\bm{\alpha}_1=\bm{\alpha}_2$. According to \Cref{theo2}, $d_B(\pi_1,\pi_2)>2t$ in this case. Then \Cref{lemma9} implies that $d_B(\sigma_1,\sigma_2)\geq d_B(\pi_1,\pi_2)\geq 2t+1$.
\item $\bm{\alpha}_1\neq \bm{\alpha}_2$. In this case, $S_1,S_2 \in\Se{A}(N,K,t)$ and $S_1\neq S_2$. Then from \Cref{defi9}, $\Nm{H(S_1,S_2)}\geq 2t+1$. Therefore, from \Cref{lemma10}, $d_B(\sigma_1,\sigma_2)\geq \Nm{H(S_1,S_2)}\geq 2t+1$.
\end{enumerate}

From the above discussion, $d_B(\sigma_1,\sigma_2)\geq 2t+1$ is aways true, which means that $\Se{C}_{B}^{\Tx{sys}}(N,K,t)$ is indeed a systematic $t$-block permutation code.
\end{proof}

\subsection{Decoding Scheme}
\label{subsec: sysdecode}
Based on the construction and the notation in \Cref{theo4}, suppose the sender sends a codeword $\sigma=E(\pi,\varphi\comp\alpha^{(q,2t)}(\pi))$ through a noisy channel and the receiver receives a noisy version $\sigma'$, where $d_B(\sigma,\sigma')\leq t$. 

 In this section, we prove in the forthcoming \Cref{lemma11} that the extension sequence $S$ of the codeword $E(\pi,S)$ is decodable given that $d_B(\sigma,\sigma')\leq t$, from which the syndrome, defined in (\ref{eqn:defialpha1}) and (\ref{alphadefi}), of the transmitted information $\pi$ can be derived.

For convenience, we introduce the following definition of the \emph{truncation} and use it throughout this subsection.

\begin{Defi}\label{defi10} For any permutation $\sigma\in\Db{S}_{N+1}$ and an integer $u\in\MB{N+1}$, denote $T(\sigma,u)$ to be the sequence derived by removing the element $u$ from  $\sigma$, i.e.,
\begin{equation}
\begin{split}
T(\sigma,u)\triangleq \SB{\sigma(1),\sigma(2),\cdots,\sigma(k-1),\sigma(k+1),\cdots,\sigma(N)},
\end{split}
\end{equation}
where $k=\sigma^{-1}(u)$.

Then, for any permutation $\sigma\in\Db{S}_{N+M}$ and a set $U\subset \MB{N+M}$, denote the \textbf{truncation} $T(\sigma,U)$ of $\sigma$ on set $U$ to be the sequence derived by removing the elements contained in $U=\LB{u_1,u_2,\cdots,u_{|U|}}$ from $\sigma$, i.e.,
\begin{equation}
\begin{split}
T(\sigma,U)\triangleq T(T(\cdots T(T(\sigma,u_1),u_2)\cdots,u_{\Nm{U}-1}),u_{\Nm{U}}).
\end{split}
\end{equation}
\end{Defi}

Note that in \Cref{defi10}, the ordering of $u_1,\cdots,u_{\Nm{U}}$ has no impact on the value of $T(\sigma,U)$. The following is an example of the truncation of a permutation.

\begin{exam}\label{exam8} Suppose $\sigma=\SB{1,4,5,2,3,9,8,6,7}$, $U=\LB{4,5,9}$, then
\begin{equation*}
\begin{split}
T(\sigma,U)=\SB{1,2,3,8,6,7}.
\end{split}
\end{equation*}
\end{exam}

The following \Cref{algo: Decode2} describes the decoding algorithm of the code constructed in \Cref{theo4}. The correctness of this algorithm is proved by \Cref{lemma11}.

\begin{algorithm}
\caption{Decoding Algorithm}\label{algo: Decode2}
\begin{algorithmic}[1]
\Require 
\Statex Received sequence: $\sigma'$;
\Statex Number of additional symbols: $K$;
\Statex Minimum block permutation distance: $2t+1$;
\Ensure
\Statex Estimated information sequence: $\hat{\pi}$;
\State Compute $\pi'=T(\sigma',\LB{N+1,\cdots,N+K})$, according to \Cref{defi10};
\State Find $S'$ such that $\sigma'=E(\pi',S')$, where $E(\pi,S)$ is defined in \Cref{defi5};
\State Find $\hat{S}\in \rm{Img}(\varphi)$ such that $H(\hat{S},S')\leq t$, where $H$ is defined in \Cref{defi8}, and $\varphi$ is specified in \Cref{theo4};
\State Compute $\hat{\bm{\alpha}}=\varphi^{-1}(\hat{S})$;
\State Let $\hat{\bm{\alpha}},\pi'$ be the inputs of \Cref{algo: Decode1} and obtain $\hat{\pi}$;
\State \textbf{return} $\hat{\pi}$.
\end{algorithmic}
\end{algorithm}

Our decoding scheme has two major steps. Recall that $\alpha^{(q,2t)}$ is defined in (\ref{eqn:defialpha1}) and (\ref{alphadefi}) as the syndrome of $\pi$. The first step is to derive the syndrome $\hat{\bm{\alpha}}=\alpha^{(q,2t)}(\pi)$ of $\pi=T(\sigma,\LB{N+1,\cdots,N+K})$, from the received permutation $\sigma'$. The second step is to apply \Cref{algo: Decode1} to the pair of inputs, the syndrome $\hat{\bm{\alpha}}$ and the subsequence $\pi'=T(\sigma',\LB{N+1,\cdots,N+K})$, and compute $\pi$.

Note that it is sufficient to compute the sequence $S$ in order to derive the syndrome $\hat{\bm{\alpha}}$. \Cref{lemma11} proves the sufficiency of obtaining the sequence $S$ from $S'$, where $S$ is the extension sequence of $\pi$ in $\sigma$, by showing that the cardinality of the Hamming set $H(S,S')$ does not exceed $t$, provided that $d_B(\sigma,\sigma')\leq t$. Therefore, from (\ref{eqn: remhammingset}) and \Cref{defi9}, we are able to obtain an estimate $\hat{S}$ of $S$ from $S'$ since each $t$-auxiliary set $\Se{A}(N,K,t)$ has the property that the cardinalities of Hamming sets constructed from its pairwise distinct elements are at least $2t+1$. The syndrome $\hat{\bm{\alpha}}$ is then uniquely derived from $\hat{S}$.

\begin{lemma}\label{lemma11} Consider an arbitrary $\sigma\in\Se{C}=\LB{E(\pi,\varphi\comp\alpha^{(q,2t)}(\pi))|\pi\in\Db{S}_N}$, for $\Se{C}$ defined in \Cref{theo4} (then $\sigma\in\Db{S}_{N+K}$). Suppose there is a $\sigma'$ such that $d_B(\sigma,\sigma')\leq t$. Let $S=\varphi\comp\alpha^{(q,2t)}(\pi)$ and $\pi'=T(\sigma',\MB{N+1:N+K})$. Suppose $\sigma'$ is the extension of $\pi'$ on the extension sequence $S'$, i.e., $\sigma'=E(\pi',S')$. Then,
\begin{equation}
H(S,S')\leq t.
\end{equation}
\end{lemma}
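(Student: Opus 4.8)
The plan is to set up a correspondence between the breakpoint structure witnessing $d_B(\sigma,\sigma')$ and the positions of $S$, $S'$ where the extension sequences disagree, so that every index $m \in H(S,S')$ can be ``charged'' to a distinct unit of block-permutation distance between $\sigma$ and $\sigma'$. Concretely, write $\sigma = E(\pi,S)$ and $\sigma' = E(\pi',S')$ with $S=(s_1,\dots,s_K)$, $S'=(s_1',\dots,s_K')$, and recall from \Cref{defi5} that $\sigma$ is obtained from $\pi$ by inserting the symbols $N+1,\dots,N+K$ sequentially after $s_1,\dots,s_K$, and similarly for $\sigma'$. The first step is to observe that the truncation operation $T(\cdot,\{N+1,\dots,N+K\})$ recovers $\pi$ from $\sigma$ and $\pi'$ from $\sigma'$, so that $\pi$ and $\pi'$ are themselves controlled; in fact one should first check $d_B(\pi,\pi') \le d_B(\sigma,\sigma') \le t$, using \Cref{lemma9} in the contrapositive direction (removing a symbol cannot increase $d_B$, since each such removal is a truncation step whose effect on $d_B$ is governed by whether the removed symbol sits at a jump point).

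The core argument is a counting/injection argument. For each index $m$ with $s_m \neq s_m'$, I want to exhibit a breakpoint of $\sigma^{-1}\comp\sigma'$ (equivalently a pair in $A(\sigma)\setminus A(\sigma')$, invoking \Cref{lemma1} and \Cref{rem3}) that is ``located at'' the symbol $N+m$, and to argue these breakpoints are distinct across different $m$. The key local observation is that in $\sigma$ the symbol $N+m$ appears immediately after $s_m$ (up to other large inserted symbols, which by the descending-order convention in \Cref{defi5} are themselves among the inserted block and do not affect the analysis of the pair following $N+m$ in the relevant direction), whereas in $\sigma'$ the symbol $N+m$ appears immediately after $s_m'$. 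Since $s_m \neq s_m'$, the element preceding $N+m$ differs between $\sigma$ and $\sigma'$, which forces a breakpoint involving $N+m$ in the symmetric difference $A(\sigma)\,\Delta\,A(\sigma')$. Summing over all $m \in H(S,S')$ and noting that the symbol $N+m$ uniquely identifies the index $m$, these breakpoints are pairwise distinct, so $|A(\sigma)\setminus A(\sigma')| \ge |H(S,S')|$, hence $d_B(\sigma,\sigma') \ge |H(S,S')|$ by \Cref{lemma1}. Combined with $d_B(\sigma,\sigma') \le t$ this gives $|H(S,S')| = H(S,S') \le t$ (using the paper's convention that $H(S,S')$ in the statement denotes its cardinality).

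Alternatively — and this may be the cleaner route to write — one can avoid re-deriving the breakpoint bound from scratch and instead reduce directly to \Cref{lemma10}: that lemma already states $d_B(E(\pi_1,S_1),E(\pi_2,S_2)) \ge |H(S_1,S_2)|$ for \emph{any} extensions of \emph{any} two base permutations. Applying it with $\pi_1 = \pi$, $S_1 = S$, $\pi_2 = \pi'$, $S_2 = S'$ yields $d_B(\sigma,\sigma') \ge |H(S,S')|$ immediately, and then the hypothesis $d_B(\sigma,\sigma') \le t$ finishes it in one line. The only subtlety to address before quoting \Cref{lemma10} is that it is phrased for two arbitrary base permutations $\pi_1,\pi_2$, and here $\sigma'$ is an \emph{arbitrary} received word rather than a codeword, so I must make sure $\pi' = T(\sigma',\{N+1,\dots,N+K\})$ is a genuine permutation in $\Db{S}_N$ and that $\sigma' = E(\pi',S')$ for a well-defined $S' \in [N]^K$ — this is exactly what Steps 1--2 of \Cref{algo: Decode2} assert, and it holds because truncating out all $K$ large symbols from any permutation of $[N+K]$ leaves a permutation of $[N]$, and re-inserting them recovers $\sigma'$ with a uniquely determined extension sequence.

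The main obstacle is the bookkeeping around repeated entries in $S$ and $S'$: when several large symbols are inserted after the same element, \Cref{defi5}'s descending-order convention means the inserted block $N+m_1 > N+m_2 > \cdots$ sits contiguously, and one must be careful that the ``element preceding $N+m$'' used in the charging argument is read correctly (it may itself be another inserted symbol). This is precisely the scenario that \Cref{lemma10}'s proof must already handle, so deferring to \Cref{lemma10} rather than redoing the breakpoint count is the safer plan; the write-up should therefore spend its effort justifying the applicability of \Cref{lemma10} to an arbitrary $\sigma'$ and then conclude in two lines.
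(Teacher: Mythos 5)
Your main route is correct: \Cref{lemma10} is stated for arbitrary $\pi_1,\pi_2\in\Db{S}_N$ and arbitrary extension sequences, so with $\sigma=E(\pi,S)$ and the hypothesis $\sigma'=E(\pi',S')$ (which guarantees $\pi'\in\Db{S}_N$ and $S'\in\MB{N}^K$) it immediately gives $d_B(\sigma,\sigma')\geq\Nm{H(S,S')}$, and combining with $d_B(\sigma,\sigma')\leq t$ finishes the proof. This is a genuinely different, and shorter, route than the paper's: the paper does not quote \Cref{lemma10} here but instead reruns a charging argument directly on $\sigma,\sigma'$ --- for each $m$ with $s_m\neq s'_m$ it looks at the runs of inserted symbols $(s_m,n_{k(m)},\cdots,n_1,N+m)$ in $\sigma$ and $(s'_m,n'_{k'(m)},\cdots,n'_1,N+m)$ in $\sigma'$, walks back from $N+m$ to the first disagreement, and extracts a distinct pair in $A(\sigma)\setminus A(\sigma')$ for each such $m$, giving $\Nm{A(\sigma)\setminus A(\sigma')}\geq\Nm{H(S,S')}$. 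Your reduction buys brevity and avoids duplicating bookkeeping already done in the proof of \Cref{lemma10}; the paper's version is self-contained and is essentially the careful form of your ``first route'': note that your naive phrasing there (``the element preceding $N+m$ differs'') is not literally true when several symbols are inserted after the same extension point --- e.g.\ $N=2$, $K=2$, $S=(1,1)$, $S'=(2,2)$ give $\sigma=(1,4,3,2)$, $\sigma'=(1,2,4,3)$, where the immediate predecessor of $3$ is $4$ in both --- which is exactly why the first-disagreement-along-the-run device is needed; deferring to \Cref{lemma10}, as you ultimately do, sidesteps this.

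One side claim in your write-up is false and worth striking: it is not true that every permutation of $\MB{N+K}$ decomposes as $E(\pi',S')$ with $S'\in\MB{N}^K$. For instance, with $N=2$, $K=1$, the word $\sigma'=(3,2,1)$ satisfies $d_B(\sigma,\sigma')=1$ for $\sigma=(1,3,2)$, yet cannot be an extension, since every inserted symbol must follow some element of $\MB{N}$ (and runs of inserted symbols must appear in descending order). This does not damage your proof of the lemma as stated, because the decomposition $\sigma'=E(\pi',S')$ is an explicit hypothesis of \Cref{lemma11}; but it should not be presented as something that holds for an arbitrary received word.
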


\begin{proof} Suppose $S=\SB{s_1,s_2,\cdots,s_{K}}$, $S'=\SB{s'_1,s'_2,\cdots,s'_K}$. Then, according to \Cref{theo4}, $S\in\Se{A}(N,K,t)$. Let $\Se{M}=\LB{m|s_m\neq s_m',1\leq m\leq K}$. For all $m\in\Se{M}$, it follows from \Cref{defi6} that there exist subsequences of $\sigma,\sigma'$: $\V{p}_m=(s_m,n_{k(m)},n_{k(m)-1},\cdots,n_1,N+m)$ and $\V{p}'_m=(s'_m,n_{k'(m)}',n_{k'(m)-1}',\cdots,n_1',N+m)$, where $k(m),k'(m)\in\MB{K}$, $n_1,n_2,\cdots,n_{k(m)},n_1',n_2',\cdots,n_{k(m)'}'\in\MB{N+1:N+K}$. Note that $s_m\neq s'_m$, which means that $(s_m,n_{k(m)},n_{k(m)-1},\cdots,n_1)\neq (s'_m,n_{k'(m)}',n_{k'(m)-1}',\cdots,n_1')$. Let 
\begin{equation*}
i(m)=\min\limits_{\substack{1\leq i\leq \min \LB{k(m),k'(m)}\\n_i\neq n'_i}} i.
\end{equation*} 
Then $n_{i(m)}\neq n'_{i(m)}$ and $n_{i(m)-1}=n'_{i(m)-1}$, where we let $n_0=n_0'=N+m$ if $i(m)=1$.

Recall the notion of characteristic sets in \Cref{defi3}. We know that $(n_{i(m)},n_{i(m)-1})\in A(\sigma)$ and $(n_{i(m)}',n_{i(m)-1}')\in A(\sigma')$. These two conditions along with the fact that $n_{i(m)}\neq n'_{i(m)}$ and $n_{i(m)-1}=n'_{i(m)-1}$ imply that $(n_{i(m)},n_{i(m)-1})\in \SB{A(\sigma)\setminus A(\sigma')}$ for all $m\in\Se{M}$. Notice that for all $s_m\in\LB{s_m:\ m\in\Se{M}}=H(S,S')$, the associated subsequences $\V{p}_{m}$ start with different $s_m$ and they do not overlap, which indicates that the pairs $(n_{i(m)},n_{i(m)-1})$ are distinct. Then $\Nm{A(\sigma)\setminus A(\sigma')}\geq \Nm{H(S,S')}$, which is equivalent to $H(S,S')\leq d_B(\sigma,\sigma')\leq t$.
\end{proof}

From \Cref{lemma11}, the receiver first computes $\pi'=T(\sigma',\LB{N+1,\cdots,N+K})$ and derives the extension sequence $S'$ such that $\sigma'=E(\pi',S')$. Then, the receiver decodes $\hat{S}=\varphi\comp\alpha^{(q,2t)}(\pi)\in\Se{A}(N,K,t)$ from $S'$ such that $\Nm{H(S',\hat{S})}\leq t$ and derives $\hat{\bm{\alpha}}$ from $\hat{S}$. From \Cref{lemma9}, $d_B(\pi,\pi')\leq d_B(\sigma,\sigma')\leq t$ follows. Then, the receiver can apply \Cref{algo: Decode1} to compute $\hat{\pi}$ from $\pi'$ and $\hat{\bm{\alpha}}$ reliably. The decoding scheme for the systematic $t$-block permutation code $\Se{C}$ constructed in \Cref{theo4} is then complete.

\subsection{Order-optimal Systematic $t$-Block Permutation Codes}
\label{subsec: optimalsyscodes}
\Cref{theo4} presents the construction of systematic $t$-block permutation codes with $K$ redundant symbols based on a $t$-auxiliary set $\Se{A}(N,K,t)$. When $N$ is sufficiently large and $K$ is relatively small compared to $N$, the code rate is $1-\Theta(\frac{K}{N})$, which is not necessarily order-optimal.
In this section, based on the upcoming \Cref{lemma14} and \Cref{theo6}, we provide an explicit construction of a $t$-auxiliary set of length $K=56t$ in \Cref{theo7}, from which we are able to explicitly construct an order-optimal permutation code by \Cref{theo4}.

\begin{lemma}\label{lemma14} For all $k,N\in\Db{N}^{*}$, $k>3$, $N>k^2$, consider an arbitrary subset $Y\subset \MB{k}$, where $\Nm{Y}=M<k$, $Y=\LB{i_1,i_2,\cdots,i_M}$, then 
\begin{equation}
\Tx{LCM}\SB{N+i_1,N+i_2,\cdots,N+i_M}>N^{M-\frac{k}{2}}.
\end{equation}
\end{lemma}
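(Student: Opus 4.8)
The plan is to bound the least common multiple from below by comparing $\Tx{LCM}\SB{N+i_1,\cdots,N+i_M}$ with the product $\Prd{j=1}{M}(N+i_j)$, and to control the multiplicative gap between them by the pairwise greatest common divisors. The elementary identity $\Tx{LCM}(a,b)=\frac{ab}{\Tx{GCD}(a,b)}$ generalizes to the inclusion-exclusion style estimate
\begin{equation*}
\Tx{LCM}\SB{N+i_1,\cdots,N+i_M}\geq \frac{\Prd{j=1}{M}(N+i_j)}{\Prd{1\leq j<l\leq M}{}\Tx{GCD}(N+i_j,N+i_l)};
\end{equation*}
this holds because any prime power $p^e$ dividing some $N+i_j$ but appearing to a strictly smaller total power in the denominator must still divide the LCM. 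So the first step is to establish (or cite) this product-over-pairs lower bound on the LCM.

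Second, I would bound each pairwise GCD. Since $\Tx{GCD}(N+i_j,N+i_l)$ divides the difference $(N+i_l)-(N+i_j)=i_l-i_j$, and $i_j,i_l\in\MB{k}$ are distinct, we get $\Tx{GCD}(N+i_j,N+i_l)\leq \Nm{i_l-i_j}\leq k-1<k$. There are $\binom{M}{2}$ such pairs, so the denominator above is strictly less than $k^{\binom{M}{2}}=k^{M(M-1)/2}$. Combining with the trivial lower bound $\Prd{j=1}{M}(N+i_j)>N^M$ on the numerator yields
\begin{equation*}
\Tx{LCM}\SB{N+i_1,\cdots,N+i_M}> \frac{N^M}{k^{M(M-1)/2}}.
\end{equation*}

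Third, I would close the gap to the claimed bound $N^{M-k/2}$ using the hypothesis $N>k^2$. It suffices to show $\frac{N^M}{k^{M(M-1)/2}}\geq N^{M-k/2}$, i.e. $N^{k/2}\geq k^{M(M-1)/2}$. Taking logarithms, this is $\frac{k}{2}\log N\geq \frac{M(M-1)}{2}\log k$; since $M<k$ gives $M(M-1)<k(k-1)<k^2$, and $N>k^2$ gives $\log N>2\log k$, we have $\frac{k}{2}\log N> k\log k$ and $\frac{M(M-1)}{2}\log k<\frac{k^2}{2}\log k$, so it remains only to check $k\log k\geq \frac{k^2}{2}\log k$ — which is false for $k>2$, so the crude bound $M(M-1)<k^2$ is too lossy and must be sharpened. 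The real estimate needed is $M(M-1)\leq k$ on the relevant scale, which does not hold; instead one should keep $M\leq k-1$ and use $N>k^2$ more carefully: $N^{k/2}=(N^{1/2})^{k}>(k)^{k}\geq k^{M}\cdot k^{M-1}\cdots$ — the point is that $k^{k}\geq k^{M(M-1)/2}$ whenever $M(M-1)/2\leq k$, equivalently roughly $M\leq \sqrt{2k}$.

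The main obstacle, therefore, is that the naive pairwise-GCD bound loses too much when $M$ is close to $k$, so the argument must be arranged to exploit that the hypothesis is $N>k^2$ (not merely $N>k$) and that we only need the exponent $M-k/2$, which is already a weak target. I expect the cleanest route is: use $\Tx{LCM}> N^M/k^{\binom{M}{2}}$ as above, then observe $N>k^2$ implies $N^{1/2}>k$, hence $k^{\binom{M}{2}}<N^{\binom{M}{2}/2}=N^{M(M-1)/4}$, giving $\Tx{LCM}>N^{M-M(M-1)/4}$; and finally check $M-\frac{M(M-1)}{4}\geq M-\frac{k}{2}$, i.e. $\frac{M(M-1)}{4}\leq \frac{k}{2}$, i.e. $M(M-1)\leq 2k$. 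This still fails for $M$ near $k$, so the decisive step is a more careful accounting — likely replacing the product of all $\binom{M}{2}$ pairwise GCDs by a telescoping or spanning-tree bound that only multiplies $M-1$ of the GCDs (each $<k$), yielding denominator $<k^{M-1}$ and hence $\Tx{LCM}>N^M/k^{M-1}>N^M/N^{(M-1)/2}=N^{(M+1)/2}\geq N^{M-k/2}$ once $(M+1)/2\geq M-k/2$, i.e. $M\leq k+1$, which is exactly guaranteed by $M<k$. Verifying that only $M-1$ pairwise GCDs are truly needed in the LCM lower bound — essentially that for each prime, the "excess" multiplicity is captured along a spanning tree of the $M$ numbers — is the crux, and I would prove it by a prime-by-prime argument: for a fixed prime $p$, order the $N+i_j$ by $p$-adic valuation and note the sum of all but the largest valuation is at most the sum over a path, which is dominated by $(M-1)$ pairwise minima.
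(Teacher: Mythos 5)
The decisive step of your plan is false. The claim that the least common multiple satisfies a ``spanning-tree'' bound, i.e.\ that $\prod_{j=1}^{M}a_j/\mathrm{LCM}(a_1,\ldots,a_M)$ is bounded by a product of only $M-1$ pairwise greatest common divisors, fails already for $(a_1,a_2,a_3)=(6,10,15)$: here $\mathrm{LCM}=30$, while $\frac{6\cdot 10\cdot 15}{\gcd(6,10)\gcd(10,15)}=90>30$, and the same failure occurs for every choice of two of the three pairs. Your prime-by-prime sketch does not repair this: for a fixed prime $p$ the sum of all but the largest $p$-adic valuation is indeed a sum of $M-1$ pairwise minima, but along a chain that depends on $p$; aggregating over primes you can therefore only charge the excess to the product over \emph{all} $\binom{M}{2}$ pairs, which is exactly the bound you already (correctly) discarded as too lossy when $M$ is close to $k$. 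Moreover, the consequence you need, $\mathrm{LCM}(N+i_1,\ldots,N+i_M)>N^{M}/k^{M-1}$, is false in the very setting of the lemma: take $k=31$, $M=3$, $(i_1,i_2,i_3)=(1,16,27)$, and choose $N$ arbitrarily large with $N\equiv -1 \pmod{390}$ and $N\equiv -5\pmod{11}$ (possible by CRT). Then $3,5$ divide both $N+1$ and $N+16$, $2,13$ divide both $N+1$ and $N+27$, and $11$ divides both $N+16$ and $N+27$, so $\frac{(N+1)(N+16)(N+27)}{\mathrm{LCM}}\geq 2\cdot 3\cdot 5\cdot 11\cdot 13=4290>961=k^{M-1}$, hence $\mathrm{LCM}<N^{M}/k^{M-1}$ for $N$ large. (The lemma itself is unaffected, since its target exponent $M-\frac{k}{2}$ is very weak here.)

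The parts of your argument that are sound (each pairwise $\gcd$ divides a difference of two elements of $\MB{k}$ and so is less than $k$; the all-pairs lower bound on the LCM) cannot reach the target on their own, as you observed. What is missing is a bound on the excess $\prod_j (N+i_j)/\mathrm{LCM}$ that is uniform in $M$, and this is what the paper's route supplies: it invokes the classical fact that $\frac{(N+1)(N+2)\cdots(N+k)}{\mathrm{LCM}(N+1,\ldots,N+k)}$ divides $(k-1)!$ for the \emph{full} window of $k$ consecutive integers, then removes the complement of $Y$ via $\mathrm{LCM}(N+1,\ldots,N+k)\leq \bigl[\prod_{s}(N+j_s)\bigr]\cdot\mathrm{LCM}(N+i_1,\ldots,N+i_M)$, which gives excess at most $(k-1)!<k!$ for every subset, and finally uses the Stirling-type estimate of \Cref{lemma6} together with $N>k^2$ to get $k!<k^{k}<N^{k/2}$. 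If you want to salvage your approach, you would need some substitute for the consecutive-integer divisibility fact; the pairwise-GCD bookkeeping alone, whether over all pairs or over a tree, does not suffice.
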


\begin{proof} The proof is in Appendix \ref{prooflemma14}.
\end{proof}

\begin{theo}\label{theo6} For all $N,k,d\in\Db{N}^{*}$, $N>k^2$, $k>3$, define a function $\beta^{(q,d,k)}:\ \Db{F}_q^{d}\to \MB{N+1}\times\MB{N+2}\times\cdots\times\MB{N+k}$ as follows:
\begin{equation}
\begin{split}
\beta^{(q,d,k)}(\bm{x})=&\SB{\beta_1^{(q,d,k)}(\bm{x}),\beta_2^{(q,d,k)}(\bm{x}),\cdots,\beta_{k}^{(q,d,k)}(\bm{x})}\\
\triangleq &(\gamma(\bm{x})\bmod (N+1),\gamma(\bm{x})\bmod(N+2),\\
&\cdots,\gamma(\bm{x})\bmod(N+k)),\\
\end{split}
\label{eqn:beta}
\end{equation}
where $\bm{x}=(x_{1},x_{2},\cdots, x_{d})\in\Db{F}_q^d$, $\gamma(\bm{x})\triangleq\Sma{i=1}{d} x_i q^{i-1}$. Then $\forall$ $\bm{x}_1,\bm{x}_2\in \Db{F}_q^{d}$, $\bm{x}_1\neq \bm{x}_2$,

\begin{equation}
d_H(\beta^{(q,d,k)}(\bm{x}_1),\beta^{(q,d,k)}(\bm{x}_2))>\frac{k}{2}-d(2+\log_N 2).
\end{equation}
\end{theo}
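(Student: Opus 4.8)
The plan is to bound the number of coordinates on which $\beta^{(q,d,k)}(\bm{x}_1)$ and $\beta^{(q,d,k)}(\bm{x}_2)$ agree, using \Cref{lemma14} together with the fact that $\gamma$ is injective. First I would set $z_j=\gamma(\bm{x}_1)-\gamma(\bm{x}_2)$ evaluated as an integer (not mod $q$); note $z\neq 0$ since $\bm{x}_1\neq\bm{x}_2$ and $\gamma$ is a base-$q$ expansion, so $\lvert z\rvert\leq q^{d}-1<q^{d}$. The key observation is that $\beta_i^{(q,d,k)}(\bm{x}_1)=\beta_i^{(q,d,k)}(\bm{x}_2)$ precisely when $(N+i)\mid z$. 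So if $Y\subset\MB{k}$ is the set of indices $i$ on which the two images agree, then $\Tx{LCM}\SB{N+i:\ i\in Y}$ divides $z$, hence $\Tx{LCM}\SB{N+i:\ i\in Y}\leq\lvert z\rvert<q^{d}$.

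Next I would apply \Cref{lemma14} with $M=\Nm{Y}$: provided $M<k$, we get $N^{M-k/2}<\Tx{LCM}\SB{N+i:\ i\in Y}<q^{d}$. Since $q<2(N^2-N)<2N^2$, we have $q^{d}<2^{d}N^{2d}$, so $N^{M-k/2}<2^{d}N^{2d}$, which upon taking $\log_N$ gives $M-\frac{k}{2}<2d+d\log_N 2$, i.e. $M<\frac{k}{2}+d(2+\log_N 2)$. Then the Hamming distance is $d_H=k-M>k-\frac{k}{2}-d(2+\log_N 2)=\frac{k}{2}-d(2+\log_N 2)$, as desired. The case $M=k$ (all coordinates agree) must be handled separately: then $\Tx{LCM}\SB{N+1,\cdots,N+k}\mid z$, and since this LCM is at least, say, $N+k>q^{d}$ would be false — instead I would note that $\Tx{LCM}\SB{N+1,\dots,N+k}$ is enormous (much larger than $q^d$ for the relevant parameter range, or one can invoke \Cref{lemma14} applied to a size-$(k-1)$ subset), forcing $z=0$, contradicting injectivity of $\gamma$; hence $M\leq k-1$ always and the bound holds.

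The main obstacle I anticipate is the bookkeeping around \Cref{lemma14}'s hypothesis $M<k$ and around converting the LCM bound into the stated inequality with the correct constant $(2+\log_N 2)$ — in particular making sure the $q<2N^2$ estimate is applied cleanly and that the $M=k$ degenerate case is genuinely impossible (this is where injectivity of $\gamma$, i.e. the base-$q$ digit representation, does the real work). Everything else is a short chain of elementary inequalities once \Cref{lemma14} is in hand.
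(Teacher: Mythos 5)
Your main line is exactly the paper's proof: let $M$ be the number of coordinates on which the two images agree, note that $\gamma(\bm{x}_1)\equiv\gamma(\bm{x}_2)$ modulo each $N+i$ in the agreement set and hence modulo their least common multiple, invoke \Cref{lemma14} to get $N^{M-\frac{k}{2}}<\Nm{\gamma(\bm{x}_1)-\gamma(\bm{x}_2)}<q^d<(2N^2)^d$, and conclude $M<\frac{k}{2}+d(2+\log_N 2)$, i.e. $d_H=k-M>\frac{k}{2}-d(2+\log_N 2)$. That part is correct and coincides with the paper.

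The gap is in your handling of the case $M=k$. It is not true that $M=k$ is impossible: the theorem only assumes $N>k^2$ and $k>3$, with $d$ arbitrary, so $q^d$ can far exceed $\Tx{LCM}(N+1,\cdots,N+k)$ and then two distinct vectors can agree in every coordinate. For instance, with $N=17$, $k=4$, $q=277$, $d=2$ one has $\Tx{LCM}(18,19,20,21)=23940<q^2$, so $\bm{x}_1=(0,0)$ and the vector $\bm{x}_2$ with $\gamma(\bm{x}_2)=23940$ satisfy $d_H=0$; your claim that this forces $z=0$ (contradicting injectivity of $\gamma$) is therefore false, and the fallback of applying \Cref{lemma14} to a size-$(k-1)$ subset only yields $\frac{k}{2}-d(2+\log_N 2)<1$, which does not imply $0>\frac{k}{2}-d(2+\log_N 2)$. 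The correct repair is that when $M=k$ the stated inequality holds vacuously: the argument in the proof of \Cref{lemma14} is valid for $M=k$ as well (the complementary product is empty), giving $\Tx{LCM}(N+1,\cdots,N+k)>N^{\frac{k}{2}}$, hence $N^{\frac{k}{2}}<q^d<(2N^2)^d$, so the right-hand side $\frac{k}{2}-d(2+\log_N 2)$ is negative and $d_H=0$ already exceeds it. (The paper glosses over this case by quoting \Cref{lemma14} only in its $M<k$ form; the one-line observation above closes it, whereas your proposed resolution, as written, does not.)
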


\begin{proof} For arbitrary $\bm{x}_1,\bm{x}_2\in \Db{F}_q^{d}$, $\bm{x}_1\neq \bm{x}_2$, let $\beta^{(q,d,k)}(\bm{x}_1)=(\beta_{1,1},\beta_{1,2},\cdots,\beta_{1,k})$, $\beta^{(q,d,k)}(\bm{x}_2)=(\beta_{2,1},\beta_{2,2},\cdots,\beta_{2,k})$. Let $Z=\LB{i:\ \beta_{1,i}=\beta_{2,i},\ 1\leq i\leq d}$, then $d_H(\beta^{(q,d,k)}(\bm{x}_1),\beta^{(q,d,k)}(\bm{x}_2))=k-\Nm{Z}=k-M$, where $M=\Nm{Z}$. 

Suppose $Z=\LB{i_1,i_2,\cdots,i_{M}}$. Let $\gamma_1=\gamma(\bm{x}_1)$, $\gamma_2=\gamma(\bm{x}_2)$. According to the definition of $\beta^{(q,d,k)}$ in (\ref{eqn:beta}),

\begin{eqnarray*}
\left\{
\begin{array}{cccl}
\gamma_1 &\equiv &\gamma_2 &\mod(N+i_1)\\
\gamma_1&\equiv &\gamma_2 &\mod(N+i_2)\\
 &\vdots &&\\
\gamma_1 &\equiv &\gamma_2 &\mod(N+i_{M}).
\end{array}
\right.
\end{eqnarray*}
Then,
\begin{equation*}
\gamma_1\equiv \gamma_2\mod \Tx{LCM}\SB{N+i_1,N+i_2,\cdots,N+i_{M}}.
\end{equation*}
Given that $\bm{x}_1,\bm{x}_2\in \Db{F}_q^{d}$, $\bm{x}_1\neq \bm{x}_2$, then $\gamma_1\neq \gamma_2$. From \Cref{lemma14}, it follows that
\begin{equation}\label{eqn:60}
\begin{split}
\Nm{\gamma_1-\gamma_2}\geq \Tx{LCM}\SB{N+i_1,N+i_2,\cdots,N+i_{M}}>N^{M-\frac{k}{2}}.
\end{split}
\end{equation}
Moreover, the condition $\bm{x}_1,\bm{x}_2\in \Db{F}_q^{d}$, $\bm{x}_1\neq \bm{x}_2$ implies that $0\leq \gamma_1,\gamma_2<q^d$ and $\gamma_1\neq \gamma_2$. Therefore,
\begin{equation}\label{eqn:61}
\begin{split}
\Nm{\gamma_1-\gamma_2}<q^{d}.
\end{split}
\end{equation}

According to (\ref{eqn:60}) and (\ref{eqn:61}), $N^{M-\frac{k}{2}}<\Nm{\gamma_1-\gamma_2}<q^{d}<(2N^2)^{d}$ is true, which means that $M-\frac{k}{2}<d(2+\log_N 2)$. Therefore $M<\frac{k}{2}+d(2+\log_N 2)$, and then
\begin{equation*}
\begin{split}
d_H(\bm{\beta}_1,\bm{\beta}_2)&=k-M>k-(\frac{k}{2}+d(2+\log_N 2))\\
&=\frac{k}{2}-d(2+\log_N 2).
\end{split}
\end{equation*}

The theorem is proved.
\end{proof}

\begin{exam}\label{exam9} Let $k=7$, $N=50$, $d=1$, $q=2503$, $\bm{x}_1=(280)$, $\bm{x}_2=(1008)$, then $\gamma_1=280$, $\gamma_2=1008$, and 
\begin{equation*}
\begin{split}
\bm{\beta}_1=&(280\bmod 51,280\bmod 52,\cdots,280\bmod 57)\\
=&(25,20,15,10,5,0,52),\\
\bm{\beta}_2=&(1008\bmod 51,1008\bmod 52,\cdots,1008\bmod 57)\\
=&(39,20,1,36,18,0,39).\\
\end{split}
\end{equation*}

Then $d_H(\bm{\beta}_1,\bm{\beta}_2)=5>\frac{k}{2}-d(2+\log_N 2)$, which is in accordance with \Cref{theo6}.
\end{exam}

Based on \Cref{theo6}, we provide an explicit construction of a $t$-auxiliary set $\Se{A}(N,56t,t)$ in the following \Cref{theo7}.

\begin{theo}\label{theo7} For all $N,k,t\in\Db{N}^{*}$, $k\geq 28t$, $k<\Floor{\sqrt{N}-\frac{1}{2}}$. Suppose $\Db{F}_q^{4t-1}=\LB{\bm{x}_1,\bm{x}_2,\cdots,\bm{x}_{q^{4t-1}}}$, where $q$ is a prime number such that $N^2-N<q<2N^2-2N$. For any $s\in \MB{q^{4t-1}}$, suppose $\bm{x}_s=\SB{x_1,x_2,\cdots,x_{4t-1}}$, let $\V{c}_s=\SB{c_1,c_2,\cdots,c_{2k}}$, $\beta^{(q,4t-1,k)}(\bm{x}_s)=\SB{\beta_1,\beta_2,\cdots,\beta_k}$ for all $1\leq i\leq k$, where $\V{c}_s$ is defined as follows:

\begin{equation}
\begin{cases}
&c_{2i}= (i-1)\Floor{\frac{N}{k}}+1+\SB{\beta_i \mod\Floor{\frac{N}{k}}},\\
&c_{2i-1}=(i-1)\Floor{\frac{N}{k}}+ 1+\Bigl\lfloor{\frac{\beta_i}{\Floor{\frac{N}{k}}}\Bigr\rfloor}.
\end{cases}
\label{eqn: construct1}
\end{equation}

Then $\Se{A}(N,2k,t)=\LB{\V{c}_s:\ s\in\MB{q^{4t-1}}}$ is a $t$-auxiliary set with cardinality $q^{4t-1}$.
\end{theo}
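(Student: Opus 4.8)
The plan is to verify the two required properties of the claimed set $\Se{A}(N,2k,t)=\LB{\V{c}_s:\ s\in\MB{q^{4t-1}}}$: that it has cardinality exactly $q^{4t-1}$, and that any two distinct members have a Hamming set of cardinality at least $2t+1$. First I would observe that the map (\ref{eqn: construct1}) rewrites each coordinate $\beta_i\in\MB{N+i}$ as a pair $(c_{2i-1},c_{2i})$ via the quotient–remainder decomposition with respect to $\Floor{N/k}$, placed inside the $i$-th block of length $\Floor{N/k}$ of the range $\MB{N}$. Since $k<\Floor{\sqrt{N}-\frac12}$ forces $k^2<N$, hence $N+i<N+k\leq(\Floor{N/k})^2$ after a short estimate, each $\beta_i$ is recoverable from its pair, so the coordinates of $\V{c}_s$ actually land in the range $\MB{N}$ and the encoding from $\bm x_s$ to $\V{c}_s$ is injective; this will give $\Nm{\Se{A}(N,2k,t)}=q^{4t-1}$ and also show $\Se{A}(N,2k,t)\subset\MB{N}^{2k}$ as \Cref{defi9} requires.

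Next I would establish the distance property. Fix $s\neq s'$ and set $\bm\beta=\beta^{(q,4t-1,k)}(\bm x_s)$, $\bm\beta'=\beta^{(q,4t-1,k)}(\bm x_{s'})$. The key point is that whenever $\beta_i\neq\beta_i'$, the two index-$i$ pairs $(c_{2i-1},c_{2i})$ and $(c'_{2i-1},c'_{2i})$ differ in at least one of the two coordinates (again by injectivity of the quotient–remainder decomposition), so that index $i$ contributes at least one element to the Hamming set $H(\V{c}_s,\V{c}_{s'})$; moreover these contributions are distinct across different $i$ because they lie in disjoint blocks of $\MB{N}$. Hence $\Nm{H(\V{c}_s,\V{c}_{s'})}\geq d_H(\bm\beta,\bm\beta')$. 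By \Cref{theo6} applied with $d=4t-1$, $d_H(\bm\beta,\bm\beta')>\frac{k}{2}-(4t-1)(2+\log_N 2)$. It remains to check that the hypotheses $k\geq 28t$ and $N$ large enough (guaranteed by $k^2<N$, so $N>k^2\geq(28t)^2$, making $\log_N 2$ negligible, in particular $2+\log_N 2<\frac{9}{4}$ say) make the right-hand side at least $2t$; since the Hamming-set cardinality is an integer, this yields $\Nm{H(\V{c}_s,\V{c}_{s'})}\geq 2t+1$, which is exactly the condition in \Cref{defi9}.

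I would carry these out in that order: (i) the block/quotient–remainder recoverability and the containment $\Se{A}\subset\MB{N}^{2k}$; (ii) injectivity, giving the cardinality; (iii) the per-coordinate contribution argument bounding $\Nm{H}$ below by $d_H(\bm\beta,\bm\beta')$; (iv) plugging in \Cref{theo6} and the numeric bookkeeping $\frac{k}{2}-(4t-1)(2+\log_N 2)\geq 2t$. The main obstacle is step (iv): one must be careful that the constant $56t=2k$ (equivalently $k\geq 28t$) is exactly what is needed to absorb the factor $(4t-1)(2+\log_N 2)\approx 9t$ while leaving a margin of at least $2t$, and this requires a clean lower bound on $N$ — here supplied by $N>k^2$ — to control the $\log_N 2$ term; the geometric/combinatorial steps (i)–(iii) are essentially bookkeeping on the disjoint-block structure of the construction.
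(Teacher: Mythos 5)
Your proposal is correct and follows essentially the same route as the paper's proof: interpret $(c_{2i-1},c_{2i})$ as the base-$\Floor{\frac{N}{k}}$ digits of $\beta_i$ (valid since $\Floor{\frac{N}{k}}^2>N+k$ under the hypothesis on $k$), note the blocks are disjoint so each index with $\beta_i\neq\beta_i'$ contributes a distinct element to $H(\V{c}_s,\V{c}_{s'})$, and then invoke \Cref{theo6} together with $k\geq 28t$ to push the bound past $2t$. Your numeric bookkeeping ($2+\log_N 2<\tfrac{9}{4}$) is slightly tighter than the paper's ($<3$) but both close the gap, and your explicit injectivity argument for the cardinality claim is a small addition the paper leaves implicit.
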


\begin{proof} Without loss of generality, we prove the statement for $\bm{x}_1,\bm{x}_2\in\Db{F}_q^{4t-1}$, $\bm{x}_1\neq \bm{x}_2$, let $\bm{\beta}_1=\beta^{(q,4t-1,k)}(\bm{x}_1)$, $\bm{\beta}_2=\beta^{(q,4t-1,k)}(\bm{x}_2)$. Then, according to \Cref{theo6},
\begin{equation*}
\begin{split}
d_H(\bm{\beta}_1,\bm{\beta}_2)&>\frac{k}{2}-(4t-1)(2+\log_N 2)\\
&>\frac{k}{2}-(12t-3)>\frac{28t}{2}-12t=2t.
\end{split}
\end{equation*} 

In equation (\ref{eqn: construct1}), let $m_i=(i-1)\Floor{\frac{N}{k}}+1$. Notice that $(c_{2i-1}-m_i)\Floor{\frac{N}{k}}+(c_{2i}-m_i)=\beta_i$, for $1\leq i\leq k$. Given $\beta_i\leq N+k$ for all $1\leq i\leq k$, and $k<\Floor{\sqrt{N}-\frac{1}{2}}$, it follows that

\begin{equation*}
\begin{split}
\Bigl\lfloor{\frac{N}{k}\Bigr\rfloor}^2&>\SB{\frac{N}{k}-1}^2\geq \SB{\frac{N}{\sqrt{N}-\frac{3}{2}}-1}^2\\
&>\SB{\sqrt{N}+\frac{3}{2}-1}^2=\SB{\sqrt{N}+\frac{1}{2}}^2\\
&>N+\sqrt{N}>N+k\geq  \beta_i.
\end{split}
\end{equation*} 
Therefore, $\SB{c_{2i-1}-m_i,c_{2i}-m_i}$ is exactly the $\Floor{\frac{N}{k}}$-ary representation of $\beta_i$, for all $1\leq i\leq k$.

Suppose $\bm{\beta}_1=(\beta_{1,1},\beta_{1,2},\cdots,\beta_{1,k})$ and $\bm{\beta}_2=(\beta_{2,1},\beta_{2,2},\cdots,\beta_{2,k})$. Let $Y=\LB{i:\ \beta_{1,i}\neq \beta_{2,i},\ 1\leq i\leq k}$, then $\Nm{Y}=d_H(\bm{\beta}_1,\bm{\beta}_2)$. Notice that for all $i\in Y$, $\beta_{1,i}\neq \beta_{2,i}$, then either $c_{1,2i-1}-m_i\neq c_{2,2i-1}-m_i$ or $c_{1,2i}-m_i\neq c_{2,2i}-m_i$, which means that 
\begin{equation}
\Nm{H(\V{c}_1,\V{c}_2)\cap\LB{c_{1,2i-1},c_{1,2i}}}\geq 1,\ i\in Y.
\label{coro52}
\end{equation}

Notice that $(i-1)\Floor{\frac{N}{k}}<c_{1,2i-1},c_{1,2i}\leq i\Floor{\frac{N}{k}}$, and therefore,
\begin{equation}
\LB{c_{1,2i-1},c_{1,2i}}\cap\LB{c_{1,2i'-1},c_{1,2i'}}=\emptyset,\ \forall\ 1\leq i<i'\leq k.
\label{coro53}
\end{equation}

From (\ref{coro52}) and (\ref{coro53}), 
\begin{equation*}
\begin{split}
\Nm{H(\V{c}_1,\V{c}_2)}&=\Sma{i=1}{k}\Nm{H(\V{c}_1,\V{c}_2)\cap\LB{c_{1,2i-1},c_{1,2i}}}\\
&\geq\Sma{i\in Y}{}\Nm{H(\V{c}_1,\V{c}_2)\cap\LB{c_{1,2i-1},c_{1,2i}}}\\
&\geq \Sma{i\in Y}{}1=\Nm{Y}=d_H(\bm{\beta}_1,\bm{\beta}_2)>2t.
\end{split}
\label{coro51}
\end{equation*}
From \Cref{defi9}, $\Se{A}(N,k,t)$ is indeed a $t$-auxiliary set.
\end{proof}

\begin{rem}\label{rem8} Suppose we use $k=28t$ in \Cref{theo7} to construct a $t$-auxiliary set $\Se{A}(N,56t,t)$. Then the code $\Se{C}_B^{\Tx{sys}}(N,56t,t)$ constructed using \Cref{theo4} based on this $\Se{A}(N,56t,t)$ is an order-optimal systematic $t$-block permutation code.
\end{rem}

\section{Comparison of Cardinality of the Codebooks}
\label{sec: comparison}
In \Cref{sec: non-syscodes}, we constructed a $t$-generalized Cayley code $\Se{C}_{G}(N,t)=\Se{C}_{\bm{\alpha}}(N,4t)$. Let the cardinality of $\Se{C}_{G}(N,t)$ be $A_G(N,t)$. In \cite{6875376}, a $t$-generalized Cayley code with cardinality $A_{\rho_gC}(N,t)$ was constructed. We next compare in \Cref{lemma15} the logarithms of the cardinalities of these two codes, which reflects the redundancy in terms of bits. We show that the proposed scheme requires a smaller number of redundant bits than its counterpart presented in \cite{6875376} for sufficiently large $N$ and $t=o(\frac{N}{\log N})$.

\begin{lemma}\label{lemma15} 
$\log \Nm{A_G(N,t)}>\log \Nm{A_{\rho_gC}(N,t)}$ when $t<\frac{N}{(16\log N+8)}$ for sufficiently large $N$.
\end{lemma}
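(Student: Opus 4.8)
The plan is to compare the two codes through their redundancies, $r\triangleq\log N!-\log\Nm{A_G(N,t)}$ and $r'\triangleq\log N!-\log\Nm{A_{\rho_gC}(N,t)}$. Since both are subtracted from the common quantity $\log N!$, the asserted inequality $\log\Nm{A_G(N,t)}>\log\Nm{A_{\rho_gC}(N,t)}$ is equivalent to $r<r'$, so it suffices to bound $r$ from above and $r'$ from below.

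First I would bound $r$. Recall $\Se{C}_G(N,t)=\Se{C}_{\bm{\alpha}}(N,4t)$ (the largest such cell); by \Cref{theo2} and \Cref{lemma3} this code is the largest cell of the partition of $\Db{S}_N$ induced by the syndrome $\alpha^{(q,8t)}\colon\Db{S}_N\to\Db{F}_q^{16t-1}$, so the Pigeonhole Principle gives $\Nm{A_G(N,t)}\geq N!/q^{16t-1}$, i.e.\ $r\leq(16t-1)\log q$. Since $q$ is a prime with $q<2(N^2-N)<2N^2$ we have $\log q<2\log N+1$, and hence
\begin{equation*}
r<(16t-1)(2\log N+1)<16t(2\log N+1)=2t(16\log N+8).
\end{equation*}
Under the hypothesis $t<N/(16\log N+8)$ this yields $r<2N$, and retaining the factor $16t-1$ in place of $16t$ gives the slightly sharper bound $r<2N-(2\log N+1)$.

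Next I would bound $r'$ from below using the explicit cardinality of the interleaving-based construction of \cite{6875376}. The interleaving step there discards, on the logarithmic scale, a $\Theta(N)$ amount of information before any redundancy is spent on error correction, so $\Nm{A_{\rho_gC}(N,t)}\le N!\cdot 2^{-\Theta(N)}$; tracking the constant, and applying \Cref{lemma6} to the factorials appearing in the closed form of $\Nm{A_{\rho_gC}(N,t)}$, one obtains $r'\ge 2N-O(\log N)$ for all sufficiently large $N$. The threshold $N/(16\log N+8)$ in the statement is precisely the value that places the bound $r<2N-(2\log N+1)$ below this lower estimate for $r'$, so $r<r'$ holds once $N$ is large enough that the $O(\log N)$ correction on the side of \cite{6875376} is absorbed by the slack $2\log N+1$ retained on our side, and large enough that the range $1\le t<N/(16\log N+8)$ is nonempty.

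The main obstacle is pinning down the leading constant $2$ in the lower bound for $r'$: the excerpt only records that the code of \cite{6875376} has redundancy of order $\Theta(N)$, so the real work is to unpack the interleaving parameters of that construction, recover the exact closed form of $\Nm{A_{\rho_gC}(N,t)}$, and use \Cref{lemma6} to show its redundancy equals $2N$ up to a correction smaller than $2\log N+1$ — exactly what is needed for the comparison to close at the stated threshold. Everything after that is the elementary arithmetic above.
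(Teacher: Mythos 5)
Your overall strategy is the same as the paper's: compare the two cardinalities relative to $\log N!$, bound the redundancy of the new code by counting syndromes, and show it falls below the $\approx 2N$ redundancy of the interleaving-based code. Your treatment of the $A_G$ side is correct and matches the paper exactly: $\Nm{A_G(N,t)}\geq N!/q^{16t-1}$ by the Pigeonhole Principle applied to $\alpha^{(q,8t)}:\Db{S}_N\to\Db{F}_q^{16t-1}$, and $\log q<2\log N+1$ gives $r<16t(2\log N+1)=2t(16\log N+8)<2N$ under the hypothesis on $t$.

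The gap is on the other side: you never establish the lower bound on $r'$, and you say so yourself ("the real work is to unpack the interleaving parameters \ldots recover the exact closed form"). That step is the entire content of the comparison; without it the lemma is not proved. The paper does not re-derive the cardinality of the construction of \cite{6875376} from its interleaving parameters — it simply invokes the known estimate from \cite[Appendix A]{7302598},
\begin{equation*}
\log \Nm{A_{\rho_gC}(N,t)}\leq N\log N-(2+\log e)N+O\left((\log N)^2\right),
\end{equation*}
and combines it with \Cref{lemma6}'s Stirling bound $\log N!>(N+\tfrac12)\log N-(\log e)N$ to get
\begin{equation*}
\log \Nm{A_G(N,t)}-\log \Nm{A_{\rho_gC}(N,t)}>\tfrac12\log N+2N-16t(2\log N+1)+O\left((\log N)^2\right),
\end{equation*}
which is positive in the stated regime. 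Two further cautions about your closing arithmetic: the correction term coming from the cited bound is $O\left((\log N)^2\right)$, not $O(\log N)$, so it cannot be "absorbed by the slack $2\log N+1$" you retained from replacing $16t-1$ by $16t$; the positivity must instead come from the gap $2N-2t(16\log N+8)$ together with "sufficiently large $N$," as in the paper. Also, you should not expect to prove $r'=2N$ up to a correction smaller than $2\log N+1$ — that is stronger than what the cited estimate gives and stronger than what the lemma needs.
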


\begin{proof} 

We know from \cite[Appendix A]{7302598} that:
\begin{equation}
\log \lvert A_{\rho_gC}(N,t)\rvert \leq N\log N-(2+\log e)N+O\left((\log N)^2\right).
\label{eqn:appendix}
\end{equation}
Also,
\begin{equation}
\begin{split}
&\log \lvert A_{G}(N,t)\rvert\\
>& \log N!-\SB{16t(2\log N+1)}\\
>& \SB{N+\frac{1}{2}}\log N-(\log e)N-16t(2\log N+1).\\
\end{split}
\end{equation}
Then,
\begin{equation}
\begin{split}
&\log \Nm{A_{G}(N,t)}-\log \Nm{A_{\rho_gC}(N,t)}\\
>&\SB{N+\frac{1}{2}}\log N-(\log e)N-16t(2\log N+1)\\
&-\SB{N\log N-(2+\log e)N+O\left((\log N)^2\right)}\\
=&\frac{1}{2}\log N+2N-16t(2\log N+1)+O\left((\log N)^2\right)\\
\end{split}
\end{equation}
for sufficiently large $N$ and $t<\frac{N}{(16\log N+8)}$.

From the above discussion, our proposed code in \Cref{sec: non-syscodes} indeed has a higher rate than the interleaving-based code for sufficiently large $N$ and $t=o\SB{\frac{N}{\log N}}$.
\end{proof}

Based on \Cref{rem8} in \Cref{sec: syscodes}, we presented a construction of systematic $t$-generalized Cayley code $\Se{C}'_{G}(N,t)=\Se{C}^{\Tx{sys}}_B(N,56\cdot4t,4t)=\Se{C}^{\Tx{sys}}_B(N,224t,4t)$ with cardinality $A'_G(N,t)$.

In the next \Cref{lemma16}, we compare the logarithm of $A'_{G}(N,t)$ with that of $A_{\rho_gC}(N,t)$.

\begin{lemma}\label{lemma16} 
$A'_{G}(N,t)>A_{\rho_gC}(N,t)$ when $t<\min\LB{\frac{N}{112\log N},\frac{1}{112}\Floor{\sqrt{N}-\frac{1}{2}}}$ for sufficiently large $N$.
\end{lemma}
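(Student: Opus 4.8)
The plan is to follow the template of the proof of \Cref{lemma15}: bound $\log\Nm{A'_G(N,t)}$ from below by a Stirling estimate, invoke the known upper bound on $\log\Nm{A_{\rho_gC}(N,t)}$, and subtract. The one genuinely new ingredient is the cardinality count for $\Se{C}'_G(N,t)$: by \Cref{rem8} and \Cref{lemma3} it is a bona fide systematic $t$-generalized Cayley code, and each of its codewords of length $N$ carries a message permutation of length $N-224t$ that is recovered by deleting the $224t$ inserted redundancy symbols (cf.\ \Cref{defi10}); since the encoding map of \Cref{theo4} is injective on messages, $A'_G(N,t)=(N-224t)!$. I would begin by recording this, and by noting that the hypothesis $t<\frac{1}{112}\Floor{\sqrt N-\tfrac12}$ (for $N$ large) is exactly what forces the parameter $k=28\cdot 4t=112t$ of the auxiliary-set construction in \Cref{theo7} to satisfy $k<\Floor{\sqrt{N-224t}-\tfrac12}$ and $N-224t>k^2$, so that the code is well defined.

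Next I would establish the lower bound. Writing $\log\Nm{A'_G(N,t)}=\sum_{n=1}^{N-224t}\log n=\big(\sum_{n=1}^{N}\log n\big)-\sum_{n=N-224t+1}^{N}\log n$ and bounding each of the $224t$ deleted terms by $\log N$ gives $\log\Nm{A'_G(N,t)}\ge\big(\sum_{n=1}^{N}\log n\big)-224t\log N$, whence by \Cref{lemma6},
\[
\log\Nm{A'_G(N,t)}\ \ge\ \Big(N+\tfrac12\Big)\log N-(\log e)N-224t\log N .
\]
Then I would subtract the bound $\log\Nm{A_{\rho_gC}(N,t)}\le N\log N-(2+\log e)N+O((\log N)^2)$ of (\ref{eqn:appendix}). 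The $N\log N$ terms cancel, and $-(\log e)N$ against $-(2+\log e)N$ leaves a surplus of $2N$, so
\[
\log\Nm{A'_G(N,t)}-\log\Nm{A_{\rho_gC}(N,t)}\ \ge\ \tfrac12\log N+2N-224t\log N+O((\log N)^2).
\]
When $t<\frac{N}{112\log N}$ we have $224t\log N<2N$, so the right-hand side is strictly positive for all sufficiently large $N$; this gives $A'_G(N,t)>A_{\rho_gC}(N,t)$. The point, as in \Cref{lemma15}, is that the interleaving-based code of \cite{6875376} carries a redundancy of order $N$, whereas the systematic construction costs only the $224t\log N=\Theta(t\log N)$ bits spent on the inserted symbols, and the latter is the smaller of the two exactly in the regime $t=o(N/\log N)$.

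The only delicate part is the bookkeeping: one must fix the ``codeword length $=N$'' convention consistently so that $A'_G(N,t)=(N-224t)!$ rather than $N!$, and then check that the Galois-field prime $q$ and the auxiliary set attached to the reduced message length $N-224t$ still meet every hypothesis of \Cref{theo4} and \Cref{theo7}; this is precisely where the two stated bounds on $t$, together with ``$N$ sufficiently large,'' are consumed. Once that is in place, the Stirling comparison is routine and structurally identical to the proof of \Cref{lemma15}.
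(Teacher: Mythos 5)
Your proof is correct and takes essentially the same route as the paper's: the paper likewise lower-bounds $\log \Nm{A'_{G}(N,t)}$ by $(N+\tfrac{1}{2})\log N-(\log e)N-224t\log N$ via \Cref{lemma6}, subtracts the bound (\ref{eqn:appendix}) on $\log\Nm{A_{\rho_gC}(N,t)}$, and concludes that the difference $\tfrac{1}{2}\log N+2N-224t\log N+O\left((\log N)^2\right)$ is positive in the stated regime. Your extra bookkeeping---deriving the cardinality $(N-224t)!$ explicitly and tracing the condition $t<\frac{1}{112}\Floor{\sqrt{N}-\frac{1}{2}}$ back to the hypothesis $k<\Floor{\sqrt{N}-\frac{1}{2}}$ of \Cref{theo7} with $k=112t$---only makes explicit what the paper leaves implicit.
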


\begin{proof} 
We know from \Cref{lemma6} that:
\begin{equation}
\begin{split}
\log \Nm{A'_{G}(N,t)}&> (N+\frac{1}{2})\log N-(\log e)N-224t\log N.\\
\label{eqn:cardsys}
\end{split}
\end{equation}
Then it follows from (\ref{eqn:cardsys}) and (\ref{eqn:appendix}) that
\begin{equation}
\begin{split}
&\log \Nm{A'_{G}(N,t)}-\log \Nm{A_{\rho_gC}(N,t)}\\
>&\SB{N+\frac{1}{2}}\log N-(\log e)N-224t\log N\\
&-\SB{N\log N-(2+\log e)N+O\left((\log N)^2\right)}\\
=&\frac{1}{2}\log N+2N-224t\log N+O\left((\log N)^2\right).\\
\end{split}
\end{equation}
for sufficiently large $N$ and $t<\min\LB{\frac{N}{112\log N},\frac{1}{112}\Floor{\sqrt{N}-\frac{1}{2}}}$.

From the above discussion, our proposed systematic code indeed has a higher rate than the interleaving-based code, for sufficiently large $N$ and $t=o\SB{\frac{N}{\log N}}$, in the generalized Cayley distance.
\end{proof}

\section{Conclusion}
\label{sec: conclusion}
The generalized Cayley metric is a distance measure that generalizes the Kendall-tau metric and the Ulam metric. Interleaving was previously shown to be convenient in constructions of permutation codes in the generalized Cayley metric. However, interleaving incurs a noticeable rate penalty such that the constructed permutation codes cannot be order-optimal. In this paper, we presented a framework for constructing order-optimal permutation codes that does not require interleaving. Based on this framework, we then presented an explicit construction of systematic permutation codes from so-called extensions of permutations. We further provided a systematic construction that is order-optimal. Lastly, we proved that our proposed codes are more rate efficient than the existing coding schemes based on interleaving for sufficiently large $N$ and $t=o\SB{\frac{N}{\log N}}$.


\appendices
\section{Proof of \Cref{lemma1}}
\addtocounter{lemma}{-14}
\label{prooflemma1}
\begin{lemma}
For all $\pi_1,\pi_2\in\Db{S}_N$, 
\begin{equation*}
d_B(\pi_1,\pi_2)=\lvert A(\pi_2)\setminus A(\pi_1)\rvert=\lvert A(\pi_1)\setminus A(\pi_2)\rvert.
\end{equation*}
\end{lemma}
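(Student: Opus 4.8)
The plan is to establish the two equalities in turn. The equality $\lvert A(\pi_2)\setminus A(\pi_1)\rvert=\lvert A(\pi_1)\setminus A(\pi_2)\rvert$ is the easy half: the $N-1$ consecutive pairs of a permutation are pairwise distinct (if $(\pi(i),\pi(i+1))=(\pi(j),\pi(j+1))$ then $\pi(i)=\pi(j)$, so $i=j$), hence $\lvert A(\pi)\rvert=N-1$ for every $\pi\in\Db{S}_N$ by \Cref{defi3}, and therefore $\lvert A(\pi_1)\setminus A(\pi_2)\rvert=(N-1)-\lvert A(\pi_1)\cap A(\pi_2)\rvert=\lvert A(\pi_2)\setminus A(\pi_1)\rvert$. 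It then remains to prove $d_B(\pi_1,\pi_2)=\lvert A(\pi_1)\setminus A(\pi_2)\rvert$, which I would do via two matching inequalities. A fact I would isolate first and reuse is that the pair $(\pi(i),\pi(i+1))$ records, for the non-terminal symbol $\pi(i)$, its unique successor in $\pi$; consequently $A(\pi)$ determines $\pi$, and more locally: if symbols $a_1,\dots,a_\ell$ satisfy $(a_j,a_{j+1})\in A(\pi)$ for every $j$, then $a_1,\dots,a_\ell$ occupy consecutive positions of $\pi$, in that order.

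For $d_B(\pi_1,\pi_2)\le\lvert A(\pi_1)\setminus A(\pi_2)\rvert$, write $r=\lvert A(\pi_1)\setminus A(\pi_2)\rvert$ and cut $\pi_1$ at exactly the $r$ indices $i$ with $(\pi_1(i),\pi_1(i+1))\notin A(\pi_2)$, obtaining nonempty contiguous blocks $\psi_1,\dots,\psi_{r+1}$ with $\pi_1=(\psi_1,\dots,\psi_{r+1})$. By construction every consecutive pair internal to a block lies in $A(\pi_2)$, so by the isolated fact each $\psi_k$ appears as a contiguous substring of $\pi_2$; since the $\psi_k$ partition $[N]$, they also tile $\pi_2$, whence $\pi_2=(\psi_{\tau(1)},\dots,\psi_{\tau(r+1)})$ for some $\tau\in\Db{S}_{r+1}$. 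This $\tau$ is minimal: if $\tau(j+1)=\tau(j)+1$, then $\psi_{\tau(j)}$ is immediately followed by $\psi_{\tau(j)+1}$ both in $\pi_1$ and in $\pi_2$, so the pair straddling that boundary in $\pi_1$ would simultaneously be a cut pair (hence not in $A(\pi_2)$) and a consecutive pair of $\pi_2$, a contradiction. Thus $\tau\in\Db{D}_{r+1}$, and by \Cref{defi2} this decomposition witnesses $d_B(\pi_1,\pi_2)\le r$.

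For the reverse inequality $d_B(\pi_1,\pi_2)\ge\lvert A(\pi_1)\setminus A(\pi_2)\rvert$, take a decomposition achieving the distance: $d=d_B(\pi_1,\pi_2)$, $\pi_1=(\psi_1,\dots,\psi_{d+1})$, $\pi_2=(\psi_{\sigma(1)},\dots,\psi_{\sigma(d+1)})$ with $\sigma\in\Db{D}_{d+1}$. Any consecutive pair of $\pi_1$ lying strictly inside some $\psi_k$ is also a consecutive pair of $\pi_2$, because $\psi_k$ appears intact in $\pi_2$; hence a pair of $\pi_1$ can escape $A(\pi_2)$ only if it straddles one of the $d$ inter-block boundaries of $\pi_1$. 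Therefore $\lvert A(\pi_1)\setminus A(\pi_2)\rvert\le d=d_B(\pi_1,\pi_2)$, and combining the two inequalities finishes the proof.

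The step I expect to be the main obstacle is the middle portion of the upper-bound argument: turning ``all internal pairs of $\psi_k$ lie in $A(\pi_2)$'' into ``$\psi_k$ is a single contiguous run of $\pi_2$'', and then verifying that the induced block reshuffling $\tau$ is genuinely minimal rather than merely a permutation --- the minimality being exactly what is needed to invoke \Cref{defi2}. Both points reduce to the successor-function reading of $A(\pi)$, so once that fact is stated cleanly the remainder is routine bookkeeping (including the degenerate case $r=0$, where the construction collapses to $\pi_1=\pi_2$ and $d_B=0$).
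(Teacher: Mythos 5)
Your proposal is correct and follows essentially the same route as the paper: the paper's proof also identifies the elements of $A(\pi_1)\setminus A(\pi_2)$ with the block boundaries $\{i_1,\dots,i_d\}$ of a minimal decomposition, asserting the correspondence as a single ``if and only if.'' Your version is simply more complete --- you explicitly construct the decomposition by cutting $\pi_1$ at the breakpoints, verify that the induced block permutation $\tau$ is minimal, and obtain the symmetry claim directly from $\lvert A(\pi)\rvert=N-1$ rather than from the symmetry of $d_B$ --- but the underlying combinatorial content is identical.
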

\begin{proof} 
According to the symmetry property of the block permutation distance, it is sufficient to prove $d_B(\pi_1,\pi_2)=\lvert A(\pi_1)\setminus A(\pi_2)\rvert$.

Suppose $\pi_1,\pi_2\in\Db{S}_N$ such that $d_B(\pi_1,\pi_2)=d$. Then, there exists $\sigma\in\Db{S}_{d+1}$, $\psi_1,\psi_2,\cdots,\psi_{d+1}$, such that $\pi_1=\left(\psi_1,\psi_2,\cdots,\psi_{d+1}\right)$ and $\pi_2=\left(\psi_{\sigma(1)},\psi_{\sigma(2)},\cdots,\psi_{\sigma(d+1)}\right)$. Suppose $\psi_k=\pi_1\left[i_{k-1}+1:i_k\right]$ for $1\leq k\leq d+1$, where $0=i_0<i_1\cdots<i_d<i_{d+1}=N$. Then $(\pi_1(i),\pi_1(i+1))\in (A(\pi_1)\setminus A(\pi_2))$ if and only if $i\in\{i_1,\cdots,i_d\}$. Therefore, $\lvert A(\pi_1)\setminus A(\pi_2)\rvert=\lvert \{i_1,\cdots,i_d\}\rvert=d$.
\end{proof}

\section{Proof of \Cref{lemma2}}
\label{prooflemma2}
\begin{lemma} For all $\pi_1,\pi_2\in\Db{S}_N$, the following inequality holds,
  \begin{equation*}
    w_B\left(\pi_1\comp\pi_2\right)\leq w_B\left(\pi_1\right)+w_B\left(\pi_2\right).
  \end{equation*}

\end{lemma}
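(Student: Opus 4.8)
Proof proposal for Lemma 2 ($w_B(\pi_1\comp\pi_2)\le w_B(\pi_1)+w_B(\pi_2)$).

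The plan is to translate the weight inequality into a statement about characteristic sets using the tools already established, and then reduce it to a simple counting argument about which positions of $\pi_1\comp\pi_2$ can possibly be breakpoints. Recall from \Cref{defi4} and \Cref{lemma1} that $w_B(\pi)=|A(\pi)\setminus A(e)|=d_B(e,\pi)$. The key observation is that a consecutive pair $\big((\pi_1\comp\pi_2)(i),(\pi_1\comp\pi_2)(i+1)\big)=\big(\pi_1(\pi_2(i)),\pi_1(\pi_2(i+1))\big)$ lies in $A(e)$ — i.e., position $i$ is \emph{not} a breakpoint of $\pi_1\comp\pi_2$ — whenever two conditions hold simultaneously: first, that $(\pi_2(i),\pi_2(i+1))\in A(e)$, meaning $\pi_2(i+1)=\pi_2(i)+1$ so that position $i$ is not a breakpoint of $\pi_2$; and second, that $(\pi_1(\pi_2(i)),\pi_1(\pi_2(i)+1))\in A(e)$, meaning position $\pi_2(i)$ is not a breakpoint of $\pi_1$.

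Concretely, I would define the breakpoint index sets $B(\pi)\triangleq\{i\in[N-1]: \pi(i+1)\neq\pi(i)+1\}$, so that $w_B(\pi)=|B(\pi)|$. The contrapositive of the observation above says: if $i\in B(\pi_1\comp\pi_2)$, then either $i\in B(\pi_2)$, or else $\pi_2(i+1)=\pi_2(i)+1$ but $\pi_1(\pi_2(i)+1)\neq\pi_1(\pi_2(i))+1$, i.e. $\pi_2(i)\in B(\pi_1)$. Hence the map $i\mapsto \pi_2(i)$ (which is injective on $[N-1]$ since $\pi_2$ is a permutation, though one must note it need not map into $[N-1]$ — handle $\pi_2(i)=N$ as an automatic breakpoint of $\pi_2$ as well, since then $\pi_2(i+1)\ne\pi_2(i)+1$) sends $B(\pi_1\comp\pi_2)\setminus B(\pi_2)$ injectively into $B(\pi_1)$. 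This yields
\[
w_B(\pi_1\comp\pi_2)=|B(\pi_1\comp\pi_2)|\le |B(\pi_2)|+|B(\pi_1\comp\pi_2)\setminus B(\pi_2)|\le w_B(\pi_2)+w_B(\pi_1),
\]
which is the claim.

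The main obstacle — really the only subtle point — is the careful bookkeeping at the boundary: when $\pi_2(i)=N$, there is no index $\pi_2(i)+1$ in $[N]$, so I must verify that in this case $i$ is already counted in $B(\pi_2)$ (it is, since $\pi_2(i+1)\in[N-1]\ne N+1$), ensuring that the correspondence into $B(\pi_1)$ only needs to be defined on $B(\pi_1\comp\pi_2)\setminus B(\pi_2)$, where $\pi_2(i)\le N-1$ always holds. Everything else is a direct check of the two implications above from the definition of a minimal/consecutive pair. Alternatively, one could phrase the whole argument in the language of \Cref{lemma1} by exhibiting an explicit common refinement of the block partitions realizing $d_B(e,\pi_2)$ and $d_B(e,\pi_1)$, but the breakpoint-index counting above seems cleanest and avoids manipulating block decompositions directly.
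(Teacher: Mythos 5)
Your proposal is correct and follows essentially the same route as the paper's proof: both define the breakpoint index set $B(\pi)$ with $w_B(\pi)=\lvert B(\pi)\rvert$, observe that every $i\in B(\pi_1\comp\pi_2)\setminus B(\pi_2)$ satisfies $\pi_2(i)\in B(\pi_1)$, and use injectivity of $i\mapsto\pi_2(i)$ to conclude. Your explicit handling of the boundary case $\pi_2(i)=N$ is a detail the paper's proof leaves implicit, but the argument is the same.
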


\begin{proof} 

For $\pi\in \Db{S}_N$, define $B(\pi)$ as follows,
\begin{equation*}
 B(\pi)\triangleq \{i|\pi(i+1)\neq\pi(i)+1,\ 1\leq i<N\}.
\end{equation*}
Then, for all $i\in B(\pi)$, $(\pi(i),\pi(i+1))\notin A(e)$. Therefore,
\begin{equation*}
 B(\pi)=\{i|(\pi(i),\pi(i+1))\in \left(A(\pi)\setminus A(e)\right), 1\leq i<N\},
\end{equation*}
which indicates that 
\begin{equation}
\lvert B(\pi)\rvert=\lvert A(\pi)\setminus  A(e)\rvert=w_B(\pi).
\label{Bweight}
\end{equation}

Let $B_1=B(\pi_1)$, $B_2=B(\pi_2)$, $B_3=B(\pi_1\comp\pi_2)$. Then $\forall$ $i\in B_3$, 
\begin{equation*}
\pi_1\left(\pi_2(i+1)\right)\neq \pi_1\left(\pi_2(i)\right)+1.
\end{equation*}
Therefore, $i$ must satisfy at least one of the conditions below:
\begin{equation}
  \begin{split}
    &\{\pi_2(i+1)\neq\pi_2(i)+1\},\ or \\
    &\{\pi_2(i)=k\ and\ \pi_1(k+1)\neq \pi_1(k)+1\}.
  \end{split}
  \label{eqn:tri}
\end{equation}

Equation (\ref{eqn:tri}) means that either $i\in B_2$ or $\pi_2(i)\in B_1$ is true for all $i\in B_3$. Then the function $f: \left(B_3\setminus B_2\right)\to B_1$ specified by $f(i)\triangleq \pi_2(i)$ is an injection, which implies that
\begin{equation}
\lvert B_3\rvert=\lvert B_3 \setminus B_2 \rvert + \lvert B_3 \cap B_2 \rvert \leq \lvert B_1\rvert+\lvert B_2\rvert.
\label{eqn:settri}
\end{equation}
Apply (\ref{Bweight}) to (\ref{eqn:settri}), we obtain the following inequality:
  \[
    w_B\left(\pi_1\comp\pi_2\right)\leq w_B\left(\pi_1\right)+w_B\left(\pi_2\right).
  \]
\end{proof}

\section{Proof of \Cref{lemma4}}
\label{prooflemma4}
\addtocounter{lemma}{1}
\begin{lemma}
For all $N\in\Db{N}^{*}$, $t\leq N-\sqrt{N}-1$, $b_B(N,t)$ is bounded by the following inequality:
\begin{equation*}
\prod\limits_{k=1}^{t}(N-k)\leq b_B(N,t)\leq \prod\limits_{k=0}^{t}(N-k).\\
\end{equation*}

\end{lemma}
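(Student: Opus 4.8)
The plan is to compute the ball size $b_B(N,t)$ exactly as a sum of ``binomial $\times$ (number of minimal permutations)'' terms, and then to extract the two bounds by elementary estimates on $\lvert\Db{D}_m\rvert$ together with one induction.

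First I would reduce to a counting problem. By the left-invariance of $d_B$ (\Cref{rem2}) the ball size does not depend on the centre, and by \Cref{rem3} it equals the number of $\sigma\in\Db{S}_N$ with $w_B(\sigma)\leq t$; stratifying by the exact weight gives $b_B(N,t)=\sum_{d=0}^{t}n_d$ with $n_d:=\lvert\{\sigma\in\Db{S}_N:w_B(\sigma)=d\}\rvert$. The core step is then to show
\[
n_d=\binom{N-1}{d}\,\lvert\Db{D}_{d+1}\rvert .
\]
Indeed, a permutation of weight exactly $d$ has exactly $d$ breakpoints, which split it into $d+1$ maximal runs of consecutive integers; these runs partition $[N]$ into $d+1$ value-intervals $I_1<I_2<\cdots<I_{d+1}$, and $\sigma$ lists them in the order prescribed by some $\rho\in\Db{S}_{d+1}$. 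Maximality of the runs is exactly the condition that no two adjacent blocks in $\sigma$ be consecutive intervals, i.e.\ $\rho(j+1)\neq\rho(j)+1$ for all $j$, which is $\rho\in\Db{D}_{d+1}$; conversely every choice of the $d$ interval boundaries ($\binom{N-1}{d}$ of them) together with every $\rho\in\Db{D}_{d+1}$ produces a permutation of weight exactly $d$, and both data are recovered from $\sigma$ directly, so this is a bijection.

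Next I would bound $\lvert\Db{D}_{d+1}\rvert$ on both sides by $d!\leq\lvert\Db{D}_{d+1}\rvert\leq(d+1)!$. The upper bound is trivial since $\Db{D}_{d+1}\subseteq\Db{S}_{d+1}$; the lower bound comes from a union bound over the $d$ forbidden adjacency patterns ``$j$ immediately left of $j+1$'', each satisfied by $d!$ permutations, giving $\lvert\Db{D}_{d+1}\rvert\geq(d+1)!-d\cdot d!=d!$. For the lower bound in \Cref{lemma4} I then keep only the $d=t$ term:
\[
b_B(N,t)\ \geq\ \binom{N-1}{t}\lvert\Db{D}_{t+1}\rvert\ \geq\ \binom{N-1}{t}\,t!\ =\ \prod_{k=1}^{t}(N-k).
\]
For the upper bound I use $\binom{N-1}{d}\lvert\Db{D}_{d+1}\rvert\leq(d+1)\binom{N-1}{d}d!=(d+1)\prod_{j=1}^{d}(N-j)$, which reduces the claim to
\[
\sum_{d=0}^{t}(d+1)\prod_{j=1}^{d}(N-j)\ \leq\ \prod_{k=0}^{t}(N-k).
\]

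Finally I would prove this last scalar inequality by induction on $t$. Writing $Q_d=\prod_{j=1}^{d}(N-j)$, the target is $\sum_{d=0}^{t}(d+1)Q_d\leq NQ_t$; the step $t-1\to t$, after dividing through by $Q_t$ and using $Q_{t-1}=Q_t/(N-t)$, reduces to $\tfrac{N}{N-t}+t+1\leq N$, i.e.\ $N\leq(N-t-1)(N-t)$. This is the single point where the hypothesis $t\leq N-\sqrt N-1$ is needed: with $u:=N-t\geq\sqrt N+1$, monotonicity of $x(x-2)$ on $x\geq1$ gives $u(u-2)\geq(\sqrt N+1)(\sqrt N-1)=N-1$, hence $(N-t-1)(N-t)=u^2-u\geq(N-1)+u\geq N$; the base case $t=0$ is $1\leq N$. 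I expect the main obstacle to be making the weight-$d$/$\Db{D}_{d+1}$ bijection fully rigorous (in particular the equivalence between run-maximality and minimality of $\rho$, and that distinct data give distinct permutations), whereas the estimates on $\lvert\Db{D}_m\rvert$ and the closing induction are routine and the hypothesis $t\leq N-\sqrt N-1$ is exactly what makes the induction go through.
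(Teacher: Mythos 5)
Your proof is correct and follows essentially the same route as the paper: stratify the ball by exact block permutation weight, sandwich the number of weight-$d$ permutations between $\binom{N-1}{d}\,d!$ and $\binom{N-1}{d}\,(d+1)!$, keep (essentially) the top term for the lower bound, and close the upper bound with the same scalar inequality in which the hypothesis $t\leq N-\sqrt{N}-1$ enters. The only differences are local and harmless: the paper cites the exact enumeration $F(m)=\binom{N-1}{m}m!\sum_{k=0}^m(-1)^{m-k}\frac{k+1}{(m-k)!}$ and bounds the alternating factor between $1$ and $m+1$, whereas you rederive the count self-containedly via the cut-set/minimal-permutation bijection with a union bound for $|\Db{D}_{d+1}|\geq d!$, and you establish $\sum_{d=0}^{t}(d+1)\prod_{j=1}^{d}(N-j)\leq\prod_{k=0}^{t}(N-k)$ by induction rather than the paper's telescoping rearrangement — both steps are sound.
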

\begin{proof} Denote the number of permutations of length $N$ with block permutation weight $m$ by $F(m)$, then $b_B(N,t)=\sum\limits_{m=0}^{t}F(m)$.

We know that $F(0)=1$, and from \cite[equation (3)]{MYERS2002345}, for all $1\leq m\leq t$, 
\begin{equation}
F(m)=\binom{N-1}{m}m!\sum\limits_{k=0}^m(-1)^{m-k}\frac{(k+1)}{(m-k)!}.
\label{eqn:fm}
\end{equation}

Let $a_{k}=\frac{(k+1)}{(m-k)!}$, $0\leq k\leq m$, $1\leq m\leq t$. Then, $m+1=a_m>a_{m-1}=m>a_{m-2}>\cdots>a_0>0$. Therefore, the following inequalities hold true,
\begin{equation*}
\begin{split}
&a_{2k}-a_{2k-1}+\cdots+a_0=a_0+\sum\limits_{i=1}^{k}(a_{2i}-a_{2i-1})>0,\\
&a_{2k-1}-a_{2k-2}+\cdots-a_0=\sum\limits_{i=1}^{k}(a_{2i-1}-a_{2i-2})>0.
\end{split}
\end{equation*}
For $1\leq m\leq t$, define $A_m$ as follows,
\begin{align*}
 A_m=\sum\limits_{k=0}^m(-1)^{m-k}\frac{(k+1)}{(m-k)!}.
\end{align*}
Then, $A_1=1$ and for $2\leq m\leq t$,
\begin{equation}\label{eqn: AAAAAAA}
\begin{split}
A_m&=m+1-(a_{m-1}-a_{m-2}+\cdots+(-1)^{m-1} a_0)<m+1,\\
A_m&=m+1-m+(a_{m-2}-a_{m-3}+\cdots+(-1)^m a_0)>1.
\end{split}
\end{equation}
According to (\ref{eqn:fm}) and (\ref{eqn: AAAAAAA}), for all $1\leq m\leq t$,
\begin{equation*}
\binom{N-1}{m}m!\leq F(m)<\binom{N-1}{m}(m+1)!.
\end{equation*}

To derive the upper bound of the ballsize $b_B(N,t)$, we first find an upper bound of $F(m)$, $1\leq m\leq t$, as follows,
\begin{equation*}
F(m)\leq\binom{N-1}{m}(m+1)!=(m+1)\cdot \prod\limits_{k=1}^{m}(N-k).
\end{equation*}
For $t\leq N-\sqrt{N}-1$, it follows that $i\leq N-\sqrt{N}-1$ for all $1\leq i\leq t$. Therefore, for all $1\leq i\leq t$, 
\begin{equation*}
(N-i-1)^2\geq(N-(N-\sqrt{N}))^2=N>i+1.
\end{equation*}
Then,
\begin{equation*}
\begin{split}
&b_B(N,t)=\sum\limits_{i=0}^{t}F(i)\\
\leq &1+\sum\limits_{i=1}^{t}(i+1)\cdot \prod\limits_{k=1}^{i}(N-k)\\
=&1+\sum\limits_{i=1}^{t}\left(N-(N-i-1)\right)\cdot \prod\limits_{k=1}^{i}(N-k)\\
=&1+\sum\limits_{i=1}^{t}\left(\prod\limits_{k=0}^{i}(N-k)-\prod\limits_{k=1}^{i+1}(N-k)\right)\\
=&\prod\limits_{k=0}^{t}(N-k)-\sum\limits_{i=2}^{t}\left(\prod\limits_{k=1}^{i+1}(N-k)-\prod\limits_{k=0}^{i-1}(N-k)\right)\\
&-(N-1)(N-2)+1\\
=&\prod\limits_{k=0}^{t}(N-k)-\sum\limits_{i=2}^{t}\left(\prod\limits_{k=1}^{i-1}(N-k)\right)\\
&\left((N-i)(N-i-1)-N\right)-((N-1)(N-2)-1)\\
=&\prod\limits_{k=0}^{t}(N-k)-\sum\limits_{i=2}^{t}\left(\prod\limits_{k=1}^{i-1}(N-k)\right)\\
&\left((N-i-1)^2-i-1\right)-((N-1)(N-2)-1)\\
\leq &\prod\limits_{k=0}^{t}(N-k).\\
\end{split}
\end{equation*}

Similarly, for the lower bound, the following inequality holds true.
\begin{equation*}
b_B(N,t)=\sum\limits_{i=0}^{t}F(i)\geq 1+\sum\limits_{i=1}^{t}\prod\limits_{k=1}^{i}(N-k)>\prod\limits_{k=1}^{t}(N-k).
\end{equation*}

The lemma is proved.
\end{proof}

\section{Proof of \Cref{lemma5}}
\label{prooflemma5}
\begin{lemma}
For all $N\in\Db{N}^{*}$, $t\leq \min\{N-\sqrt{N}-1,\frac{N-1}{4}\}$, $b_{G}(N,t)$ is bounded as follows:
\begin{equation*}
\prod\limits_{k=1}^{t}(N-k)\leq b_{G}(N,t)\leq \prod\limits_{k=0}^{4t}(N-k).\\
\end{equation*}
\end{lemma}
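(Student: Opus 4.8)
The plan is to obtain both bounds by sandwiching the generalized Cayley ball between two block permutation balls, using the metric-embedding inequality of \Cref{lemma3} and then quoting the block permutation ball-size estimates of \Cref{lemma4}. By the left-invariance of $d_G$, the size $b_G(N,t)$ is independent of the center, so I may work with the ball $B_G(N,t,e)$ around the identity $e$; the same holds for $d_B$ by \Cref{rem2}.

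For the lower bound, the inequality $d_G(\pi_1,\pi_2)\le d_B(\pi_1,\pi_2)$ of \Cref{lemma3} shows that every $\pi$ with $d_B(e,\pi)=w_B(\pi)\le t$ also satisfies $d_G(e,\pi)\le t$, so $B_B(N,t,e)\subseteq B_G(N,t,e)$ and hence $b_G(N,t)\ge b_B(N,t)$. Since the hypothesis gives $t\le N-\sqrt{N}-1$, the lower half of \Cref{lemma4} applies and yields $b_G(N,t)\ge \prod_{k=1}^{t}(N-k)$. For the upper bound, the inequality $d_B(\pi_1,\pi_2)\le 4\,d_G(\pi_1,\pi_2)$ of \Cref{lemma3} shows that $d_G(e,\pi)\le t$ forces $d_B(e,\pi)\le 4t$, so $B_G(N,t,e)\subseteq B_B(N,4t,e)$ and hence $b_G(N,t)\le b_B(N,4t)$; the upper half of \Cref{lemma4}, applied with radius $4t$ in place of $t$, then gives $b_G(N,t)\le \prod_{k=0}^{4t}(N-k)$. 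The condition $t\le \frac{N-1}{4}$ in the hypothesis is exactly what makes $4t\le N-1$, so that this product has only positive factors.

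The step that needs care --- and essentially the only place where there is anything to verify beyond bookkeeping --- is the last invocation of \Cref{lemma4} at the inflated radius $4t$: one must confirm that the counting argument behind that lemma (bounding the number $F(m)$ of length-$N$ permutations of block permutation weight $m$ by $\binom{N-1}{m}(m+1)!$ and telescoping $\sum_{m=0}^{4t}F(m)$ against $\prod_{k=0}^{4t}(N-k)$) remains valid on the whole range $m\le 4t$; the cleanest sufficient condition for this is $4t\le N-\sqrt{N}-1$, so in the borderline regime one either tightens the hypothesis accordingly or supplies a separate crude bound on $b_G(N,t)$. The factor $4$ in the radius, inherited from the $d_B\le 4d_G$ side of \Cref{lemma3}, is precisely what turns $\prod_{k=0}^{t}(N-k)$ into $\prod_{k=0}^{4t}(N-k)$ and is the source of the gap between the lower and upper estimates.
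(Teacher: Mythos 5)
Your proof is correct and is essentially identical to the paper's own argument: both bounds come from sandwiching $B_G(N,t,e)$ between $B_B(N,t,e)$ and $B_B(N,4t,e)$ via \Cref{lemma3} and then invoking \Cref{lemma4}. Your side remark about needing $4t\leq N-\sqrt{N}-1$ to legitimately apply the upper half of \Cref{lemma4} at radius $4t$ is a fair catch --- the paper's proof quietly has the same issue, since the stated hypothesis only guarantees $4t\leq N-1$.
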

\begin{proof}
The upper bound is obtained from replacing $t$ by $4t$ in (\ref{eqn:lemmablk}) and utilizing (\ref{eqn:metricemb}). Note that $\pi\in B_G(N,t,e)$ implies that $d_G(\pi,e)\leq t$. Then from (\ref{eqn:metricemb}), $d_B(\pi,e)\leq 4d_G(\pi,e)\leq 4t$ holds true, which means that $\pi\in B_B(N,4t,e)$. Therefore, $B_G(N,t,e)\subseteq B_B(N,4t,e)$, which implies that $b_G(N,t)\leq b_B(N,4t)$. From (\ref{eqn:lemmablk}) we will get the upper bound. 

Similarly, (\ref{eqn:metricemb}) also implies that $B_B(N,t,e)\subseteq B_G(N,t,e)$, which means that $b_B(N,t)\leq b_G(N,t)$. From (\ref{eqn:lemmablk}) the lower bound follows immediately. The lemma is proved.
\end{proof}

\section{Proof of \Cref{lemma7}}
\label{prooflemma7}
\addtocounter{lemma}{1}
\begin{lemma}
For all $\pi_1$, $\pi_2\in\Db{S}_N$ such that $\pi_1\neq \pi_2$, if $\alpha^{(q,d)}(\pi_1)=\alpha^{(q,d)}(\pi_2)$, then,
\begin{equation*}
\lvert \nu(\pi_1)\Delta \nu(\pi_2) \rvert>2d.
\end{equation*}

\end{lemma}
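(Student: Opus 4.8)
The plan is to argue by contradiction. Suppose $\pi_1 \neq \pi_2$ but $\lvert \nu(\pi_1) \Delta \nu(\pi_2) \rvert \leq 2d$. Since $\pi_1 \neq \pi_2$ and $\nu$ is an injection, $\nu(\pi_1) \neq \nu(\pi_2)$; moreover, because $\lvert \nu(\pi_1) \rvert = \lvert \nu(\pi_2) \rvert = N-1$, the symmetric difference splits evenly: writing $D_1 = \nu(\pi_1) \setminus \nu(\pi_2)$ and $D_2 = \nu(\pi_2) \setminus \nu(\pi_1)$, we have $\lvert D_1 \rvert = \lvert D_2 \rvert = e$ for some $1 \leq e \leq d$. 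The goal is to show that the first $2d-1$ power sums of $\nu(\pi_1)$ and $\nu(\pi_2)$ cannot then agree, contradicting $\alpha^{(q,d)}(\pi_1) = \alpha^{(q,d)}(\pi_2)$.

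First I would observe that the common part $\nu(\pi_1) \cap \nu(\pi_2)$ contributes identically to every power sum, so $\alpha^{(q,d)}(\pi_1) = \alpha^{(q,d)}(\pi_2)$ is equivalent to
\[
\sum_{b \in D_1} b^j \equiv \sum_{b \in D_2} b^j \pmod q, \qquad j = 1, 2, \dots, 2d-1.
\]
Since $e \leq d$, in particular these congruences hold for $j = 1, \dots, 2e-1$ (indeed for $j$ up to $2d-1 \geq 2e-1$). Now I would invoke the standard fact that the power sums $p_1, \dots, p_{2e-1}$ of a multiset of size $e$ over a field determine the multiset: equivalently, via Newton's identities, they determine the first $e$ elementary symmetric polynomials $\sigma_1, \dots, \sigma_e$ (this is where we need $q$ prime, so that $\Db{F}_q$ is a field and the integers $1, 2, \dots, 2d-1 < q$ are invertible — note $q \geq N^2 - N > 2d-1$ for the relevant range of $t$), and hence the monic polynomial $\prod_{b \in D_1}(X - b) = \prod_{b \in D_2}(X - b)$. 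Since $D_1, D_2 \subset \Db{F}_q$ are honest sets (not multisets) of the same size, equality of these polynomials forces $D_1 = D_2$. But $D_1$ and $D_2$ are disjoint by construction, so $D_1 = D_2 = \emptyset$, contradicting $e \geq 1$.

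The main obstacle — really the only delicate point — is making the "power sums determine the set" step fully rigorous over $\Db{F}_q$ rather than over $\Db{R}$ or $\Db{C}$. The cleanest route is to set up Newton's identities explicitly for the multiset $D_1 \cup D_2$ viewed with signs, or more directly: let $h = \prod_{b \in D_1}(X-b) - \prod_{b \in D_2}(X-b)$. Both products are monic of degree $e$, so $\deg h \leq e-1$; expanding, the coefficient of $X^{e-k}$ is a fixed polynomial (with coefficients in $\Db{F}_q$, invoking invertibility of small integers) in $(\sigma_1^{(1)} - \sigma_1^{(2)}, \dots, \sigma_k^{(1)} - \sigma_k^{(2)})$, and an induction on $k$ using Newton's identities shows each $\sigma_k^{(1)} - \sigma_k^{(2)} = 0$ once $p_1, \dots, p_{2e-1}$ agree. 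Hence $h \equiv 0$, so the two monic degree-$e$ polynomials coincide, their root sets coincide, and $D_1 = D_2$. I would state the Newton's-identities input as a short self-contained sublemma (or cite it, since the paper later uses Newton's identities anyway) and keep the bookkeeping minimal, since the arithmetic is routine once the field structure is in place.
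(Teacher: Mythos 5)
Your proof is correct, but it takes a different route from the paper's. The paper also reduces to the two disjoint difference sets $D_1,D_2$ of common size $k\leq d$ whose first $2d-1$ power sums agree, but instead of recovering the sets it packages these equalities as a homogeneous linear system: the $\pm 1$ indicator vector of $D_1\cup D_2$ lies in the nullspace of the square $2k\times 2k$ Vandermonde matrix built from the elements of $D_1\cup D_2$, so the Vandermonde determinant $\prod_{i<j}(x_i-x_j)$ vanishes in $\Db{F}_q$, forcing two of the elements to coincide mod $q$ --- impossible since $D_1$ and $D_2$ are disjoint sets of distinct field elements. Your argument instead runs Newton's identities separately on $D_1$ and $D_2$ to recover their elementary symmetric polynomials, concludes $\prod_{b\in D_1}(X-b)=\prod_{b\in D_2}(X-b)$, and hence $D_1=D_2=\emptyset$, contradicting $e\geq 1$. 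The trade-off: the Vandermonde route involves no division at all, so there is nothing to check about invertibility of integers in $\Db{F}_q$, whereas your route needs $1,\dots,e$ (or up to $2e-1$ in your bookkeeping) to be nonzero mod $q$ --- which does hold, and more simply than you state: $e\leq N-1<N^2-N\leq q$ regardless of $t$, so no restriction on the range of $t$ is actually needed. On the other hand, your approach meshes naturally with the rest of the paper, which invokes Newton's identities in the decoding algorithm anyway, and it explains \emph{why} the syndrome pins down the characteristic set rather than only yielding a contradiction. Both handle the degenerate case $e=0$ via injectivity of $\nu$, so there is no gap.
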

\begin{proof}
Let $B_1=\nu(\pi_1)$, $B_2=\nu(\pi_2)$. We prove the statement by contradiction. If the lemma is not true, i.e., $\lvert B_1\Delta B_2 \rvert\leq 2d$, then $k=\lvert D_1\rvert=\lvert D_2\rvert\leq d$, where $D_1=B_1\setminus B_2$, $D_2=B_2\setminus B_1$. Suppose $D_1=\{x_1,x_2,\cdots,x_k\}$, $D_2=\{x_{k+1},x_{k+2},\cdots,x_{2k}\}$. Then, $\alpha^{(q,d)}(\pi_1)=\alpha^{(q,d)}(\pi_2)$ is equivalent to the following equations. 

\begin{eqnarray}
\left\{
\begin{array}{ccl}
  x_1+\cdots+x_k&=&x_{k+1}+\cdots+x_{2k},\\
  x_1^2+\cdots+x_k^2&=&x_{k+1}^2+\cdots+x_{2k}^2,\\
  &\vdots&\\
  x_1^{2d-1}+\cdots+x_k^{2d-1}&=&x_{k+1}^{2d-1}+\cdots+x_{2k}^{2d-1}.
\end{array}
\right.
\label{eqn:set}
\end{eqnarray}

From (\ref{eqn:set}), it follows that

\begin{equation*}
\begin{pmatrix}
1 & 1 & \cdots & 1\\
x_1 & x_2 &\cdots &x_{2k}\\
x_1^{2} & x_2^{2} &\cdots &x_{2k}^{2}\\
\vdots &\vdots &\ddots &\vdots\\
x_1^{2d-1} & x_2^{2d-1} &\cdots &x_{2k}^{2d-1}
\end{pmatrix}
\V{y}=\V{0},
\end{equation*}
where $\V{y}=\left(y_1,y_2,\cdots,y_{2k}\right)^T$, and

\begin{equation*}
y_i=
\begin{cases}
1,&1\leq i\leq k,\\
-1,&k<i\leq 2k.
\end{cases}
\end{equation*}

Given that $2k\leq 2d$, the above equation implies that  

\begin{equation}
\begin{pmatrix}
1 & 1 & \cdots & 1\\
x_1 & x_2 &\cdots &x_{2k}\\
x_1^{2} & x_2^{2} &\cdots &x_{2k}^{2}\\
\vdots &\vdots &\ddots &\vdots\\
x_1^{2k-1} & x_2^{2k-1} &\cdots &x_{2k}^{2k-1}
\end{pmatrix}
\V{y}=\V{0}.
\label{matrix}
\end{equation}

Denote the Vandermonde matrix in equation (\ref{matrix}) by $\V{U}$. Then $\V{y}$ is in the nullspace of $\V{U}$. Therefore, $\V{U}$ is singular, which implies that the determinant of $\V{U}$ is equal to $0$ in $\Db{F}_q$, i.e., 

\begin{equation}
0=\det{\V{U}}=\prod\limits_{1\leq i<j\leq 2k}\left(x_i-x_j\right).
\label{det}
\end{equation}

As $q$ is a divisor of $0$, $q$ should also be a divisor of the right hand side of equation (\ref{det}), which implies that $\exists$ $i\neq j\in\left[2k\right]$ such that $q|(x_i-x_j)$. Then $x_i=x_j$ on $\Db{F}_q$, and we must have $x_i\in D_1,x_j\in D_2$ or $x_i\in D_2,x_j\in D_1$, which implies that $x_i,x_j\in D_1\cap D_2$, a contradiction.
\end{proof}

\section{Proof of \Cref{lemma8}}
\label{prooflemma8}
\begin{lemma} Suppose $\V{A}\in \Db{F}_q^{(4t-1)\times(2t)}$, $\V{b}\in \Db{F}_q^{4t-1}$ are defined in (\ref{eqn:a}) and (\ref{eqn:b}), respectively. Consider the following equation defined on $\Db{F}_q$:
\begin{equation*}
\V{A}\V{c}=\V{b}.
\end{equation*}

For any vector $\V{c}\in\Db{F}_q^{2t}$, $\V{c}$ is a nonzero solution to (\ref{eqn:dec}) if and only if $(h_1(\V{c}),h_2(\V{c}))$ is a nonzero solution to (\ref{eqn:decode}).
\end{lemma}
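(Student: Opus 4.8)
The plan is to translate the linear-algebraic equation $\V{A}\V{c}=\V{b}$ directly into a statement about the first $4t-1$ coefficients of the polynomial $h_1(\V{c})\cdot f_1 - h_2(\V{c})\cdot f_2$, and then note that matching these coefficients is exactly the degree condition (\ref{eqn:decode}). First I would recall the setup: $f_1 = f(X;B) = \sum_{i=0}^{N-1} a_i X^{N-1-i}$ with $a_0=1$, and similarly $f_2 = \sum_{i=0}^{N-1} a'_i X^{N-1-i}$ with $a'_0 = 1$; and by (\ref{eqn:h}), $h_1(\V{c}) = X^t + \sum_{j=1}^{t} c_j X^{t-j}$, $h_2(\V{c}) = X^t + \sum_{j=1}^{t} c'_j X^{t-j}$, where $c'_j = -c_{t+j}$ per (\ref{eqn:c}). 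Since $\deg(h_1 f_1) = \deg(h_2 f_2) = N-1+t$ and the two leading coefficients (of $X^{N-1+t}$) are both $1\cdot 1 = 1$, they automatically cancel. So the difference $h_1 f_1 - h_2 f_2$ has degree at most $N-2+t$, and condition (\ref{eqn:decode}), $\deg(h_1 f_1 - h_2 f_2) < N - 3t$, is equivalent to the vanishing of the coefficients of $X^{N-2+t}, X^{N-3+t}, \dots, X^{N-3t}$ — that is, exactly $4t-1$ coefficients.

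The second step is the bookkeeping: compute, for $1 \le \ell \le 4t-1$, the coefficient of $X^{N-1+t-\ell}$ in $h_1 f_1$ as $\sum_{j=0}^{\min(t,\ell)} c_j\, a_{\ell - j}$ (with the convention $c_0 = 1$), and similarly for $h_2 f_2$ with $c'_j$ and $a'_{\ell-j}$. Taking the difference and using $a_0 = a'_0 = 1$, the "$j=0$'' terms contribute $a_\ell - a'_\ell$, so the $\ell$-th coefficient of $h_1 f_1 - h_2 f_2$ equals
\[
(a_\ell - a'_\ell) + \sum_{j=1}^{\min(t,\ell)} c_j a_{\ell-j} - \sum_{j=1}^{\min(t,\ell)} c'_j a'_{\ell-j}.
\]
Setting this to zero for all $\ell = 1,\dots,4t-1$ and moving $a_\ell - a'_\ell$ to the right-hand side (and substituting $c'_j = -c_{t+j}$), one recognizes the left-hand side as the $\ell$-th entry of $\V{A}\V{c}$ with $\V{A}$ as in (\ref{eqn:a}) — the first $t$ columns carrying the $a_i$'s in the staircase/Toeplitz pattern (note the upper-left block is lower-triangular with $1$'s on the diagonal, matching $a_0=1$), the last $t$ columns carrying the $a'_i$'s, and the right-hand side being $\V{b} = (a'_1 - a_1, \dots, a'_{4t-1} - a_{4t-1})^T$ as in (\ref{eqn:b}). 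This gives the equivalence "$\V{c}$ solves $\V{A}\V{c}=\V{b}$'' $\iff$ "$(h_1(\V{c}),h_2(\V{c}))$ solves (\ref{eqn:decode})''. Finally, I would observe that the maps $\V{c}\mapsto h_1(\V{c})$ and $\V{c}\mapsto h_2(\V{c})$ from $\Db{F}_q^{2t}$ to pairs of monic degree-$t$ polynomials are bijective, so $\V{c}=\V{0}$ corresponds to $(h_1,h_2) = (X^t, X^t)$; hence a \emph{nonzero} $\V{c}$ corresponds to a \emph{nonzero} solution in the sense intended for (\ref{eqn:decode}) (i.e., $(h_1,h_2) \ne (X^t,X^t)$), and conversely, which completes the "nonzero'' part of the claim.

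The main obstacle I anticipate is purely the index bookkeeping: getting the alignment between polynomial degrees and matrix rows exactly right, in particular verifying that the staircase structure in the top $t$ rows of $\V{A}$ (where $\min(t,\ell) = \ell < t$, so only $a_0,\dots,a_{\ell-1}$ and the new variable $c_\ell$ appear) matches (\ref{eqn:a}), and that for $\ell \ge t$ the full window $a_{\ell-t},\dots,a_{\ell-1}$ appears. There is nothing deep here — it is the standard identification of polynomial multiplication with a Toeplitz/convolution matrix — but it is where an error could creep in, so I would state the general coefficient formula once and then simply read off the two blocks. Everything else (cancellation of the top coefficient, the count of $4t-1$ constraints, the bijection between $\V{c}$ and $(h_1,h_2)$) is immediate.
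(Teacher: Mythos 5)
Your proposal is correct and follows essentially the same route as the paper's proof: expand the top $4t-1$ coefficients of $h_1(\V{c})\cdot f_1$ and $h_2(\V{c})\cdot f_2$ as convolutions of the $c_j$'s (resp.\ $c'_j$'s) with the $a_i$'s (resp.\ $a'_i$'s), observe that the leading terms cancel so that (\ref{eqn:decode}) is exactly the vanishing of those $4t-1$ coefficient differences, and identify the resulting linear system with $\V{A}\V{c}=\V{b}$. Your explicit remark on the bijection $\V{c}\leftrightarrow(h_1,h_2)$ for the ``nonzero'' clause is a minor addition the paper leaves implicit, but it does not change the argument.
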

\begin{proof}
Suppose 
\begin{equation}
\begin{split}
f_1&=X^{N-1}+a_1 X^{N-2}+\cdots+a_{4t-1} X^{N-4t}+g_1,\\
f_2&=X^{N-1}+a'_1 X^{N-2}+\cdots+a'_{4t-1} X^{N-4t}+g_2.\\
\end{split}
\label{eqn:f}
\end{equation}
Additionally, suppose
\begin{equation*}
\begin{split}
h_1\cdot f_1&=X^{N+t-1}+s_{N+t-2}X^{N+t-2}+\cdots+s_0,\\
h_2\cdot f_2&=X^{N+t-1}+s'_{N+t-2}X^{N+t-2}+\cdots+s'_{0}.
\end{split}
\end{equation*}
Then, from (\ref{eqn:f}) and (\ref{eqn:h}), it follows that

\begin{equation*}
\begin{cases}
& s_{N+t-2}=a_1+c_1,\\
& s_{N+t-3}=a_2+c_1a_1+c_2,\\
&\vdots\\
& s_{N-1}=a_{t}+c_1a_{t-1}+\cdots+c_{t},\\
&\vdots\\
& s_{N-3t}=a_{4t-1}+c_1a_{4t-2}+c_2a_{4t-3}+\cdots+c_{t}a_{3t-1}.
\end{cases}
\end{equation*}
Similarly, we also have

\begin{equation*}
\begin{cases}
& s'_{N+t-2}=a'_1+c'_1,\\
& s'_{N+t-3}=a'_2+c'_1a'_1+c'_2,\\
&\vdots\\
& s'_{N-1}=a'_{t}+c'_1a'_{t-1}+\cdots+c'_{t},\\
&\vdots\\
& s'_{N-3t}=a'_{4t-1}+c'_1a'_{4t-2}+c'_2a'_{4t-3}+\cdots+c'_{t}a'_{3t-1}.
\end{cases}
\end{equation*}
Then (\ref{eqn:decode}) is true iff $ s_i=s'_i$ for all $N-3t \leq i\leq N+t-2$, which is equivalent to the following equation:

\begin{equation}
\begin{split}
&
\begin{pmatrix}
1        & &&\\
a_1      & 1        & &\\
\vdots     &\vdots     & \ddots&\\
a_{t-1}      & a_{t-2}      &  \cdots &   1\\
\vdots   & \vdots   &  \ddots &  \vdots\\
a_{4t-2} & a_{4t-3} &  \cdots& a_{3t-1}
\end{pmatrix}
\begin{pmatrix}
c_1\\
c_2\\
\vdots\\
c_{t}
\end{pmatrix}
+
\begin{pmatrix}
a_1\\
a_2\\
\vdots\\
a_{4t-1}
\end{pmatrix}
=\\
&
\begin{pmatrix}
1        & &&\\
a'_1      & 1        & &\\
\vdots      & \vdots    & \ddots&\\
a'_{t-1}      & a'_{t-2}      & \cdots  &   1\\
\vdots   & \vdots   &  \ddots &  \vdots\\
a'_{4t-2} & a'_{4t-3} & \cdots & a'_{3t-1}
\end{pmatrix}
\begin{pmatrix}
c'_1\\
c'_2\\
\vdots\\
c'_{t}
\end{pmatrix}
+
\begin{pmatrix}
a'_1\\
a'_2\\
\vdots\\
a'_{4t-1}
\end{pmatrix}
.
\end{split}
\label{eqn:matrix}
\end{equation}
We note that (\ref{eqn:matrix}) is equivalent to (\ref{eqn:dec}). 
\end{proof}

\section{Proof of \Cref{lemma9}}
\label{prooflemma9}
\begin{lemma} Let $\pi_1,\pi_2\in \Db{S}_N$, $s_1,s_2\in\MB{N}$. For any two extensions $E(\pi_1,s_1)$ and $E(\pi_2,s_2)$, if $s_1$ is a jump point of $E(\pi_1,s_1)$ with respect to $E(\pi_2,s_2)$, then
\begin{equation*}
d_B(E(\pi_1,s_1),E(\pi_2,s_2))>d_B(\pi_1,\pi_2),
\end{equation*}
else
\begin{equation*}
d_B(E(\pi_1,s_1),E(\pi_2,s_2))=d_B(\pi_1,\pi_2).
\end{equation*}

\end{lemma}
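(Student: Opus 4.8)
The plan is to convert the whole statement into a bookkeeping comparison of two set differences via \Cref{lemma1}, which gives $d_B(\pi_1,\pi_2)=\lvert A(\pi_1)\setminus A(\pi_2)\rvert$ and likewise $d_B(E(\pi_1,s_1),E(\pi_2,s_2))=\lvert A(E(\pi_1,s_1))\setminus A(E(\pi_2,s_2))\rvert$. Write $\sigma_i=E(\pi_i,s_i)$ and $k_i=\pi_i^{-1}(s_i)$, and for a sequence $\tau$ let $\mathrm{nxt}_\tau(v)$ denote the element immediately after $v$ in $\tau$, with $\mathrm{nxt}_\tau(v)=\bot$ when $v$ is the last element. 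The first step is to record how a single extension acts: $\mathrm{nxt}_{\sigma_i}(s_i)=N+1$, $\mathrm{nxt}_{\sigma_i}(N+1)=\mathrm{nxt}_{\pi_i}(s_i)$, and $\mathrm{nxt}_{\sigma_i}(v)=\mathrm{nxt}_{\pi_i}(v)$ for every $v\in[N]\setminus\{s_i\}$. Since $\lvert A(\tau)\setminus A(\tau')\rvert=\lvert\{v:\mathrm{nxt}_\tau(v)\neq\bot,\ \mathrm{nxt}_\tau(v)\neq\mathrm{nxt}_{\tau'}(v)\}\rvert$, each distance becomes a sum of $0/1$ contributions indexed by the elements $v$, and the two can be compared $v$ by $v$.

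If $s_1=s_2$ (precisely the case where $s_1$ fails to be a jump point because $s_1=s_2$), then every $v\in[N]$ other than $s_1$ is unaffected, the contribution of $s_1$ drops to $0$ (its successor in both $\sigma_1$ and $\sigma_2$ is $N+1$), and the new symbol $N+1$ inherits exactly the contribution that $s_1$ had for the pair $\pi_1,\pi_2$; hence $d_B(\sigma_1,\sigma_2)=d_B(\pi_1,\pi_2)$, as required. If $s_1\neq s_2$, all change is confined to the three elements $N+1$, $s_1$, $s_2$: the element $s_1$ always contributes $1$ to the extended difference (its successor in $\sigma_1$ is $N+1$, which differs from $\mathrm{nxt}_{\sigma_2}(s_1)\in[N]\cup\{\bot\}$), and collecting the remaining terms yields, writing $a=\mathrm{nxt}_{\pi_1}(s_1)$, $b=\mathrm{nxt}_{\pi_2}(s_2)$, $a'=\mathrm{nxt}_{\pi_1}(s_2)$, $b'=\mathrm{nxt}_{\pi_2}(s_1)$ (so $a=\pi_1(k_1+1)$ when $k_1<N$ and $a=\bot$ when $k_1=N$, and similarly for the others, and $[\mathcal P]$ denoting $1$ if $\mathcal P$ holds and $0$ otherwise),
\[
d_B(\sigma_1,\sigma_2)-d_B(\pi_1,\pi_2)=1+[a\neq\bot]\bigl([a\neq b]-[a\neq b']\bigr)+[a'\neq\bot][a'=b].
\]

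It then remains to show this quantity equals $0$ when $s_1$ is not a jump point and is $\geq 1$ when it is. When $s_1$ is a jump point ($k_1=N$, i.e.\ $a=\bot$; or $k_2=N$, i.e.\ $b=\bot$; or $k_1,k_2<N$ with $a\neq b$), a short check of these sub-cases shows the middle bracket is $\geq 0$ — it is $0$ when $a=\bot$, and equals $1-[a\neq b']$ otherwise, since in those sub-cases $a\neq b$ — and the last bracket is $\geq 0$, so the difference is $\geq 1$. When $s_1$ is not a jump point and $s_1\neq s_2$, we are in the situation $a=b\neq\bot$, and the formula collapses to $1-[a\neq b']+[a'\neq\bot][a'=a]$; the decisive point here is that in a permutation an element has a unique predecessor, which forces $b'\neq a$ (otherwise $a$ would be the successor of both $s_1$ and $s_2$ in $\pi_2$) and $a'\neq a$ (likewise in $\pi_1$), so the difference is $1-1+0=0$. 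The main obstacle is exactly this last piece of casework — establishing the cancellation in the non-jump case — which rests entirely on the uniqueness-of-predecessor observation together with tracking the degenerate possibilities $a=\bot$ or $b=\bot$ (the $k_i=N$ cases) consistently throughout.
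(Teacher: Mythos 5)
Your proof is correct. It rests on the same foundation as the paper's (reduce everything to characteristic sets via \Cref{lemma1} and track how insertion of $N+1$ perturbs them), but the organization is genuinely different: the paper enumerates six explicit sub-cases (split on $s_1=s_2$, $k_1=N$, $k_2=N$, $j_1=j_2$) and in each one writes out $A(\sigma_i)$ as an explicit set identity in terms of $A(\pi_i)$, then compares the set differences directly. You instead encode $|A(\tau)\setminus A(\tau')|$ as a sum of per-element indicator contributions of the successor map, observe that only $s_1$, $s_2$ and $N+1$ can change their contribution, and collapse the entire comparison into the single identity
\[
d_B(\sigma_1,\sigma_2)-d_B(\pi_1,\pi_2)=1+[a\neq\bot]\bigl([a\neq b]-[a\neq b']\bigr)+[a'\neq\bot][a'=b]
\]
for $s_1\neq s_2$, which is then resolved by the uniqueness-of-predecessor observation. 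I verified the formula (the new contributions are $1$ from $s_1$, $[a'\neq\bot]$ from $s_2$, and $[a\neq\bot][a\neq b]$ from $N+1$, against old contributions $[a\neq\bot][a\neq b']$ and $[a'\neq\bot][a'\neq b]$) and the sub-case checks, including the degenerate $\bot$ cases and the $s_1=s_2$ case where $N+1$ inherits the contribution of $s_1$; all are sound. What your route buys is a uniform bookkeeping device that makes the non-jump cancellation a one-line algebraic consequence rather than three separate set-identity verifications; what the paper's route buys is that each sub-case exhibits the new breakpoint explicitly (e.g.\ $(s_1,N+1)\in A(\sigma_1)\setminus A(\sigma_2)$), which is reused later in the proofs of \Cref{lemma10} and \Cref{lemma11}.
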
 
\begin{proof} Let $\sigma_1=E(\pi_1,s_1)$ and $\sigma_2=E(\pi_2,s_2)$. Recall the notion of \emph{characteristic sets} in \Cref{defi3}. Suppose $A(\pi_1)$, $A(\pi_2)$, $A(\sigma_1)$, $A(\sigma_2)$ are the characteristic sets of $\pi_1$, $\pi_2$, $\sigma_1$, $\sigma_2$, respectively. According to \Cref{lemma1},
\begin{equation}
\begin{split}
d_B(\pi_1,\pi_2)&=\lvert A(\pi_1)\setminus A(\pi_2) \rvert,\\
d_B(\sigma_1,\sigma_2)&=\lvert A(\sigma_1)\setminus A(\sigma_2) \rvert.
\end{split}
\label{extensioncode}
\end{equation}

Let $k_1=\pi_1^{-1}(s_1)$, $k_2=\pi_2^{-1}(s_2)$, then $\pi_{1}(k_1)=s_1$ and $\pi_{2}(k_2)=s_2$. If $1\leq k_1,k_2<N$, let $\pi_{1}(k_1+1)=j_1$ and $\pi_{2}(k_2+1)=j_2$. 

Suppose first $s_1$ is a jump point, then consider the following cases.

\begin{enumerate}
\item $s_1\neq s_2$ and either $k_1=N$ or $k_2=N$.
\begin{enumerate}
\item $k_1=k_2=N$. In this case, $A(\sigma_1)=A(\pi_1)\cup \LB{(s_1,N+1)}$, $A(\sigma_2)=A(\pi_2)\cup \LB{(s_2,N+1)}$. Therefore, $A(\sigma_1)\setminus A(\sigma_2)=\SB{A(\pi_1)\setminus A(\pi_2)}\cup\LB{(s_1,N+1)}$. From (\ref{extensioncode}), $d_B(\sigma_1,\sigma_2)=d_B(\pi_1,\pi_2)+1$ follows.

\item $k_1=N\neq k_2$. In this case, $A(\sigma_1)=A(\pi_1)\cup \LB{(s_1,N+1)}$, $A(\sigma_2)=\SB{A(\pi_2)\setminus\LB{(s_2,j_2)}}\cup \LB{(s_2,N+1),(N+1,j_2)}$. Therefore, $A(\sigma_1)\setminus A(\sigma_2)=\SB{A(\pi_1)\setminus \SB{A(\pi_2)\setminus\LB{(s_2,j_2)}}}\cup\LB{(s_1,N+1)}$, which means $\SB{\SB{A(\pi_1)\setminus A(\pi_2)}\cup\LB{(s_1,N+1)}}\subseteq \SB{A(\sigma_1)\setminus A(\sigma_2)}$. From (\ref{extensioncode}), it follows that $d_B(\sigma_1,\sigma_2)\geq d_B(\pi_1,\pi_2)+1$.

\item $k_2=N\neq k_1$. Following the same logic in the previous case, $d_B(\sigma_1,\sigma_2)\geq d_B(\pi_1,\pi_2)+1$ holds true.
\end{enumerate}

\item $s_1\neq s_2$, $k_1,k_2\neq N$. Since $s_1$ is a jump point, $j_1\neq j_2$.

\begin{enumerate}
\item In this case, $A(\sigma_1)=\SB{A(\pi_1)\setminus \LB{(s_1,j_1)}}\cup \LB{(s_1,N+1),(N+1,j_1)}$, $A(\sigma_2)=\SB{A(\pi_2)\setminus \LB{(s_2,j_2)}}\cup \LB{(s_2,N+1),(N+1,j_2)}$. Therefore, the equation $(\SB{\SB{A(\pi_1)\setminus A(p_2)}\setminus\LB{s_1,j_1}}\cup\LB{(s_1,N+1),(N+1,j_1)})\subseteq \SB{A(\sigma_1)\setminus A(\sigma_2)}$ follows. From (\ref{extensioncode}), $d_B(\sigma_1,\sigma_2)\geq d_B(\pi_1,\pi_2)+1$.
\end{enumerate}
\end{enumerate}

If $s_1$ is not a jump point, then consider the following cases.

\begin{enumerate}
\item $s_1=s_2$ and either $k_1=N$ or $k_2=N$.
\begin{enumerate}
\item $k_1=k_2=N$. In this case, $A(\sigma_1)=A(\pi_1)\cup \LB{(s_1,N+1)}$, $A(\sigma_2)=A(\pi_2)\cup \LB{(s_1,N+1)}$. Therefore, $A(\sigma_1)\setminus A(\sigma_2)=A(\pi_1)\setminus A(\pi_2)$. From (\ref{extensioncode}), $d_B(\sigma_1,\sigma_2)=d_B(\pi_1,\pi_2)$ follows.

\item $k_1=N\neq k_2$. In this case, $A(\sigma_1)=A(\pi_1)\cup \LB{(s_1,N+1)}$, $A(\sigma_2)=\SB{A(\pi_2)\setminus\LB{(s_1,j_2)}}\cup \LB{(s_1,N+1),(N+1,j_2)}$. Therefore, $A(\sigma_1)\setminus A(\sigma_2)=A(\pi_1)\setminus \SB{A(\pi_2)\setminus\LB{(s_1,j_2)}}=A(\pi_1)\setminus A(\pi_2)$. From (\ref{extensioncode}), it follows that $d_B(\sigma_1,\sigma_2)= d_B(\pi_1,\pi_2)$.

\item $k_2=N\neq k_1$. Follow the same logic in the previous case, $d_B(\sigma_1,\sigma_2)= d_B(\pi_1,\pi_2)$ holds true.
\end{enumerate}

\item $k_1,k_2\neq N$. Since $s_1$ is not a jump point, either $s_1=s_2$ or $j_1=j_2$ must be satisfied.
\begin{enumerate}
\item $s_1=s_2$ and $j_1=j_2$. In this case, $A(\sigma_1)=\SB{A(\pi_1)\setminus \LB{(s_1,j_1)}}\cup \LB{(s_1,N+1),(N+1,j_1)}$, $A(\sigma_2)=\SB{A(\pi_2)\setminus \LB{(s_1,j_1)}}\cup \LB{(s_1,N+1),(N+1,j_1)}$. Therefore, $A(\sigma_1)\setminus A(\sigma_2)=A(\pi_1)\setminus A(\pi_2)$. From (\ref{extensioncode}), $d_B(\sigma_1,\sigma_2)=d_B(\pi_1,\pi_2)$ follows.

\item $s_1=s_2$ and $j_1\neq j_2$. In this case, $A(\sigma_1)=\SB{A(\pi_1)\setminus \LB{(s_1,j_1)}}\cup \LB{(s_1,N+1),(N+1,j_1)}$, $A(\sigma_2)=\SB{A(\pi_2)\setminus \LB{(s_1,j_2)}}\cup \LB{(s_1,N+1),(N+1,j_2)}$. Therefore, $A(\sigma_1)\setminus A(\sigma_2)=\SB{\SB{A(\pi_1)\setminus A(\pi_2)}\setminus\LB{(s_1,j_1)}}\cup\LB{(N+1,j_1)}$. From (\ref{extensioncode}), it follows that $d_B(\sigma_1,\sigma_2)=d_B(\pi_1,\pi_2)$.

\item $s_1\neq s_2$ and $j_1=j_2$. Follow the same logic as indicated in the previous case, $d_B(\sigma_1,\sigma_2)=d_B(\pi_1,\pi_2)$ holds true. 
\end{enumerate}
\end{enumerate}
The lemma is proved.
\end{proof}

\section{Proof of \Cref{lemma10}}
\label{prooflemma10}
\begin{lemma} Let $\pi_1,\pi_2\in \Db{S}_N$, $s_1,s_2\in\MB{N}$. For any extensions $E(\pi_1,S_1)$, $E(\pi_2,S_2)$ of $\pi_1$, $\pi_2$ on extension sequences $S_1$, $S_2$, respectively, it follows that
\begin{equation*}
d_B(E(\pi_1,S_1),E(\pi_2,S_2))\geq \lvert H(S_1,S_2)\rvert.
\end{equation*}

\end{lemma}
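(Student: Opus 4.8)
The plan is to pass to characteristic sets. Writing $\sigma_1=E(\pi_1,S_1)$ and $\sigma_2=E(\pi_2,S_2)$ (both in $\Db{S}_{N+M}$, where $S_1=(s_{1,1},\dots,s_{1,M})$ and $S_2=(s_{2,1},\dots,s_{2,M})$), \Cref{lemma1} reduces the claim to exhibiting $\lvert H(S_1,S_2)\rvert$ distinct pairs that lie in $A(\sigma_1)\setminus A(\sigma_2)$.

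The first step is to record a structural fact about extensions, essentially the mechanism already used in the proof of \Cref{lemma11}: for $\sigma=E(\pi,(s_1,\dots,s_M))$ and any $m\in\MB{M}$, reading $\sigma$ from the symbol $N+m$ towards the left one first passes through a (possibly empty) run of inserted symbols $N+r$ with $r>m$ and $s_r=s_m$, and then reaches the original symbol $s_m$. This follows from \Cref{defi5}: $N+m$ is placed immediately after $s_m$ at step $m$; since every extension point belongs to $\MB{N}$, a symbol can never be inserted directly after an already-inserted symbol, so the only symbols ever inserted between $s_m$ and $N+m$ are the $N+r$ with $r>m$ and $s_r=s_m$, each placed immediately after $s_m$ and hence to the left of $N+m$.

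Next, fix $s\in H(S_1,S_2)$ and pick an index $m=m(s)$ with $s_{1,m}=s$ and $s_{2,m}\neq s$. Applying the structural fact to $\sigma_1$ and to $\sigma_2$, the leftward readings starting from the common symbol $N+m$ give finite sequences $a_0=N+m,a_1,\dots,a_{k_1+1}=s$ in $\sigma_1$ and $b_0=N+m,b_1,\dots,b_{k_2+1}=s_{2,m}$ in $\sigma_2$, whose intermediate entries are inserted symbols. Since $s\neq s_{2,m}$ and an inserted symbol never equals a symbol of $\MB{N}$, these two sequences do not coincide, so there is a least index $i=i(m)\geq 1$ with $a_{i-1}=b_{i-1}$ but $a_i\neq b_i$; one checks that $i\leq\min\{k_1,k_2\}+1$, so all entries referenced lie within the listed portions of the two readings. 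Then $(a_i,a_{i-1})\in A(\sigma_1)$, because $a_i$ is the symbol immediately to the left of $a_{i-1}$ in $\sigma_1$; but in $\sigma_2$ the symbol $a_{i-1}=b_{i-1}$ is immediately preceded by $b_i\neq a_i$, and each symbol occurs exactly once, so $(a_i,a_{i-1})\notin A(\sigma_2)$.

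Finally I would verify that the pairs obtained for distinct elements of $H(S_1,S_2)$ are distinct. By the structural fact, both coordinates of the pair attached to $s$ lie in $\{s\}\cup\{N+r:r\in\MB{M},\ s_{1,r}=s\}$; for $s\neq s'$ in $H(S_1,S_2)$ the index sets $\{r:s_{1,r}=s\}$ and $\{r:s_{1,r}=s'\}$ are disjoint, and $s\neq s'$ are distinct elements of $\MB{N}$, none of which equals any $N+r$, so these two coordinate sets are disjoint and the two pairs differ. This produces $\lvert H(S_1,S_2)\rvert$ distinct elements of $A(\sigma_1)\setminus A(\sigma_2)$, and \Cref{lemma1} finishes the proof. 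I expect the only delicate points to be the precise formulation and justification of the structural fact, and the bookkeeping needed to keep the index $i(m)$ inside the listed portion of each leftward reading; the rest is routine.
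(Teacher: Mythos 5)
Your proof is correct, but it takes a genuinely different route from the paper. The paper proves this statement by splitting each element $i$ of the Hamming set according to whether its first discrepancy index $m(i)$ is a jump index: jump indices are counted by the jump set $F(\pi_1,\pi_2,S_1,S_2)$ and contribute through \Cref{lemma9} and \Cref{rem5} via $d_B(E(\pi_1,S_1),E(\pi_2,S_2))\geq d_B(\pi_1,\pi_2)+\lvert F\rvert$, while non-jump indices are shown to force a breakpoint $(i,j)\in A(\pi_1)\setminus A(\pi_2)$ already present between the original messages; summing the two contributions gives $\lvert H(S_1,S_2)\rvert$. You instead bypass the jump-point machinery of \Cref{defi6}, \Cref{defi7}, \Cref{lemma9} and \Cref{rem5} entirely and argue directly on the characteristic sets of the extended permutations, exhibiting for each $s\in H(S_1,S_2)$ a distinct pair in $A(\sigma_1)\setminus A(\sigma_2)$ via the leftward reading from the common inserted symbol $N+m$ — essentially transplanting the first-disagreement argument the paper uses to prove \Cref{lemma11}. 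Your structural fact about what can sit between $s_m$ and $N+m$ is sound (insertions only ever occur immediately after elements of $\MB{N}$), the bound $i(m)\leq\min\{k_1,k_2\}+1$ holds because the two readings terminate in distinct elements of $\MB{N}$ while intermediate entries exceed $N$, and the disjointness of the coordinate sets $\{s\}\cup\{N+r: s_{1,r}=s\}$ makes the produced pairs distinct, so \Cref{lemma1} closes the argument. What the two approaches buy: yours is more self-contained (it needs only \Cref{defi5} and \Cref{lemma1}) and unifies the proofs of \Cref{lemma10} and \Cref{lemma11} under one breakpoint-locating technique; the paper's route records the stronger intermediate inequality of \Cref{rem5}, which separately tracks the distance inherited from the messages and the extra distance created by jump indices — structure that the paper reuses (e.g., in the first case of the proof of \Cref{theo4}) and that your direct argument does not expose.
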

\begin{proof} For all $i\in H(S_1,S_2)$, let 

\begin{equation}
m(i)=\min \LB{m:\ s_{1,m}=i,\ s_{2,m}\neq i}.
\label{eqn: lemma10} 
\end{equation}

Suppose $J_{1,m(i)-1}=(s_{1,1},s_{1,2},\cdots,s_{1,m(i)-1})$, $J_{2,m(i)-1}=(s_{2,1},s_{2,2},\cdots,s_{2,m(i)-1})$. Let $\sigma^{m(i)-1}_1=E(\pi_1,J_{1,m(i)-1})$ and $\sigma^{m(i)-1}_2=E(\pi_2,J_{2,m(i)-1})$. Recall the definition of the \emph{jump set} $F(\pi_1,\pi_2,S_1,S_2)$ in \Cref{defi7}. Consider the following two cases:

\begin{enumerate}
\item If $m(i)\in F(\pi_1,\pi_2,S_1,S_2)$, then $s_{1,m(i)}=i$ is a jump point of $E(\sigma^{m(i)-1}_1,s_{1,m(i)})$ with respect to $E(\sigma^{m(i)-1}_2,s_{2,m(i)})$.

\item If $m(i)\notin F(\pi_1,\pi_2,I_1,I_2)$, then $i$ is not a jump point of $E(\sigma^{m(i)-1}_1,s_{1,m(i)})$ with respect to $E(\sigma^{m(i)-1}_2,s_{2,m(i)})$. Let $k'_1=(\sigma^{m(i)-1}_1)^{-1}(s_{1,m(i)})$, $k_1=\pi_1^{-1}(s_{1,m(i)})$, $k'_2=(\sigma^{m(i)-1}_2)^{-1}(s_{2,m(i)})$, $k_2=\pi_2^{-1}(s_{2,m(i)})$, then $\sigma^{m(i)-1}_{1}(k'_1)=\pi_{1}(k_1)=s_{1,m(i)}$ and $\sigma^{m(i)-1}_{2}(k'_2)=\pi_{2}(k_2)=s_{2,m(i)}$. Given that $s_{1,m(i)}$ is not a jump point and $s_{1,m(i)}\neq s_{2,m(i)}$, it follows from \Cref{defi6} that $k_1,k_2\neq N+m(i)-1$ and $\sigma^{m(i)-1}_{1}(k'_1+1)=\sigma^{m(i)-1}_{2}(k'_2+1)$ must be true. Let $j=\sigma^{m(i)-1}_{1}(k'_1+1)=\sigma^{m(i)-1}_{2}(k'_2+1)$. From (\ref{eqn: lemma10}), $\pi_{1}(k_1+1)=\pi_{2}(k_2+1)=j\in\MB{N}$ holds, otherwise $N<j<N+m(i)$ is inserted after $i$ in $\pi_1$ and is not inserted after $i$ in $\pi_2$, a contradiction. Then $(i,j)\in A(\pi_1)$, $(s_{2,m(i)},j)\in A(\pi_2)$ and $s_{2,m(i)}\neq i$. Therefore $(i,j)\in \SB{A(\pi_1)\setminus A(\pi_2)}$.
\end{enumerate}

Suppose $J=\LB{i|m(i)\notin F(\pi_1,\pi_2,S_1,S_2),i\in H(S_1,S_2)}$, then from the above discussion:
\begin{equation*}
\begin{split}
&\lvert F(\pi_1,\pi_2,S_1,S_2)\rvert\geq \lvert H(S_1,S_2)\setminus J \rvert,\\
&d_B(\pi_1,\pi_2)=\lvert A(\pi_1)\setminus A(\pi_2) \rvert\geq \lvert J\rvert.
\end{split}
\end{equation*} 
And from \Cref{lemma9}, it follows that

\begin{equation*}
\begin{split}
d_B(E(\pi_1,S_1),E(\pi_2,S_2))\geq &d_B(\pi_1,\pi_2)+\lvert F(\pi_1,\pi_2,S_1,S_2)\rvert\\
\geq &\lvert H(S_1,S_2)\setminus J \rvert+\lvert J\rvert\\
\geq &\lvert H(S_1,S_2)\rvert.
\end{split}
\end{equation*}
The lemma is proved.
\end{proof}

\section{Proof of \Cref{lemma14}}
\addtocounter{lemma}{1}
\label{prooflemma14}
\begin{lemma} For all $k,N\in\Db{N}^{*}$, $k>3$, $N>k^2$, consider an arbitrary subset $Y\subset \MB{k}$, where $\Nm{Y}=M<k$, $Y=\LB{i_1,i_2,\cdots,i_M}$, then 
\begin{equation*}
\Tx{LCM}\SB{N+i_1,N+i_2,\cdots,N+i_M}>N^{M-\frac{k}{2}}.
\end{equation*}
\end{lemma}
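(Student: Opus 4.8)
The plan is to realise $\Tx{LCM}\SB{N+i_1,\dots,N+i_M}$ as a quotient $\dfrac{\prod_{j=1}^{M}(N+i_j)}{D}$, where $D\triangleq\prod_{p\ \mathrm{prime}}p^{\,d_p}$ and $d_p\triangleq\SB{\sum_{j=1}^{M}v_p(N+i_j)}-\max_{1\le j\le M}v_p(N+i_j)\ge 0$, with $v_p(\cdot)$ the $p$-adic valuation; this identity holds prime by prime because $v_p$ of the product is $\sum_j v_p(N+i_j)$ while $v_p$ of the $\Tx{LCM}$ is $\max_j v_p(N+i_j)$. Since $\prod_{j=1}^{M}(N+i_j)\ge(N+1)^M>N^M$, it suffices to prove $D<N^{k/2}$; I will instead establish the stronger divisibility $D\mid k!$, after which the hypothesis $N>k^2$ yields $D\le k!<k^k=(k^2)^{k/2}<N^{k/2}$.

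The heart of the argument is the estimate $d_p\le v_p(k!)$ for every prime $p$. If $p\ge k$, then $p$ divides at most one of $N+i_1,\dots,N+i_M$, for if $p\mid N+i_a$ and $p\mid N+i_b$ with $a\ne b$ then $p\mid(N+i_a)-(N+i_b)=i_a-i_b$, which is impossible since $0<\Nm{i_a-i_b}\le k-1<p$; hence $\sum_j v_p(N+i_j)=\max_j v_p(N+i_j)$ and $d_p=0$. Now fix a prime $p<k$ and set $c_{p^m}\triangleq\Nm{\LB{j:\ p^m\mid N+i_j}}$. The numbers $N+i_1,\dots,N+i_M$ are distinct integers in the length-$k$ window $\LB{N+1,\dots,N+k}$, so $c_{p^m}$ is at most the number of multiples of $p^m$ in that window, which is at most $\Floor{k/p^m}+1$. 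Because $p^m\mid a$ implies $p^{m-1}\mid a$, the index set $\LB{m\ge1:\ c_{p^m}\ge1}$ is downward closed and hence equals $\LB{1,\dots,\max_j v_p(N+i_j)}$; therefore $\max_j v_p(N+i_j)=\Nm{\LB{m\ge1:\ c_{p^m}\ge1}}$ while $\sum_j v_p(N+i_j)=\sum_{m\ge1}c_{p^m}$. Subtracting these, $d_p=\sum_{m:\ c_{p^m}\ge1}(c_{p^m}-1)\le\sum_{m\ge1}\Floor{k/p^m}=v_p(k!)$ by Legendre's formula, which proves the claim and thus $D\mid k!$.

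Combining everything, $\Tx{LCM}\SB{N+i_1,\dots,N+i_M}=\dfrac{\prod_{j=1}^M(N+i_j)}{D}\ge\dfrac{\prod_{j=1}^M(N+i_j)}{k!}>\dfrac{N^M}{(k^2)^{k/2}}>\dfrac{N^M}{N^{k/2}}=N^{M-\frac{k}{2}}$, where the second inequality uses $\prod_j(N+i_j)>N^M$ together with $k!<k^k=(k^2)^{k/2}$ and the last uses $k^2<N$. I expect the one delicate point to be the valuation bookkeeping in the case $p<k$: it works precisely because subtracting the maximal $p$-valuation cancels the extra ``$+1$'' in the bound $c_{p^m}\le\Floor{k/p^m}+1$ on the count of multiples of $p^m$ in a length-$k$ interval, so that $d_p$ is governed by the genuine Legendre sum $\sum_m\Floor{k/p^m}$ rather than by $\sum_m(\Floor{k/p^m}+1)$. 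The remaining ingredients — the $\prod/D$ identity, the ``$p\ge k$ divides at most one factor'' observation, and the crude inequality $k!<k^k$ — are routine, and the hypotheses $k>3$ and $N>k^2$ enter only in the concluding comparison.
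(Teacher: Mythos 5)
Your proof is correct, and it takes a genuinely different route from the paper's. The paper imports from the literature the fact that $g_r(n)=\frac{n(n+1)\cdots(n+r)}{\mathrm{LCM}(n,\ldots,n+r)}$ divides $r!$, applies it to the \emph{full} window $\{N+1,\ldots,N+k\}$ to get $\mathrm{LCM}(N+1,\ldots,N+k)\geq\frac{(N+1)\cdots(N+k)}{(k-1)!}$, and then passes to the subset $Y$ by the crude bound $\mathrm{LCM}(N+1,\ldots,N+k)\leq\bigl[\prod_{s}(N+j_s)\bigr]\cdot\mathrm{LCM}(N+i_1,\ldots,N+i_M)$ over the complement $\{j_s\}=[k]\setminus Y$, arriving at $\mathrm{LCM}(N+i_1,\ldots,N+i_M)>\frac{N^M}{k!}$ before the same final comparison with $N^{k/2}$. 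You instead prove directly, by $p$-adic valuation bookkeeping, that $D=\frac{\prod_j(N+i_j)}{\mathrm{LCM}_j(N+i_j)}$ divides $k!$ for \emph{any} subset of the window --- your split into $p\geq k$ (at most one factor divisible) and $p<k$ (Legendre's formula absorbing the ``$+1$'' in the count of multiples of $p^m$) is sound, and the downward-closedness argument identifying $\max_j v_p(N+i_j)$ with $|\{m: c_{p^m}\geq 1\}|$ is exactly right. Your approach is self-contained (no citation needed), is in effect a proof of a subset-version generalization of the cited divisibility result, and your closing estimate $k!<k^k=(k^2)^{k/2}<N^{k/2}$ is cleaner than the paper's detour through Stirling-type bounds on $\log k!$. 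The paper's route buys brevity by leaning on the known result for full intervals; yours buys transparency and a slightly stronger structural statement ($D\mid k!$ rather than just $D\leq k!$), at the cost of the valuation case analysis.
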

\begin{proof} For all $r,n\in \Db{N}^{*}$, it follows from \cite[equation (13)]{LCM} that
\begin{equation}
\label{eqngr}
g_r(n)=\Tx{GCD}(r!,(n+r)g_{r-1}(n)),
\end{equation}
where for all $r\in \Db{N}$, $n\in\Db{N}^{*}$,
\begin{equation}
\label{eqngr2}
g_r(n)=\frac{n(n+1)\cdots(n+r)}{\Tx{LCM}(n,n+1,\cdots,n+r)}.
\end{equation}
From (\ref{eqngr}) and (\ref{eqngr2}), the following statement holds true,
\begin{equation}
g_r(n)|r!,\ \forall r,n\in\Db{N}^{*},
\end{equation}
which implies that
\begin{equation}
\label{eqnLCMupperbound}
\frac{n(n+1)\cdots(n+r)}{\Tx{LCM}(n,n+1,\cdots,n+r)}\leq r!.
\end{equation}

Let $n=N+1$, $r=k-1$ in (\ref{eqnLCMupperbound}). Then, for all $N,k\in\Db{N}^{*}$,
\begin{equation}
\begin{split}
&\Tx{LCM}\SB{N+1,N+2,\cdots,N+k}\\
\geq &\frac{(N+1)(N+2)\cdots(N+k)}{(k-1)!}.\\
\end{split}
\label{coro41}
\end{equation} 
Let $\MB{k}\setminus Y=\LB{j_1,j_2,\cdots,j_{k-M}}$. Notice that 
\begin{equation}
\begin{split}
&\Tx{LCM}\SB{N+1,N+2,\cdots,N+k}\\
=&\Tx{LCM}(\Tx{LCM}\SB{N+i_1,N+i_2,\cdots,N+i_M},\\
&\Tx{LCM}\SB{N+j_1,N+j_2,\cdots,N+j_{k-M}})\\
\leq &\MB{\Prd{s=1}{k-M}(N+j_{s})}\Tx{LCM}\SB{N+i_1,N+i_2,\cdots,N+i_M}.
\end{split}
\label{coro42}
\end{equation} 
From equation (\ref{coro41}) and (\ref{coro42}), 
\begin{equation*}
\begin{split}
&\Tx{LCM}\SB{N+i_1,N+i_2,\cdots,N+i_M}\\
\geq&\frac{\Tx{LCM}(N+1,N+2,\cdots,N+k)}{\Prd{s=1}{k-M}(N+j_{s})}\\
\geq &\frac{(N+1)(N+2)\cdots(N+k)}{(k-1)!\Prd{s=1}{k-M}(N+j_{s})}=\frac{\Prd{s=1}{M}(N+i_{s})}{(k-1)!}>\frac{N^{M}}{k!}.\\
\end{split}
\end{equation*} 

From \Cref{lemma6}, for all $k>3$ and $N>k^2$,
\begin{equation*}
\begin{split}
\frac{N^{M}}{k!}&>\frac{N^{M}}{2^{(k+\frac{1}{2})\log k -k+2}}>\frac{N^{M} 2^{k-2}}{k^{k+1}}\geq\frac{N^{M}}{k^k}>N^{M-\frac{k}{2}}.
\end{split}
\end{equation*} 

The lemma is proved.
\end{proof}

\section*{Acknowledgment}

Research supported in part by NSF CAREER grant No.1150212 and NSF CIF grant No.1527130.

\bibliography{ref}

\bibliographystyle{IEEEtran}

\section*{Biographies}

\textbf{Siyi Yang} (S'17) is a Ph.D. candidate in the Electrical and Computer Engineering department at the University of California, Los Angeles (UCLA). She received her B.S. degree in Electrical Engineering from the Tsinghua University, in 2016 and the M.S. degree in Electrical and Computer Engineering from the University of California, Los Angeles (UCLA) in 2018. Her research interests include design of error-correction codes for non-volatile memory and distributed storage.

\textbf{Clayton Schoeny} (S'09) is a data scientist working at Fair. He received his Ph.D. in the Electrical and Computer Engineering Department at the University of California, Los Angeles (UCLA) in 2018. He received his B.S. and M.S. degrees in Electrical Engineering from UCLA in 2012 and 2014, respectively. He is a recipient of the Henry Samueli Excellence in Teaching Award, the 2016 Qualcomm Innovation Fellowship, and the 2017 UCLA Dissertation Year Fellowship.

\textbf{Lara Dolecek} (S'05--M'10--SM'13) is a Full Professor with the Electrical and Computer Engineering Department and Mathematics Department (courtesy) at the University of California, Los Angeles (UCLA). She holds a B.S. (with honors), M.S., and Ph.D. degrees in Electrical Engineering and Computer Sciences, as well as an M.A. degree in Statistics, all from the University of California, Berkeley. She received the 2007 David J. Sakrison Memorial Prize for the most outstanding doctoral research in the Department of Electrical Engineering and Computer Sciences at UC Berkeley. Prior to joining UCLA, she was a postdoctoral researcher with the Laboratory for Information and Decision Systems at the Massachusetts Institute of Technology. She received IBM Faculty Award (2014), Northrop Grumman Excellence in Teaching Award (2013), Intel Early Career Faculty Award (2013), University of California Faculty Development Award (2013), Okawa Research Grant (2013), NSF CAREER Award (2012), and Hellman Fellowship Award (2011). With her research group and collaborators, she received numerous best paper awards. Her research interests span coding and information theory, graphical models, statistical methods, and algorithms, with applications to emerging systems for data storage and computing. She currently serves as an Associate Editor for IEEE Transactions on Communications. Prof. Dolecek has served as a consultant for a number of companies specializing in data communications and storage.

\end{document}